\documentclass[10pt]{article}
\usepackage{amsmath}
\usepackage{amsthm}
\usepackage{amsfonts}
\usepackage{amssymb}
\usepackage{mathrsfs}
\usepackage{enumitem}
\usepackage{mathtools}
\usepackage{algorithm}
\usepackage{algcompatible}
\usepackage{algpseudocode}
\usepackage{latexsym} 
\usepackage{graphicx}
\usepackage{caption}
\usepackage{float}
\usepackage[table]{xcolor}
\usepackage{diagbox}
\usepackage{changepage}
\usepackage{listings}
\usepackage{dsfont}
\usepackage{setspace}
\usepackage{subfigure}
\usepackage{multirow}
\usepackage{scalerel}
\usepackage{tikz}
\usepackage{indentfirst}
\usepackage{natbib}
\usepackage{threeparttable}
\usepackage{booktabs,makecell}
\usepackage{mathpazo}
\usepackage[T1]{fontenc}
\usepackage{times}

\usepackage{hyperref}
\hypersetup{
            colorlinks={true}, 
            linkcolor={blue}, 
            citecolor=blue,
            pdfencoding=auto, 
            psdextra
            }

\usepackage[capitalise,compress]{cleveref} %
\newlist{assumpenum}{enumerate}{1} 
\setlist[assumpenum]{label=(\arabic*), ref=\theassumption~(\arabic*)}
\crefname{assumpenumi}{Assumption}{Assumptions}

\newlist{lemmaenum}{enumerate}{1}  
\setlist[lemmaenum]{label=(\arabic*), ref=\thelemma~(\arabic*)}
\crefname{lemmaenumi}{Lemma}{lemma}

\newlist{propenum}{enumerate}{1}  
\setlist[propenum]{label=(\arabic*), ref=\theproposition~(\arabic*)}

\crefname{assumption}{Assumption}{Assumptions}

\usepackage[matrix,tips,graph,curve]{xy}
\usepackage{graphicx}
\newcommand{\indep}{\rotatebox[origin=c]{90}{$\models$}}



\makeatletter 
\@addtoreset{equation}{section}
\makeatother

\makeatletter
\newcommand*{\rom}[1]{\expandafter\@slowromancap\romannumeral #1@}

\newtheoremstyle{plain}     
  {3pt}{3pt}{}{}{\bfseries}{.}{ }{}       

\theoremstyle{plain}
\newtheorem{theorem}{Theorem}[section]
\newtheorem{lemma}{Lemma}[section]     
\newtheorem{corollary}{Corollary}[section]
\newtheorem{assumption}{Assumption}[section]
\newtheorem{proposition}{Proposition}[section]
\newtheorem{remark}{Remark}[section]

\theoremstyle{definition}
\newtheorem{definition}{Definition}[section]

\newcommand{\argmin}{\operatorname{argmin}}

\begin{document}
\title{Uniform Inference on Quantile Effects under Network Interference\footnote{
We thank Jianfei Cao, Jizhou Liu, Yuya Sasaki, and Yichong Zhang for their helpful comments and discussions.}}

\author{
Zequn Jin\thanks{School of Economics, Shanghai University of Finance and Economics. Emails: \href{mailto:jinzequn@mail.shufe.edu.cn}{\nolinkurl{jinzequn@mail.shufe.edu.cn}}; \href{mailto:zy.zhang@mail.shufe.edu.cn}{\nolinkurl{zy.zhang@mail.shufe.edu.cn}}.}
\and
Gaoqian Xu\thanks{Tippie College of Business, University of Iowa. Email: \href{mailto:gaoqian-xu@uiowa.edu}{\nolinkurl{gaoqian-xu@uiowa.edu}}.}
\and
Zixin Yang\thanks{School of Statistics and Data Science, Shanghai University of Finance and Economics. Email: \href{mailto:yangzixin@mail.shufe.edu.cn}{\nolinkurl{yangzixin@mail.shufe.edu.cn}}.}
\and 
Zhengyu Zhang\footnotemark[2]
}
\date{\today}

\maketitle

\vspace{0.5cm}

\begin{abstract}
This paper studies quantile treatment and spillover effects in network experiments. Average spillover effects reveal how treating a unit’s neighbors affects its outcome on average, but mask the heterogeneity of these effects across the outcome distribution. We define structural quantile effects that compare outcome quantiles between exposure states, characterizing how own treatment and exposure to treated neighbors affect different parts of the outcome distribution. Building on \citet{leung2020treatment}, we first establish the weak convergence of the estimated quantile-effect process under conditions requiring the stabilization of the degree distribution and the network-dependent covariance structure. Our main contribution is to propose uniform confidence bands (UCBs) based on Gaussian approximations conditional on the realized network, avoiding these stabilization requirements. The proposed method is evaluated through extensive
simulation studies and an empirical application to a randomized savings-account experiment in Nepal \citep{prina2015banking}.
\end{abstract}

\vspace{12pt}

\textit{JEL codes}: C14, C31, C54

\vspace{3pt}

\textbf{Key Words}: Network data,  quantile effects, Gaussian approximation, uniform confidence bands.

\newpage
\section{Introduction}

Network experiments have become increasingly prevalent in economics and the social sciences, with applications to information diffusion, technology adoption, microfinance, school-based social interventions, and input-subsidy programs \citep{BandieraRasul2006, BakshyEtAl2012, BanerjeeEtAl2013, cai2015social, PaluckEtAl2016, carter2021subsidies}. In such experiments, an individual’s outcome may depend not only on their own treatment assignment but also on the assignments of other units in the network, thereby violating the stable unit treatment value assumption (SUTVA).  A growing literature has therefore developed methods for causal inference under network interference, typically by summarizing the high-dimensional assignment vector through low-dimensional exposure mappings such as own treatment status and the fraction of treated neighbors \citep{manski2013identification,aronow2017estimating,leung2020treatment,leung2022causal}.

To our knowledge, most existing work focuses on average treatment and spillover effects. Yet average effects may mask economically important heterogeneity across the outcome distribution. For example, \citet{carter2021subsidies} find positive network spillover effects of agricultural input subsidies on maize yields in Mozambique, where maize is a major source of household income. However, their average-effect analysis does not reveal how these gains are distributed. In particular, it does not reveal whether farmers at the lower tail of the yield distribution benefit from technology diffusion through social networks. Motivated by this question, this paper studies quantile treatment and spillover effects under network interference and develops uniform inference for the entire quantile-effect curve.

 Building on the structural quantile model of \cite{matzkin2003nonparametric} and \cite{chernozhukov2005iv}, we assume that an individual’s potential outcome depends on their own treatment, the number of treated neighbors, the number of friends (node degree), and unobserved characteristics. Our parameters of interest are comparisons of quantile response functions between exposure states at each quantile level. These comparisons capture how variations in own treatment or friend exposure change the entire distribution of economic outcomes. In an experimental setting, the quantile response functions can be identified by the conditional quantile of observed outcomes among units with the same own treatment status, number of treated friends, and node degree, following \citet{leung2020treatment}.

The identification leads to a simple estimator based on conditional quantile regression (QR). The inferential targets are quantile effects at different quantile levels, so valid inference  requires uniform control over the quantile effect curve. This is a nontrivial extension of large-network inference for average treatment and spillover effects in  \cite{leung2020treatment}. Moreover, relative to the existing literature on quantile treatment effect (QTE), our inference problem is nonstandard for two reasons. First, the conditioning variables in our QR are network-dependent. For example, the number of treated friends is tied to node degree and mechanically depends on neighboring treatment assignments. Second, unobservable characteristics may exhibit network dependence, since linked individuals or individuals with common neighbors may share common information and local shocks. Therefore, standard weighted or multiplier bootstrap procedures developed for QR and QTE inference \citep{belloni2019conditional,qu2019uniform,chiang2019robust,zhang2020quantile} do not directly apply here without modification.

A natural route for asymptotic inference is to embed the observed network in a growing sequence of graphs and derive a limit distribution for the estimator along that sequence \citep[e.g.,][]{leung2020treatment,leung2022causal,kojevnikov2021limit}.  To obtain such a limit, this route requires the graph sequence and the induced dependence to stabilize asymptotically.  For example, \citet{leung2020treatment} assumes convergence of the empirical node-degree distribution and  a probability limit for the network-dependent asymptotic variance.\footnote{Assumption 2 in \citet{leung2020treatment} imposes convergence of the degree distribution, and Assumption 6 assumes the existence of a probability limit for the network-dependent asymptotic variance.} He estimates the average treatment and spillover effects in a single large network, establishes their asymptotic normality, and conducts inference using consistent estimators of the asymptotic variance that account for network dependence. As a benchmark, \cref{subsec:nonparametric-QTE} follows this route and extends it to quantile effects under network interference, establishing the weak convergence of quantile effect estimators. However, this approach requires convergence of the degree distribution and the network-dependent covariance structure, both of which are difficult to verify from a single observed network.

Our main contribution is the construction of uniform confidence bands (UCBs) for quantile effects under network interference, using Gaussian approximations conditional on the realized network. A key advantage of this approach is that the procedure does not require the QR process to converge to a fixed Gaussian process, nor does it require the studentized supremum statistic to have a fixed limiting distribution. To compute the critical values for the UCBs, we  adopt the Gaussian approximation framework of \citet{chernozhukov2013gaussian,chernozhukov2014anti,chernozhukov2014gaussian}  to approximate the distribution of the supremum of the $t$-statistic process under network dependence. Through the Bahadur linearization of the QR estimator, this problem reduces to approximating the distribution of the maximum of a high-dimensional vector formed by sums of network-dependent random variables \citep{fang2021high,chang2024central}. We then simulate the corresponding Gaussian maximum using  the estimated network-conditional covariance structure and use its empirical quantiles as critical values.
We assess the finite-sample performance of the proposed UCBs through Monte Carlo simulations. The results show that the bands achieve coverage probabilities close to the nominal levels, with coverage improving as the sample size increases.


\subsection*{Related Literature}

First, this paper contributes to the literature on causal inference under interference. One common framework is partial interference, which assumes that the sample consists of many independent clusters  and rules out interference across clusters; see \cite{hudgens2008toward,liu2016inverse,liu2019doubly} and \cite{qu2026semiparametric}. Within this framework, \cite{cheng2026nonparametric} study network quantile causal effects and develop efficient inference. However, any empirical settingsinvolve a single connected network rather than independent clusters. In such settings, researchers often impose a low-dimensional exposure mapping, under which potential outcomes depend on the high-dimensional treatment assignment vector only through exposure states such as own treatment status and the fraction of treated neighbors; see \cite{manski2013identification,aronow2017estimating,leung2020treatment,li2022random} and \cite{vazquez2023identification}. Existing work in this setting has primarily focused on average treatment and spillover effects. To  our knowledge, this is the first study of quantile effects in a single large network.

Among the existing literature, our paper is closely related to \cite{leung2020treatment}. This paper starts from an average structural function (ASF), which describes the mean potential outcome under a counterfactual treatment status and  number of treated neighbors. The ASF is identified using network experiments, and average exposure effects are defined as differences between ASFs evaluated at distinct exposure states. Our paper follows this structural perspective, which differs from design-based approaches to exposure effects \citep{leung2022causal,gao2025causal}.

Second, our paper contributes and builds on the large literature on  quantile treatment effects (QTE) and quantile regression (QR), where QR serves as the main  estimation and inference tool for QTE \citep{koenker2002inference,angrist2006quantile,belloni2019conditional}. QTEs have been studied extensively under the SUTVA framework in both experimental and observational settings \citep{chernozhukov2005iv,chernozhukov2006instrumental,chernozhukov2008instrumental,firpo2007efficient}. Recent work develops uniform inference for QTEs under a variety of (quasi-)experimental designs without interference \citep{chiang2019robust,qu2019uniform,zhang2020quantile,hou2026nonparametric}. We extend this line of work to network experiments.

Third, our paper builds on a seminal line of work on Gaussian approximation \citep{chernozhukov2013gaussian,chernozhukov2014anti,chernozhukov2014gaussian,chernozhukov2017central}. This literature provides Gaussian approximations to the laws of maxima of sums of  high-dimensional vectors, without the existence of a fixed limit distribution. While \Cref{subsec:nonparametric-QTE} establishes weak convergence of the quantile-effect estimators by embedding the observed network in a growing network sequence, that route requires the covariance kernel, which depends on the underlying graph sequence, to converge to a fixed limit uniformly over all quantile levels. This motivates the  uniform inference based on Gaussian approximation in \Cref{sec:dependent bootstrap}, which avoids characterizing a limit covariance kernel or an explicit limit distribution for the supremum statistic.  To accommodate network dependence, we draw on recent Gaussian approximation results for dependent high-dimensional data \citep{zhang2017gaussian,zhang2018gaussian,fang2021high,chang2024central}.

\paragraph{Organization of the paper.}
The remainder of the paper is organized as follows. \Cref{sec:framework} introduces the econometric model, defines the structural quantile effects, and establishes their identification in randomized network experiments. \Cref{subsec:nonparametric-QTE} proposes the QR estimator and establishes the weak convergence of the estimated quantile-effect process under conventional large-network asymptotics, which require the stabilization of the degree distribution and the network-dependent covariance structure. \Cref{sec:dependent bootstrap} develops UCBs based on Gaussian approximations conditional on the realized network and establishes their asymptotic validity without requiring either of these stabilization conditions. Finally, \Cref{section: Numerical_Examples} reports simulation results and presents an empirical application to the randomized savings-account experiment in Nepal \citep{prina2015banking}.

\paragraph{Notation.} 
For any $n\in\mathbb N^+$, let $[n]\equiv\{1,\ldots,n\}$.
We use
$\mathds{1}\{\cdot\}$ to denote the indicator function.
For $a,b\in\mathbb R$, let
$a\vee b=\max\{a,b\}$ and $a\wedge b=\min\{a,b\}$.
We use $\operatorname{Bern}(p)$ denote the Bernoulli distribution with $p \in (0,1)$. 
For a matrix $\boldsymbol{Q}$, define
$\|\boldsymbol{Q}\|_{\max}=\max_{j,k}|\boldsymbol{Q}_{jk}|$, and let
$\|\boldsymbol{Q}\|_{\mathrm{op}}$ denote its operator norm.
We write $\boldsymbol{Q}\succeq0$ when $\boldsymbol{Q}$ is symmetric and positive semidefinite.  For a network with adjacency matrix $\boldsymbol A_n \in \mathbb{R}^{n \times n}$, let
$\ell_{\boldsymbol A_n}(i,j)$ denote the shortest path 
between $i$ and $j$, with
$\ell_{\boldsymbol A_n}(i,i)=0$.
We write
$\mathbb P_{\boldsymbol A_n}(\cdot)
=\mathbb P(\cdot|\boldsymbol A_n)$,
and use $\mathbb E_{\boldsymbol A_n}$, $\operatorname{Var}_{\boldsymbol A_n}$ and
$\operatorname{Cov}_{\boldsymbol A_n}$ analogously. The abbreviations i.i.d.\ and PSD stand for ``independent and
identically distributed'' and ``positive semidefinite,'' respectively.

\section{Econometric Model}\label{sec:framework}

We consider a randomized controlled trial (RCT) with $n$ units. The population of $n$ units is embedded in a network, characterized by a symmetric adjacency matrix $ \boldsymbol{A}_n \in \{0,1\}^{n \times n}$, where $A_{ij}=1$ indicates a link between units $i$ and $j$. For each unit $i \in [n]$, let $Y_i \in \mathbb{R}$ denote the observed post-treatment outcome, and $D_i \in \{0,1\}$ denote the treatment assignment, where $D_i \sim \mathrm{Bern}(p)$ for some $p \in (0,1)$. The vector of treatment assignments is denoted by $\boldsymbol{D}_n \equiv (D_i)_{i=1}^n \in \{0,1\}^n$. Under the Neyman-Rubin causal model \citep{imbens2015causal}, the potential outcome for unit $i$, denoted by $Y_i(\boldsymbol{d})$, depends on the entire treatment assignment vector $\boldsymbol{d} = (d_i)_{i=1}^n \in \{0,1\}^n$. 

\subsection{Main Specification}\label{section: structural_quantile_3odel}

To restrict the dimensionality of arbitrary interference, we assume local interference via exposure mappings \citep{manski2013identification, aronow2017estimating}. Specifically, we assume that a unit's outcome is determined  by its own treatment status and those of its neighbors. Let $|N_i| = \sum_{j=1}^n A_{ij}$ denote the degree of unit $i$, and $T_i(\boldsymbol{d}, \boldsymbol{A}_n) = \sum_{j=1}^n A_{ij} d_j$ represent the number of its treated neighbors under the treatment assignment  $\boldsymbol{d}$. Moreover, let $W_i(\boldsymbol{d}, \boldsymbol{A}_n) = \left( d_i, T_i(\boldsymbol{d}, \boldsymbol{A}_n), |N_i| \right)$. When the context is clear, we may omit the explicit dependence on $\boldsymbol{A}_n$ and simply write $T_i(\boldsymbol{d})$ and $W_i(\boldsymbol{d})$.

\begin{assumption}\label{assumption:Interference}
For each unit $i \in [n]$, the potential outcome $Y_i(\boldsymbol{d})$ is determined by
\begin{equation}\label{eq: Interference}
Y_i(\boldsymbol{d}) = q \left( W_i(\boldsymbol{d},\boldsymbol{A}_n), U_i \right) = q \left(d_i, T_i(\boldsymbol{d}), |N_i|, U_i \right),
\end{equation}
where $q\left(\cdot\right)$ is an unknown function, and $U_i \in (0,1)$ captures unobserved heterogeneity.
\end{assumption}

\cref{assumption:Interference} imposes a structural restriction.  We implicitly rule out treatment-induced changes in the network. For the fixed pre-treatment network, the potential outcome 
\(Y_i(\boldsymbol d)\) depends on \(\boldsymbol d\)
only through $d_i$ and $T_i(\boldsymbol{d})$. 
The degree \(|N_i|\) is not itself a causal treatment. It is instead viewed as a 
conditioning/control variable, as the intervention does not change the interference network. By consistency, the observed outcome is $Y_i = q(D_i, T_i, |N_i|, U_i)$, where $T_i = \sum_{j=1}^n {A}_{ij} D_j$ is the number of treated neighbors. Following \cite{manski1993identification,manski2013identification} and  \cite{leung2020treatment}, we refer to  $(D_i, T_i)$ as the effective treatment. To interpret $q(\cdot)$ as a
structural quantile function (SQF), we  impose conditions under which \(U_i\) can be normalized as a latent rank in the degree-conditional outcome distribution.
{assumption:Interference}{assumption:normal}

\begin{assumption}\label{assumption:normal}
The  function $q: \{0,1\} \times \mathbb{N}^2 \times (0,1) \rightarrow \mathbb{R}$ and the unobservables \(\{U_i\}_{i=1}^n\) satisfy the following conditions:
\begin{assumpenum}
    \item\label{assumption:nor} Conditional on $|N_i|$, the unobservable $U_{i}$ are uniformly distributed, i.e., 
    $ U_i\mid |N_i| \sim \mathrm{Unif}(0,1)$.
    \item\label{assumption:mono} For any $w \equiv (d, t, l) \in \{0,1\} \times \mathbb{N}^2$ with $l \geq t$, the function
\(\tau \mapsto q(w,\tau)\) is  continuous and strictly increasing.
\end{assumpenum}
\end{assumption}

\cref{assumption:nor} is a natural normalization in view of the Skorohod
representation \citep{chernozhukov2005iv,chernozhukov2006instrumental}.
In randomized network experiments, the treatment assignment changes a unit's own treatment
status and the number of treated neighbors, but not the interference network.
Thus, \(U_i\) can be viewed as a latent rank determined by unobserved
factors that do not vary across counterfactual treatment assignments. Meanwhile, these underlying unobserved factors may be related to network
position, which motivates normalizing the latent rank conditional on degree
\(|N_i|\). \cref{assumption:mono} imposes monotonicity and
continuity of the function $q(w, \tau)$ in $\tau$.  Hence, \cref{assumption:normal} implies
that, conditional on degree \(l\), the potential outcome distribution under the effective treatment $(d, t) $ is characterized by \(q(w,\tau)\), where $w = (d,t,l)$. In particular, \(q(w,\tau)\)
is the \(\tau\)-quantile of the potential outcome for units with degree \(l\)
when the effective treatment is $(d,t)$.

\begin{remark}
Assumptions \ref{assumption:Interference} and \ref{assumption:normal} essentially assume a correctly specified exposure mapping and rank invariance \citep{doksum1974empirical,heckman1997making}. The latter can be relaxed to  rank similarity  without affecting the results of this paper \citep{chernozhukov2005iv,torgovitsky2015identification}.\footnote{Under rank similarity, the analogous versions of Assumptions \ref{assumption:RCT}, \ref{assumption:CE}, and \ref{assumption:local dependency} can be imposed on the family of exposure-specific ranks \(\left\{U_i(w): w\in\mathcal{W}\right\}\), where \(\mathcal{W}\) denotes the set of feasible exposure states.}  To formalize this relaxation, retain the assumption that   $Y_i(\boldsymbol d)=Y_i(\boldsymbol d')$ whenever $W_i(\boldsymbol d)=W_i(\boldsymbol d')$ for all $i$ and $\boldsymbol d,\boldsymbol d'\in\{0,1\}^n$. Rank similarity allows the latent rank  \(U_i(w)\) to vary across exposure states while preserving its marginal distribution. Accordingly, conditional on $W_i(\boldsymbol{d}) = w$, the potential outcome $Y_i(\boldsymbol{d})$  admits the following representation:
\[
Y_i(\boldsymbol d) =q\left(w,U_i(w)\right), \quad \text{where} \quad U_i(w)\sim\mathrm{Unif}\left(0,1\right).
\]

\end{remark}

\subsection{Parameters of Interest}

Building on the structural quantile model under network interference introduced in \cref{section: structural_quantile_3odel}, our main object of interest is the structural quantile effect. Specifically, conditional on the degree, we aim to measure how the quantiles of potential outcomes respond to counterfactual manipulations of the effective treatment.

For two exposure states $w=(d,t,l)$ and $w^\prime=(d^\prime,t^\prime,l)$ with $t,t^\prime \leq l$, we define the structural quantile effect at level $\tau \in (0,1)$ as:
\begin{equation}\label{eq: def_SQE}
 q(w,\tau) - q(w^\prime, \tau),
\end{equation}
where the SQF $q(\cdot)$ is defined in \cref{assumption:Interference}. Throughout this paper, the quantile effects compare only exposure states with the same degree, which avoids conflating exposure effects with degree-related heterogeneity.

Interest often centers on quantile treatment and spillover effects. First, the quantile direct
effect measures the effect of changing a unit's own treatment status while
holding fixed the number of treated neighbors and the degree: 
\[
q \left(1, t, l, \tau \right) - q \left(0, t, l, \tau \right).
\]
Second, the quantile spillover effect measures the effect of changing the
number of treated neighbors while holding fixed the unit's own treatment
status and degree:
\[
q\left(d, t, l, \tau\right) - q\left(d, t^\prime, l, \tau \right) .
\]
By varying \(\tau\), these estimands describe how direct treatment and
spillover effects differ across the outcome distribution.

\subsection{Identification}\label{subsection:identification}

When treatments are completely randomized,
structural quantile effects can be easily  
identified from network experiment data. Since identifying these effects is a direct implication of identifying the SQF, we focus on the latter in this subsection. We first require  that the treatment vector $\boldsymbol{D}_n$ is independent of both the interference network $\boldsymbol{A}_{n}$ and the latent ranks $\boldsymbol{U}_{n} \equiv (U_i)_{i=1}^n$.

\begin{assumption}\label{assumption:RCT} 
    The treatment assignments $D_i \overset{\mathrm{i.i.d.}}{\sim} \mathrm{Bern}(p)$  for a fixed $p \in \left(0,1\right)$. Moreover, $\boldsymbol{D}_n$ is independent of both  $\boldsymbol{A}_{n}$ and $\boldsymbol{U}_{n}$, that is, $\boldsymbol{D}_n \ \indep \left(\boldsymbol{A}_{n}, \boldsymbol{U}_{n}\right)$.
\end{assumption}

\cref{assumption:RCT} formalizes the Bernoulli randomized network
experiment \citep{cai2015social,prina2015banking,carter2021subsidies}.
The network is treated as a pre-treatment object, and the treatment assignment
vector is assumed to be independent of both the interference network and the
latent individual ranks. As a result, given the pre-treatment network,
variation in a unit's own treatment status and in the number of treated neighbors
is generated solely by the randomized assignment. In particular, the probability
that a unit realizes a specific effective treatment is known from the experimental
design.

\cref{assumption:RCT} rules out selection into treatment, but still allows
latent ranks to be correlated with the pre-treatment network. To allow for such
network-related unobserved heterogeneity while preserving identification, we
impose the following conditional exogeneity condition.

\begin{assumption}\label{assumption:CE}
    For any unit $i\in [n]$, $U_{i} \ \indep \ \boldsymbol{A}_{n}\mid \left|N_i\right|$.
\end{assumption}

\cref{assumption:CE} allows latent ranks to be related to network position
through degree, but requires that, conditional on degree, the  network carries no further information about \(U_i\). For example, if
\(U_i\) captures unobserved sociability or communication skill, more sociable
individuals may have more friends, so \(U_i\) may be correlated with
\(|N_i|\). The assumption permits this dependence through the number of friends,
while ruling out additional dependence on who those friends are, conditional on
the number of friends. Together, \cref{assumption:RCT,assumption:CE} imply that 
 $U_i \ \indep \left(D_i,T_i\right) \mid |N_i|$,
which is the key condition for identifying the SQFs
from observed conditional outcome distributions.

For a prespecified exposure state $w=(d,t,l)$, the SQF $q(w,\tau)$ can be
identified only if the observed network contains units with degree $l$. We therefore impose the following degree support condition.

\begin{assumption}\label{assumption:overlap} For the prespecified common degree $l \in \mathbb{N}$, 
there is a constant $c_l >0$ such that 
\[
\liminf_{n\rightarrow \infty} \ \frac{1}{n} \sum_{i=1}^n \mathds{1}\left\{ |N_i| = l \right \} > c_{l} , \quad \text{a.s.}
\]
\end{assumption}

Under \cref{assumption:overlap}, once the network contains units with
degree $l$, the Bernoulli design assigns positive probability to every feasible
effective treatment $(d,t)$ with $d\in\{0,1\}$ and $0\leq t\leq l$.

\begin{proposition}\label{prop:id}
Suppose \cref{assumption:Interference,assumption:normal,assumption:RCT,assumption:CE}
hold. For all $\tau \in (0,1)$ and any feasible exposure state $w=(d,t,l)$ whose degree $l$ satisfies
\cref{assumption:overlap},  the SQF $q(w,\tau)$ is identified by
\[
\mathbb P\left[Y_i\le q(w,\tau)\mid W_i=w\right]=\tau.
\]
\end{proposition}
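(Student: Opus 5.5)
The plan is to show that $\mathbb P[Y_i\le q(w,\tau)\mid W_i=w]=\tau$ for every $\tau\in(0,1)$. Strict monotonicity in \cref{assumption:mono} then makes $\tau$ the unique solution, so $q(w,\tau)$ is pinned down as the $\tau$-quantile of the observed conditional distribution of $Y_i$ given $W_i=w$. \Cref{assumption:overlap} and the Bernoulli design guarantee that the conditioning event $\{W_i=w\}$ has positive probability. Indeed, for a unit with $|N_i|=l$, \cref{assumption:RCT} gives
\[
\mathbb P\left[D_i=d,\,T_i=t\mid \boldsymbol A_n\right]=p^{d}(1-p)^{1-d}\binom{l}{t}p^{t}(1-p)^{l-t}>0.
\]
Hence the conditional probability is well defined whenever such units exist, which \cref{assumption:overlap} ensures for large $n$ almost surely.

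\textbf{Step 1 (consistency).} On the event $\{W_i=w\}$, \cref{assumption:Interference} gives $Y_i=q(w,U_i)$. Since $\tau\mapsto q(w,\tau)$ is continuous and strictly increasing, we have $\{q(w,U_i)\le q(w,\tau)\}=\{U_i\le\tau\}$. Therefore
\[
\mathbb P\left[Y_i\le q(w,\tau)\mid W_i=w\right]=\mathbb P\left[U_i\le \tau\mid D_i=d,\,T_i=t,\,|N_i|=l\right].
\]

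\textbf{Step 2 (conditional independence).} Next I would show $U_i\indep(D_i,T_i)\mid |N_i|$. I would condition first on $\boldsymbol A_n$. Since $\boldsymbol D_n\indep(\boldsymbol A_n,\boldsymbol U_n)$, we have $\boldsymbol D_n\indep U_i\mid \boldsymbol A_n$. Moreover, $(D_i,T_i)$ is a measurable function of $(\boldsymbol D_n,\boldsymbol A_n)$, so for any Borel set $B$,
\[
\mathbb P\left[U_i\in B,\,(D_i,T_i)=(d,t)\mid \boldsymbol A_n\right]=\mathbb P\left[U_i\in B\mid \boldsymbol A_n\right]\,\mathbb P\left[(D_i,T_i)=(d,t)\mid \boldsymbol A_n\right].
\]
\Cref{assumption:CE} replaces $\mathbb P[U_i\in B\mid \boldsymbol A_n]$ by $\mathbb P[U_i\in B\mid |N_i|]$. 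Taking expectations over $\boldsymbol A_n$ conditional on $|N_i|=l$ then factorizes the joint law given $|N_i|=l$ into the product of the two marginals, which is the claimed conditional independence.

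\textbf{Step 3 (uniformity).} Combining the steps, the right-hand side of Step 1 equals $\mathbb P[U_i\le\tau\mid |N_i|=l]$. By \cref{assumption:nor}, this equals $\tau$.

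The main obstacle I expect is Step 2. The conditioning variable $|N_i|$ is itself a function of $\boldsymbol A_n$, and $T_i$ depends on the whole network, so the tower-property argument has to be set up carefully: condition first on $\boldsymbol A_n$, then integrate over $\boldsymbol A_n$ given $|N_i|=l$. I would write this step explicitly with indicator functions to avoid subtleties with conditioning on events.
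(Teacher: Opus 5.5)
Your proposal is correct and matches the paper's proof. It uses the same reduction of the event to $\{U_i\le\tau\}$ via consistency and strict monotonicity, and the same conditional-independence argument built from $\boldsymbol D_n\indep(\boldsymbol A_n,\boldsymbol U_n)$, \cref{assumption:CE}, and the measurability of $(D_i,T_i)$ in $(\boldsymbol D_n,\boldsymbol A_n)$. The only difference is one of ordering: you condition on $\boldsymbol A_n$ first and then integrate over $\boldsymbol A_n$ given $|N_i|=l$, whereas the paper directly factorizes the joint law of $(U_i,\boldsymbol A_n,\boldsymbol D_n)$ given $|N_i|=l$. The paper thus obtains $U_i\indep(\boldsymbol A_n,\boldsymbol D_n)\mid |N_i|$ before specializing to $(D_i,T_i)$.
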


\section{Estimation and Asymptotic Theory}\label{subsec:nonparametric-QTE}

This section studies estimation of the structural quantile effects
over a range of quantile levels. Let $\mathcal T$ be a compact subset of $(0,1)$,
and fix two feasible exposure states \(w\) and \(w'\) with the same degree.
Recall from \cref{eq: def_SQE} that the quantile-effect function is defined as
\begin{equation}\label{eq: QTE_Def}
q(\tau) \equiv q\left(w,\tau\right)-q\left(w',\tau\right), \quad \tau\in\mathcal T .
\end{equation}

We propose to estimate the SQF by a conditional quantile regression.  Let  $\mathds{1}_i(w^\dagger )=\mathds{1}\{W_i=w^\dagger \}$ where $W_i \equiv (D_i, T_i, |N_i|)$, and let the check function be given by
$\rho_\tau(u)=u\left(\tau-\mathds{1}\{u<0\}\right)$.  Following \cref{prop:id}, we estimate $q_w(\tau) \equiv  q(w,\tau)$ by
\begin{equation}\label{equation: Normalized_estimator}
\widehat{q}_w(\tau)  = \mathop{\argmin}_{q \in \mathbb{R} }  \sum_{i=1}^n   \mathds{1}_i (w) \rho_\tau( Y_i - q ) .
\end{equation}
Similarly, $q_{w^\prime}(\tau) \equiv q \left(w^\prime,\tau\right)$ can be estimated by $\widehat{q}_{w^\prime}(\tau)$. Recall $q(\tau) =  q_w( \tau ) - q_{w^\prime}( \tau )$, and we then estimate  $q(\tau)$ by 
\[
\widehat{q}\left( \tau \right) = \widehat{q}_w( \tau ) - \widehat{q}_{w^\prime}( \tau ).
\]

The asymptotic analysis for $\widehat{q}\left( \tau \right)$ is nonstandard because effective treatments are
network-induced and latent ranks may be locally dependent. We therefore derive the
limit distribution of the estimated quantile-effect process under local
dependence of latent ranks, stabilization of the empirical degree distribution,
and high-level smoothness and
limit conditions stated below.

\subsection{Bahadur Representation}

Throughout the remainder of this paper, we maintain the assumptions in
\cref{sec:framework}, including \cref{assumption:overlap} for the
exposure states \(w\) and \(w'\). We impose the following regularity
conditions.

\begin{assumption}\label{ass:degree}
There is a universal integer $K_{\max} > 0$ such that $ \max_{1\le i\le n} |N_i| \leq   K_{\max}$, almost surely. 
\end{assumption}

Under the assumptions in \cref{sec:framework}, the conditional distribution
of \(Y_i\) given \(W_i=w^\dagger\) is common across units \(i\). For
\(w^\dagger\in \left\{w,w' \right\}\), let \(F_{w^\dagger}(\cdot)\) and
\(f_{w^\dagger}(\cdot)\) denote this common conditional distribution function
and density, respectively.

\begin{assumption} \label{assumption: density} 
For each $w^\dagger \in \{w, w^\prime\}$, the conditional density $f_{w^\dagger}(\cdot)$ is bounded, bounded away from zero, and Lipschitz continuous on an open neighborhood of $\{ q_{w^\dagger}(\tau) : \tau \in \mathcal{T} \}$.
\end{assumption}

To accommodate network-correlated unobservables while keeping the dependence tractable, we allow latent ranks to exhibit local network dependence,
as in \citet{leung2020treatment} and \citet{viviano2025policy}.

\begin{assumption}\label{assumption:local dependency}
Conditional on the network $\boldsymbol{A}_n$, the sets of random variables
$\left\{U_i : i \in B_1 \right\}$ and $\left\{U_j : j \in B_2 \right\}$ are independent for any two disjoint
sets $B_1, B_2 \subset [n]$ such that
$\ell_{\boldsymbol{A}_n}(i,j) > 2$ for any $(i,j) \in B_1 \times B_2$.
\end{assumption}

\begin{remark}
\cref{assumption:local dependency} captures the idea that nearby units may
share unobserved local shocks or information. 
Conditional on the network, latent ranks may be dependent for
directly linked units and for units sharing common neighbors, but are
independent for units whose network distance exceeds two. Our results continue to hold for any fixed local dependence radius $r \geq 3$.\footnote{The estimation procedure is unchanged. For the construction of the UCBs, it suffices to redefine $\boldsymbol{\Omega}_n(i,j)=\mathds{1}\{\ell_{\boldsymbol{A}_n}(i,j)\leq r\}$, while leaving the remainder of the inference procedure unchanged.}
\end{remark}

We next establish the uniform Bahadur representation of the  estimated quantile-effect function $\widehat{q} \left(\tau\right)$ over $\tau \in \mathcal{T}$. Define
\begin{equation}\label{eq:influence_function}
\widetilde{\psi}_i(\tau)
=
\frac{
\mathds{1}_i(w)
\bigl(\tau-\mathds{1}\{Y_i\le q_w(\tau)\}\bigr)
}{
\pi_n(w)\,
f_w\bigl(q_w(\tau)\bigr)
}
-
\frac{
\mathds{1}_i(w')
\bigl(\tau-\mathds{1}\{Y_i\le q_{w'}(\tau)\}\bigr)
}{
\pi_n (w^\prime)\,
f_{w'}\bigl(q_{w'}(\tau)\bigr)
},
\end{equation}
where $\pi_n( w^\dagger) = n^{-1} \sum_{i=1}^n \mathbb{P}\big(W_i = w^\dagger | \boldsymbol{A}_n  \big)$ for $w^\dagger \in \left \{ w, w^\prime  \right\}$. For simplicity, the dependence of $\widetilde{\psi}_i(\tau)$ on $n$ is suppressed, but it depends on $\boldsymbol{A}_n$ through
$\pi_n(w)$ and $\pi_n(w')$.

\begin{lemma}\label{lemma:Bahadur representation}
Suppose \cref{ass:degree,assumption: density,assumption:local dependency}   hold. Conditionally on  $\{ \boldsymbol{A}_n \}_{n\geq 1}$, it follows that 
\[
\sup_{\tau \in \mathcal{T}}\left|\sqrt{n} \left( \widehat{q}(\tau) -  q(\tau) \right) - \frac{1}{\sqrt{n} } \sum_{i=1}^n \widetilde{\psi}_i(\tau) \right| = O_P\left(r_{\mathrm{B},n}\right),
\]
where $r_{\mathrm{B},n} = n^{-1/8}(\log n)^{1/4}$.
\end{lemma}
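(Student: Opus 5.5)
Since $\widehat q(\tau)=\widehat q_w(\tau)-\widehat q_{w'}(\tau)$ and $\widetilde\psi_i$ splits into a $w$-term and a $w'$-term, it suffices to prove, for each $w^\dagger\in\{w,w'\}$ and conditionally on $\boldsymbol A_n$,
\[
\sup_{\tau\in\mathcal T}\Bigl|\sqrt n\bigl(\widehat q_{w^\dagger}(\tau)-q_{w^\dagger}(\tau)\bigr)-\frac{1}{\sqrt n}\sum_{i=1}^n\frac{\mathds 1_i(w^\dagger)\bigl(\tau-\mathds 1\{Y_i\le q_{w^\dagger}(\tau)\}\bigr)}{\pi_n(w^\dagger)f_{w^\dagger}(q_{w^\dagger}(\tau))}\Bigr|=O_P(r_{\mathrm B,n}).
\]
Because the problem is a one-dimensional location quantile, I would use the quantile-process (Bahadur--Kiefer) route rather than convexity arguments. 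Write $N_n=\sum_i\mathds 1_i(w^\dagger)$ and define the empirical conditional distribution $\widehat F(y)=N_n^{-1}\sum_i\mathds 1_i(w^\dagger)\mathds 1\{Y_i\le y\}$. Then $\widehat q_{w^\dagger}(\tau)$ is a $\tau$-quantile of $\widehat F$, so $|\widehat F(\widehat q_{w^\dagger}(\tau))-\tau|\le 1/N_n$ whenever $Y$ has no ties, which holds a.s.\ under \cref{assumption: density}.

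\textbf{Step 1 (denominator).} Under \cref{ass:degree}, $\mathds 1_i(w^\dagger)$ depends only on $D$ in the closed neighborhood of $i$. Hence, conditional on $\boldsymbol A_n$ and by \cref{assumption:RCT}, these indicators form a dependency graph whose degree is at most $K_{\max}^2+1$. A variance bound then gives $N_n/n-\pi_n(w^\dagger)=O_P(n^{-1/2})$. \cref{assumption:overlap} together with the Bernoulli design keeps $\pi_n(w^\dagger)$ bounded away from zero.

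\textbf{Step 2 (empirical process).} Define $G_n(y)=n^{-1/2}\sum_i\mathds 1_i(w^\dagger)\bigl(\mathds 1\{Y_i\le y\}-F_{w^\dagger}(y)\bigr)$, where $\mathbb E_{\boldsymbol A_n}[\mathds 1_i(w^\dagger)\mathds 1\{Y_i\le y\}]=\mathbb P_{\boldsymbol A_n}(W_i=w^\dagger)F_{w^\dagger}(y)$ by \cref{prop:id}. By \cref{assumption:local dependency} and \cref{ass:degree}, the summands $\{(\mathds 1_i(w^\dagger),Y_i)\}$ are locally dependent: the summand for $i$ is independent of any set of summands at distance greater than about $4$ from $i$. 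The dependency neighborhoods are therefore uniformly bounded in size. I would then establish two facts:
\begin{enumerate}[label=(\roman*)]
\item $\sup_y|G_n(y)|=O_P(\sqrt{\log n})$;
\item the local modulus bound $\sup_{|y-y'|\le\delta}|G_n(y)-G_n(y')|=O_P\bigl(\sqrt{\delta\log n}+n^{-1/2}\log n\bigr)$ for $\delta=\delta_n\to0$.
\end{enumerate}
Both would follow from a Bernstein-type inequality for sums over a bounded-degree dependency graph (e.g.\ Janson's inequality), applied on a grid of mesh $n^{-1}$, with monotonicity of indicators handling points between grid nodes. The sup over the grid costs a $\log n$ factor.

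\textbf{Step 3 (expansion).} From (i), Step 1, and the density lower bound, I would first obtain uniform consistency at rate $\sup_\tau|\widehat q_{w^\dagger}(\tau)-q_{w^\dagger}(\tau)|=O_P(\sqrt{\log n/n})$. Then expand
\[
\tau\approx\widehat F(\widehat q)=F(\widehat q)+\frac{G_n(\widehat q)}{\sqrt n\,N_n/n}+\Bigl(\tfrac1{N_n}\textstyle\sum_i\mathds 1_i(w^\dagger)-1\Bigr)F(\widehat q).
\]
The last term vanishes identically because $\widehat F$ is normalized by $N_n$. Taylor-expand $F(\widehat q)-\tau$ using the Lipschitz density, which leaves an error of order $\log n/n$. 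Replace $G_n(\widehat q)$ by $G_n(q)$ using (ii) with $\delta=\sqrt{\log n/n}$, which costs $n^{-1/4}(\log n)^{3/4}$. Replace $N_n/n$ by $\pi_n$ using Step 1, which costs $O_P(n^{-1/2}\sqrt{\log n})$. Finally, $-G_n(q)$ matches the numerator of the target influence term up to the term $\sum_i(\mathds 1_i-\mathbb P_{\boldsymbol A_n}(W_i=w^\dagger))(\tau-F(q))$, which is zero because $F(q(\tau))=\tau$.

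\textbf{Main obstacle.} The hard step is (ii) under network dependence, uniformly in $\tau$. That modulus term fixes the final rate. A cruder chaining argument, using a bracketing grid plus Bernstein bounds with dependence-inflated variance, yields a rate like $(\delta\log n)^{1/2}$ only up to worse powers; pushing through with the weaker bound $\delta^{1/4}$-type still gives $O_P(n^{-1/8}(\log n)^{1/4})$. That matches the stated $r_{\mathrm B,n}$, so I would aim for that conservative bound rather than the sharp one.
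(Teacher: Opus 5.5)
Your argument is correct, but it takes a genuinely different route from the paper.

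\textbf{The paper's route.} The paper never inverts the empirical distribution function. It uses Knight's identity to write $\sqrt n(\widehat q_w(\tau)-q_w(\tau))$ as the minimizer of the convex process $L_n(u,\tau)=-L_{1,n}(\tau)u+L_{2,n}(u,\tau)$. It then shows that $L_{2,n}(u,\tau)$ is within $O_P(n^{-1/4}\sqrt{\log n})$ of $\tfrac12\zeta_n(w,\tau)u^2$, uniformly over $\tau\in\mathcal T$ and $|u|\le M$. The mean part uses a Taylor expansion. The centered part uses a Hajnal--Szemer\'edi coloring of $\boldsymbol A_n^{(2)}$ into $\Delta_n+1$ conditionally independent classes, followed by symmetrization and a VC maximal inequality. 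The paper then localizes the minimizer by convexity and compares minimizers of the perturbed quadratic. That comparison only shows that $|\widehat u_n(\tau)-u_n(\tau)|^2$ is $O_P(n^{-1/4}\sqrt{\log n})$. This is exactly why the stated rate is $r_{\mathrm B,n}=(n^{-1/4}\sqrt{\log n})^{1/2}$.

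\textbf{What each route buys.} Your Bahadur--Kiefer inversion exploits the fact that each $\widehat q_{w^\dagger}(\tau)$ is a cell-wise sample quantile, and it avoids this square-root loss. With your own ingredients the remainder is $O_P(n^{-1/4}(\log n)^{3/4})=o_P(r_{\mathrm B,n})$, so you prove more than the lemma. The paper's convexity argument, in turn, is the one that carries over to the linear quantile regression in \cref{appendix: linear}, where no explicit inversion is available.

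\textbf{Your main obstacle is already handled in the paper.} You need neither Janson's inequality nor a union bound over a grid. On $\{W_i=w^\dagger\}$ one has $\mathds 1\{Y_i\le y\}=\mathds 1\{U_i\le F_{w^\dagger}(y)\}$. Hence $G_n(y)=-n^{-1/2}\sum_i\gamma_i(F_{w^\dagger}(y))$ with $\gamma_i(t)=\mathds 1_i(w^\dagger)(t-\mathds 1\{U_i\le t\})$. The argument of \cref{lemma:L1n_uniform_bounded} then gives $\sup_t|G_n|=O_P(1)$, so consistency holds at rate $n^{-1/2}$ rather than $\sqrt{\log n/n}$. \Cref{lemma:local_modulus_gamma}, applied on a slightly enlarged compact subset of $(0,1)$, gives your modulus bound (ii) in expectation. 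With $\delta\asymp n^{-1/2}$ this yields a final remainder of $O_P(n^{-1/4}\sqrt{\log n})$. The hedged ``$\delta^{1/4}$-type'' fallback in your last paragraph is therefore unnecessary.

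\textbf{One small correction.} Ties among the $Y_i$ are not excluded by \cref{assumption: density}, because \cref{assumption:local dependency} allows $U_i=U_j$ for units within distance two. However, independent uniforms are a.s.\ distinct, so every tie group is a.s.\ a clique of $\boldsymbol A_n^{(2)}$. It follows that $|\widehat F(\widehat q_{w^\dagger}(\tau))-\tau|\le(\Delta_n+1)/N_n$, which is all your argument uses.
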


\subsection{Asymptotic Normality}\label{section: nonpara-estimation}

This subsection studies the weak convergence of  $\{ \widehat q(\tau):\tau \in \mathcal{T} \}$.
The Bahadur representation established in
\cref{lemma:Bahadur representation} alone does not guarantee convergence to a fixed Gaussian process, because the scores $\widetilde{\psi}_i(\tau)$ and their covariance structure may vary along the network sequence $\{ \boldsymbol{A}_n \}_{n\geq1}$.
We therefore impose additional stabilization conditions on the empirical degree distribution and the network-conditional covariance kernel.

\begin{assumption}\label{assumption: DD_limit}
There exists a probability measure $\mu$ on $ \{0,1,\ldots, K_{\max}  \}$ such that, for each
$0\leq k \leq K_{\max}$,
\[
\frac{1}{n}\sum_{i=1}^n \mathds{1}\{|N_i|=k \}
\overset{ \mathrm{a.s.} }{\longrightarrow}
\mu(k).
\]
Moreover, for the common degree \(l\) of the exposure states \(w\) and \(w'\),
we assume \(\mu(l)>0\).
\end{assumption}

Under \Cref{assumption:RCT,assumption: DD_limit},   for $w^\dagger = \big(d^\dagger, t^\dagger, l \big)$,  both $\frac{1}{n}\sum_{i=1}^n \mathds{1}_i(w^\dagger)$ and $\pi_n \big(w^\dagger\big)$ converge to $\pi(w^\dagger)$ in probability, where
\begin{equation}\label{eq: pi_w}
\pi\big(w^\dagger\big) = \mu(l) \times \binom{l}{t^\dagger} p^{t^\dagger+d^\dagger}(1-p)^{l-t^\dagger+1-d^\dagger} > 0.    
\end{equation}
We further define the score $\psi_i(\tau)$ by substituting $\pi(w^\dagger)$ for $\pi_n(w^\dagger)$ in \eqref{eq:influence_function}, for $w^\dagger \in \{w, w^\prime\}$. Combining the result $\left|\pi_n(w^\dagger) - \pi(w^\dagger)\right| = o_P(1)$ with
\cref{lemma:Bahadur representation} yields the following corollary.

\begin{corollary}\label{corollary: fixed_Representation}
Suppose \cref{ass:degree,assumption: density,assumption:local dependency,assumption: DD_limit}   hold. Conditionally on  $\{ \boldsymbol{A}_n \}_{n\geq 1}$, it follows that 
\[
\sup_{\tau \in \mathcal{T}}\left|\sqrt{n} \left( \widehat{q}(\tau) -  q(\tau) \right) - \frac{1}{\sqrt{n} } \sum_{i=1}^n \psi_i(\tau) \right| = o_P(1).
\]
\end{corollary}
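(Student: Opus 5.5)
By the triangle inequality and \cref{lemma:Bahadur representation}, whose remainder $O_P(r_{\mathrm{B},n})$ is $o_P(1)$, it suffices to show
\[
\sup_{\tau\in\mathcal T}\left|\frac{1}{\sqrt n}\sum_{i=1}^n\bigl(\widetilde\psi_i(\tau)-\psi_i(\tau)\bigr)\right|=o_P(1).
\]
The two scores differ only through the normalizing constants $\pi_n(w^\dagger)$ and $\pi(w^\dagger)$, and these factor out of the sum. For $w^\dagger\in\{w,w'\}$ write
\[
G_{n,w^\dagger}(\tau)=\frac{1}{\sqrt n}\sum_{i=1}^n\frac{\mathds{1}_i(w^\dagger)\bigl(\tau-\mathds{1}\{Y_i\le q_{w^\dagger}(\tau)\}\bigr)}{f_{w^\dagger}(q_{w^\dagger}(\tau))}.
\]
Then the difference above is a signed sum over $w^\dagger$ of $\bigl(\pi_n(w^\dagger)^{-1}-\pi(w^\dagger)^{-1}\bigr)G_{n,w^\dagger}(\tau)$. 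Its supremum is therefore bounded by
\[
\sum_{w^\dagger\in\{w,w'\}}\bigl|\pi_n(w^\dagger)^{-1}-\pi(w^\dagger)^{-1}\bigr|\,\sup_{\tau\in\mathcal T}|G_{n,w^\dagger}(\tau)|.
\]

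The first factor is $o(1)$ conditionally on the network. Under \cref{assumption:RCT} and \cref{assumption:CE}, $\mathbb P(W_i=w^\dagger\mid\boldsymbol A_n)=\mathds{1}\{|N_i|=l\}\binom{l}{t^\dagger}p^{t^\dagger+d^\dagger}(1-p)^{l-t^\dagger+1-d^\dagger}$. This holds because $T_i$ given $|N_i|=l$ is $\mathrm{Bin}(l,p)$ and $D_i$ does not enter $T_i$. Hence $\pi_n(w^\dagger)$ equals the empirical frequency of degree $l$ times a fixed constant. \cref{assumption: DD_limit} then gives $\pi_n(w^\dagger)\to\pi(w^\dagger)>0$ almost surely. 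Since $\pi(w^\dagger)>0$, the map $x\mapsto1/x$ is continuous there, so the first factor vanishes.

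The main step is to show $\sup_\tau|G_{n,w^\dagger}(\tau)|=O_P(1)$. I would first note that each summand has conditional mean zero. By \cref{prop:id}, $\mathbb P(Y_i\le q_{w^\dagger}(\tau)\mid W_i=w^\dagger,\boldsymbol A_n)=\tau$, which uses \cref{assumption:CE} to pass the conditioning on $\boldsymbol A_n$. The denominator $f_{w^\dagger}(q_{w^\dagger}(\tau))$ is bounded away from zero by \cref{assumption: density}. The summands are bounded and are dependent only within distance-$2$ neighborhoods: $W_i$ is a function of $D_j$ for $j$ within distance one, and \cref{assumption:local dependency} controls the $U_i$. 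With \cref{ass:degree}, each $i$ therefore has at most a bounded number, of order $K_{\max}^4$, of dependent partners.

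For pointwise control, the conditional variance of $G_{n,w^\dagger}(\tau)$ is then $O(1)$ uniformly in $\tau$, and Chebyshev gives $O_P(1)$ at each $\tau$. For the supremum I would use a chaining or bracketing argument. The class $\{\mathds{1}\{y\le q_{w^\dagger}(\tau)\}:\tau\in\mathcal T\}$ is monotone in $\tau$, and $\tau\mapsto q_{w^\dagger}(\tau)$ and $\tau\mapsto 1/f_{w^\dagger}(q_{w^\dagger}(\tau))$ are Lipschitz by \cref{assumption: density}. Thus a grid of mesh $n^{-1/2}$ yields brackets of $L_2$ size $O(n^{-1/4})$ with polynomially many elements. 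A maximal inequality for locally dependent bounded sums then applies: partition the dependency graph into boundedly many independent colour classes and use Bernstein's inequality on each. This gives $\sup_\tau|G_{n,w^\dagger}(\tau)|=O_P(\sqrt{\log n})$ at worst.

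That weaker rate is a possible obstacle, since it requires the first factor to be $o(1/\sqrt{\log n})$, which \cref{assumption: DD_limit} does not supply. If only $O_P(\sqrt{\log n})$ is available, I would instead reuse the argument behind \cref{lemma:Bahadur representation}. Its proof presumably already shows that $n^{-1/2}\sum_i\widetilde\psi_i(\tau)$ is uniformly $O_P(1)$, since its leading term is exactly $\pi_n(w^\dagger)^{-1}G_{n,w^\dagger}$. With $\pi_n$ bounded away from zero, this gives $\sup_\tau|G_{n,w^\dagger}|=O_P(1)$ directly. Multiplying by the $o(1)$ first factor then completes the proof.
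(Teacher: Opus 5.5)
Your proposal is correct and is essentially the paper's (omitted) argument. It combines the triangle inequality with \cref{lemma:Bahadur representation}, factors $\widetilde\psi_i-\psi_i$ through $\pi_n(w^\dagger)^{-1}-\pi(w^\dagger)^{-1}\to 0$, and uses a uniform $O_P(1)$ bound on the normalized score sums. That last bound is exactly \cref{lemma:L1n_uniform_bounded}, which applies a VC-type maximal inequality on each independent colour class to get $\lesssim\sqrt{|\mathcal C_k|}$ with no $\log n$ loss, so your fallback is the right step and the Bernstein-plus-brackets detour is unnecessary.
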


To derive a fixed Gaussian limit for $\sqrt{n}\left(\widehat{q}\left(\tau\right) - q(\tau) \right)$,
we require the network-dependent covariance kernel to stabilize along the
graph sequence, analogously to Assumption 6 of \cite{leung2020treatment}. The  network-conditional  covariance kernel  $ \mathbb V_{\boldsymbol A_n}(\tau,\tau^\prime)$ is defined as
\begin{equation}\label{eq: con_variance_kernel}
 \mathbb V_{\boldsymbol A_n}(\tau,\tau^\prime)
\equiv \frac{1}{n}
\mathrm{Cov}\left(\sum_{i=1}^n\psi_i(\tau),\sum_{i=1}^n\psi_i(\tau^\prime)\Big|\boldsymbol A_n\right), \quad \tau, \tau^\prime \in \mathcal{T}.   
\end{equation}

\begin{assumption}\label{ass:final_variance}
There exists a nonrandom, symmetric, and PSD kernel $\Sigma :\mathcal T\times \mathcal T \to \mathbb R$ such that
\[
\sup_{\tau,\tau' \in \mathcal T}
\left|
\mathbb V_{\boldsymbol A_n}(\tau,\tau')
-
\Sigma (\tau,\tau')
\right|
=o_P(1).
\]
\end{assumption}

\begin{proposition}\label{proposition:NP_quantile_process_convergence}
Suppose the conditions of \cref{corollary: fixed_Representation} hold. Under the additional \cref{ass:final_variance}, we have 
\begin{equation}\label{eq:limiting_distribution}
      \sqrt{n} \left( \widehat{q}\left(\tau\right) -  q\left(\tau\right) \right) \rightsquigarrow \mathbb{G}(\tau) \quad \mathrm{in} \ \ell^\infty(\mathcal{T}),
  \end{equation}
  where $\mathbb{G}(\cdot)$ is a tight, mean-zero Gaussian process with covariance kernel $\Sigma(\tau, \tau')$. 
\end{proposition}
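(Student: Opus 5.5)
By \cref{corollary: fixed_Representation}, and since the uniform $o_P(1)$ remainder is negligible in $\ell^\infty(\mathcal T)$ (Slutsky-type argument), it suffices to prove weak convergence of the process $\mathbb G_n(\tau)\equiv n^{-1/2}\sum_{i=1}^n\psi_i(\tau)$, $\tau\in\mathcal T$, to $\mathbb G$. I will argue conditionally on $\{\boldsymbol A_n\}$ and then note that the limit does not depend on the conditioning sequence. First, each $\psi_i(\tau)$ has conditional mean zero. By \cref{assumption:RCT,assumption:CE}, $U_i\indep(D_i,T_i)\mid |N_i|$ given $\boldsymbol A_n$, so $\mathbb E_{\boldsymbol A_n}[\mathds 1_i(w)\mathds 1\{Y_i\le q_w(\tau)\}]=\mathbb P_{\boldsymbol A_n}(W_i=w)\tau$ by \cref{prop:id}. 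The proof then follows the standard two steps: convergence of finite-dimensional distributions, and asymptotic tightness (asymptotic uniform equicontinuity together with total boundedness of $\mathcal T$).

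\textbf{Finite-dimensional convergence.} Fix $\tau_1,\dots,\tau_K\in\mathcal T$ and $c\in\mathbb R^K$, and set $X_i=\sum_k c_k\psi_i(\tau_k)$. Each $\psi_i$ is a function of $(D_i,\{D_j\}_{j\in N_i},U_i)$. By \cref{assumption:local dependency} and the independence of the $D$'s, $X_i$ and $X_j$ are therefore conditionally independent whenever $\ell_{\boldsymbol A_n}(i,j)>4$; the same holds for sets of indices. The resulting dependency graph has maximal degree at most $K_{\max}^4$ by \cref{ass:degree}. The $X_i$ are uniformly bounded, since $\pi(w^\dagger)>0$ by \eqref{eq: pi_w} and the densities are bounded away from zero by \cref{assumption: density}. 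A Stein-method CLT for dependency graphs (e.g.\ the bound of Ross, or Chen--Shao) then controls the Wasserstein distance between $n^{-1/2}\sum_iX_i/\sigma_n$ and $N(0,1)$ by $C(K_{\max}^{4})^2 n^{-1/2}/\sigma_n^3$, where $\sigma_n^2=c^\top[\mathbb V_{\boldsymbol A_n}(\tau_k,\tau_l)]c$. By \cref{ass:final_variance}, $\sigma_n^2\to c^\top\Sigma c$. If $c^\top\Sigma c>0$, the normal approximation holds. If it equals zero, Chebyshev gives convergence to $0$. The Cramér--Wold device completes this step.

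\textbf{Tightness.} This is the main obstacle, because the summands are dependent and indexed by the continuum $\mathcal T$. The class has a simple structure: $\tau\mapsto\mathds 1\{Y_i\le q_w(\tau)\}$ is monotone, $q_w(\cdot)$ is Lipschitz with $q_w'=1/f_w$ bounded, and $1/f_w(q_w(\tau))$ is Lipschitz. So $\psi_i(\tau)-\psi_i(\tau')$ is bounded by $C|\tau-\tau'|$ plus a difference of indicators whose conditional second moment is at most $C|\tau-\tau'|$.

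I will use a chaining/bracketing argument adapted to local dependence. Because the dependency graph has bounded degree, it can be properly colored with at most $M=K_{\max}^4+1$ colors, and within each color class the summands are independent. Write $\mathbb G_n=\sum_{m=1}^M\mathbb G_n^{(m)}$ with $\mathbb G_n^{(m)}=n^{-1/2}\sum_{i\in C_m}\psi_i$. Each $\mathbb G_n^{(m)}$ is a normalized sum of independent, not identically distributed, bounded terms indexed by a class with bracketing entropy $O(\log(1/\varepsilon))$ in $L_2$. A bracketing CLT for independent non-identically distributed triangular arrays (van der Vaart--Wellner Thm.~2.11.9) then gives asymptotic equicontinuity of each $\mathbb G_n^{(m)}$ with respect to $\rho(\tau,\tau')=|\tau-\tau'|^{1/2}$. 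A finite sum of asymptotically equicontinuous processes is asymptotically equicontinuous.

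If the coloring step proves awkward because the class sizes vary with $\boldsymbol A_n$, the fallback is a direct route: a Bernstein inequality for sums with a bounded-degree dependency graph (Janson's inequality) applied over a grid of mesh $n^{-1/2}$, combined with monotonicity of the indicators between grid points. Either way, tightness together with finite-dimensional convergence yields \eqref{eq:limiting_distribution}. The limit $\mathbb G$ is tight, mean zero, with covariance $\Sigma$, and it has $\rho$-continuous paths.
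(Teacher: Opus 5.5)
Your proposal is correct and follows essentially the same route as the paper. You reduce to $\mathbb G_n$ via the corollary, obtain finite-dimensional convergence from Ross's Stein bound for bounded-degree dependency graphs plus Cram\'er--Wold, and get stochastic equicontinuity by splitting into color classes of independent summands, with standard empirical-process bounds in each class. The differences are minor. You invoke the bracketing CLT for triangular arrays, so any proper coloring suffices; the paper instead uses symmetrization with a VC maximal inequality, which needs the balanced Hajnal--Szemer\'edi cover. Your distance-$4$ dependency graph is merely conservative, since $\boldsymbol A_n^{(2)}$ already works. Your separate treatment of the case $c^\top\Sigma c=0$ covers a point the paper's Wasserstein bound implicitly skips.
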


\begin{remark}
\cref{ass:final_variance} is used mainly to derive \eqref{eq:limiting_distribution}, which establishes the limiting distribution of the quantile process $\sqrt{n} \left( \widehat{q}(\tau) - q(\tau) \right)$.
\end{remark}

\cref{proposition:NP_quantile_process_convergence} derives a Gaussian limit under the large-network asymptotics of \cite{leung2020treatment}, where the observed network is embedded into a growing sequence of graphs. This fixed  Gaussian limit result requires
the network sequence to stabilize in two respects. In particular, \cref{assumption: DD_limit} requires
the empirical degree distribution to converge along the graph sequence,  while \cref{ass:final_variance} assumes uniform convergence of the
network-conditional covariance kernel of the linearized quantile process.  These conditions are restrictions on the underlying graph sequence and
cannot be directly verified from a single observed network.

\section{Uniform Inference}\label{sec:dependent bootstrap}

This section  develops the uniform inference for the quantile-effect function. In particular, for the two fixed exposure states $w$ and $w'$, we construct uniform confidence bands (UCBs) for $q(\tau) \equiv q_w(\tau) - q_{w'}(\tau)$ for $\tau \in \mathcal{T}$.  We do not rely on the fixed Gaussian limit shown in
\cref{proposition:NP_quantile_process_convergence}, we  approximate the distribution of the studentized
supremum statistic by the maximum of a Gaussian vector
with an estimated network-dependent covariance \citep{chernozhukov2013gaussian,chernozhukov2014anti,chernozhukov2014gaussian}. Consequently, our procedure does not require the stabilization of either the
empirical degree distribution or the network-conditional covariance kernel
imposed in \cref{assumption: DD_limit,ass:final_variance}.

A standard i.i.d. multiplier bootstrap is inappropriate in our setting, since it perturbs the estimated score functions independently across units and therefore does not reproduce the local dependence.\footnote{A standard i.i.d. multiplier bootstrap would use 
$\widehat{\mathbb G}^{\prime}_{n}(\tau)
=
n^{-1/2}\sum_{i=1}^n \xi_i\widehat\psi_i(\tau)$, 
where $\xi_i \overset{\mathrm{i.i.d.}}{\sim} \mathrm{N}(0,1)$ and the multipliers are generated independently of the data.} As a result, it ignores the off-diagonal covariance terms between units that are directly connected or share common neighbors, which may  distort the resulting standard errors and critical values.

\subsection{Construction of UCBs}\label{subsec:UCB}

Let $\boldsymbol{\Omega}_n$ be an $n \times n$ matrix with entries $\boldsymbol{\Omega}_n(i,j) = \mathds{1}\{\ell_{\boldsymbol{A}_n}(i,j) \le 2\}$. For each $i\in[n]$, we estimate the score $\widetilde{\psi}_i(\tau)$ defined in \cref{eq:influence_function} by
	\begin{equation}\label{eq: feasible_score_UCB}
	\widehat{\psi}_i(\tau)
	=
	\frac{\mathds{1}_i(w)\bigl(\tau - \mathds{1}\{Y_i \le \widehat{q}_w(\tau)\}\bigr)}
	{\widehat{\pi}(w)\,\widehat{f}_w\bigl(\widehat{q}_w(\tau)\bigr)}
    -
	\frac{\mathds{1}_i(w')\bigl(\tau - \mathds{1}\{Y_i \le \widehat{q}_{w'}(\tau)\}\bigr)}
	{\widehat{\pi}(w')\,\widehat{f}_{w'}\bigl(\widehat{q}_{w'}(\tau)\bigr)},
	\end{equation}
    where $\widehat{\pi} \big(w^\dagger\big) = \frac{1}{n} \sum_{i=1}^n \mathds{1}_i( w^\dagger)$ and $\widehat{f}_{w\dagger}(\cdot)$ denote the estimator of conditional density of $Y_i$ given $W_i =w^\dagger$. We define the network-dependent covariance kernel estimator
$\widehat{\mathbb V}_n:\mathcal T\times\mathcal T\to\mathbb R$ as 
\begin{equation}\label{eq:variance_estimator}
\widehat{\mathbb{V}}_n(\tau, \tau^\prime) = \frac{1}{n}  \widehat{\boldsymbol{\psi}}(\tau)^\top  \boldsymbol{\Omega}_n \widehat{\boldsymbol{\psi}}(\tau^\prime),
\end{equation}
where $\widehat{\boldsymbol{\psi}}(\tau) = \bigl( \widehat{\psi}_1(\tau), \dots, \widehat{\psi}_n(\tau) \bigr)^{\top}$.  A 100$(1-\alpha)$\% Gaussian bootstrap UCB for $\{ q(\tau): \tau \in \mathcal{T} \}$ is constructed as
\[
\tau \mapsto \left[ \widehat{q}(\tau ) -  \frac{z_{1-\alpha}^\star \widehat{\sigma}_n(\tau) }{\sqrt{n}}, \ \widehat{q}(\tau ) +  \frac{ z_{1-\alpha}^\star \widehat{\sigma}_n(\tau) }{\sqrt{n}}   \right],
\]
where $\widehat{\sigma}_n^2(\tau) = \widehat{\mathbb{V}}_n(\tau, \tau)$, and $z_{1-\alpha}^\star$ is a bootstrap-based critical value to be defined below.

To compute the critical value, we approximate the supremum over
$\mathcal T$ by a finite grid. Let
$\mathcal{T}_n\equiv \{\tau_1,\ldots,\tau_{p_n}\}\subseteq\mathcal{T}$ 
be a pre-specified grid. We write $\boldsymbol\tau_n=(\tau_1,\ldots,\tau_{p_n})$  and define its mesh size $\delta_n$ of $\mathcal{T}_n$ by
\[
\delta_n
\equiv
\sup_{\tau\in\mathcal T}
\min_{1\le j\le p_n}
|\tau-\tau_j|.
\]
The number of grid points $p_n$ is required to increase with
$n$ so that $\delta_n\to0$. Let $\widehat{\boldsymbol V}_n\in\mathbb R^{p_n\times p_n}$ denote the
estimated covariance matrix on the grid, with $(i,j)$-th entry $\widehat{\mathbb{V}}_n(\tau_i,\tau_j)$. Let $\widehat{\boldsymbol \Gamma}_n =  \widehat{\boldsymbol{\Lambda}}_n^{-1} \widehat{\boldsymbol V}_n \widehat{\boldsymbol{\Lambda}}_n^{-1} $, where $\widehat{\boldsymbol{\Lambda}}_n = \mathrm{diag} \left( \widehat{\sigma}_n(\tau_1),\ldots, \widehat{\sigma}_n (\tau_{p_n})  \right) \in \mathbb{R}^{p_n\times p_n}$. Since $\widehat{\boldsymbol\Gamma}_n$ may not be positive semidefinite (PSD) in
finite samples, we replace it by a nearest correlation matrix. Specifically, let
\begin{equation}\label{eq: PS_programming}
 \widehat{\boldsymbol{ \Gamma}}^+_n \in \mathop{\argmin}_{  \boldsymbol{\Gamma} \in \mathbb{R}^{p_n \times p_n} } \left\{   \big \| \widehat{\boldsymbol \Gamma}_n - \boldsymbol{\Gamma}  \big \|_{\max}  : \boldsymbol{\Gamma} \succeq  0,  \mathrm{diag}( \boldsymbol{\Gamma} ) = 1  \right\}.    
\end{equation}
Conditional on the observed data and the realized network, draw a mean-zero
Gaussian vector
\[
\widehat{\boldsymbol{Z}}_n \equiv \big( \widehat{Z}_n(\tau_i): 1\leq i \leq p_n  \big) \sim \mathrm{N}\big(0, \widehat{\boldsymbol{ \Gamma}}^+_n\big).
\]
The Gaussian bootstrap maximal statistic over the grid $\mathcal{T}_n$ is defined as
\[
\widehat{T}_n^\star
=
\sup_{1\leq i \leq p_n}
\left|
\widehat{Z}_n(\tau_i)
\right| .
\]
Finally, the critical value is given by the conditional $(1-\alpha)$-quantile of $\widehat{T}_n^\star$:
\[
z_{1-\alpha}^\star
=
\inf
\left\{
s\in\mathbb R:
\mathbb P_\star\big(
\widehat{T}_n^\star\le s
\big)
\ge 1-\alpha
\right\},
\]
where $\mathbb P_\star$ denotes the conditional probability, given
$\left(Y_i,W_i \right)_{i=1}^n$ and $\boldsymbol A_n$.

\subsection{Validity of UCBs}

We now establish the validity of the  UCBs via Gaussian bootstrap.  Recall the  score $\widetilde{\psi}_i(\tau)$
defined in \cref{eq:influence_function}, and  define its network-conditional
covariance kernel by 
\begin{equation}\label{eq: network_conditional_covariance_kernel}
\widetilde{\mathbb V}_{\boldsymbol A_n}(\tau,\tau')
\equiv
\frac{1}{n}
\mathrm{Cov}\left(
\sum_{i=1}^n \widetilde\psi_i(\tau),
\sum_{i=1}^n \widetilde\psi_i(\tau')
\,\Big|\,\boldsymbol A_n
\right).
\end{equation}

By Assumptions in \cref{sec:framework} and \cref{assumption:local dependency}, it is easy to see that
$\mathbb{E} \big[\widetilde\psi_i(\tau)|\boldsymbol A_n \big]=0$, and 
\[
\widetilde{\mathbb V}_{\boldsymbol A_n}(\tau,\tau')
=
\frac{1}{n}
\sum_{i=1}^n
\sum_{\ell_{\boldsymbol A_n}(i,j)\le2}
\mathbb E\left[
\widetilde\psi_i(\tau)\widetilde\psi_j(\tau')
|\boldsymbol A_n
\right].
\]

\begin{assumption}
\label{assumption:density_est_convergence} There is a  interval $[a,b]$ such that  $\{q_{w^\dagger}(\tau) : \tau \in \mathcal{T}\} \subseteq [a, b]$ for $w^\dagger \in \{w, w^\prime\}$. Moreover,  there exist density estimators $\widehat{f}_{w^\dagger}$ for $w^\dagger \in \{w, w^\prime\}$ such that 
\[
\sup_{a \leq y \leq b} \big|\widehat{f}_{w^\dagger}(y) - f_{w^\dagger}(y)\big| = O_P \big( r_{f,n}  \big).
\]
\end{assumption}

\begin{remark}
We take an agnostic view on how the conditional density estimator $\widehat{f}_w$ is obtained, with \cref{assumption:density_est_convergence} imposing only high-level conditions on its convergence rate. Under sufficient regularity and an appropriate choice of bandwidth, estimators satisfying these rate conditions can be constructed using kernel density estimation.
\end{remark}

\begin{lemma}\label{lemma: feasible_variance_convergence}
Suppose \cref{ass:degree,assumption: density,assumption:local dependency,assumption:density_est_convergence} hold. Then, conditional on the \(\{ \boldsymbol{A}_n \}_{n\ge1}\),
\[
\sup_{\tau, \tau^\prime \in \mathcal{T}}\left|\widehat{\mathbb{V}}_n(\tau, \tau^\prime) - \widetilde{\mathbb V}_{\boldsymbol A_n}(\tau,\tau')\right| = O_P\left(\bar{r}_{\mathbb{V}, n} \right),
\]
where $\bar{r}_{\mathbb{V}, n} = r_{f,n} \vee n^{-1/4}$. 
\end{lemma}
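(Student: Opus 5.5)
We split the error into a feasibility part and a sampling part. Define the infeasible estimator
\[
\widetilde{\mathbb V}_n(\tau,\tau') \equiv \frac1n \widetilde{\boldsymbol\psi}(\tau)^\top \boldsymbol\Omega_n \widetilde{\boldsymbol\psi}(\tau'),
\]
built from the true scores $\widetilde\psi_i$ of \eqref{eq:influence_function}. Then
\[
\sup_{\tau,\tau'}\bigl|\widehat{\mathbb V}_n-\widetilde{\mathbb V}_{\boldsymbol A_n}\bigr| \le \sup\bigl|\widehat{\mathbb V}_n-\widetilde{\mathbb V}_n\bigr| + \sup\bigl|\widetilde{\mathbb V}_n-\widetilde{\mathbb V}_{\boldsymbol A_n}\bigr|,
\]
and we bound each term conditionally on $\boldsymbol A_n$. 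Two facts are used throughout. First, by \cref{ass:degree}, each row of $\boldsymbol\Omega_n$ has at most $1+K_{\max}+K_{\max}^2$ nonzero entries. Second, by \cref{assumption:overlap}, $\pi_n(w^\dagger)$ is bounded away from zero, and by \cref{assumption: density}, $f_{w^\dagger}(q_{w^\dagger}(\tau))$ is also bounded away from zero. Hence $|\widetilde\psi_i(\tau)|\le C$ uniformly.

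\textbf{Sampling term.} We have $\widetilde{\mathbb V}_n-\widetilde{\mathbb V}_{\boldsymbol A_n}=n^{-1}\sum_i\sum_{j:\ell(i,j)\le2}\bigl(\widetilde\psi_i(\tau)\widetilde\psi_j(\tau')-\mathbb E_{\boldsymbol A_n}[\cdot]\bigr)$. Write the summand as $g_i(\tau,\tau')$. Two facts control it.
\begin{itemize}
\item Conditionally on $\boldsymbol A_n$, $g_i$ depends only on $(U_k,D_k)$ for $k$ within distance $3$ of $i$, because $T_j$ involves the neighbors of $j$. By \cref{assumption:local dependency,assumption:RCT}, $g_i$ and $g_{i'}$ are therefore independent once $\ell(i,i')$ exceeds a fixed constant.
\item The number of such dependent pairs is $O(n)$, since degrees are bounded.
\end{itemize}
Consequently $\mathrm{Var}_{\boldsymbol A_n}(n^{-1}\sum_i g_i)=O(n^{-1})$ pointwise.

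For uniformity, note that $g_i$ is a bounded product of indicator functions $\mathds 1\{Y_i\le q_w(\tau)\}$ with factors that are Lipschitz in $\tau$. This follows because $1/f_w(q_w(\tau))$ and $q_w(\cdot)$ are Lipschitz on $\mathcal T$ by \cref{assumption: density}. We take a grid of $\mathcal T\times\mathcal T$ with mesh $n^{-1/2}$. On the grid we apply a union bound, using an exponential inequality for sums with a bounded dependency graph (Janson's inequality). Between grid points we bracket the indicators by monotonicity in $\tau$ and bound the expected bracket size by the density bound. This yields $O_P(\sqrt{\log n/n})=o_P(n^{-1/4})$, which suffices.

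\textbf{Feasibility term.} This is the main obstacle. Write $\widehat\psi_i-\widetilde\psi_i=\Delta_i(\tau)$ and expand the bilinear form. Boundedness of $\widetilde\psi$ together with the row-sparsity of $\boldsymbol\Omega_n$ gives
\[
\sup\bigl|\widehat{\mathbb V}_n-\widetilde{\mathbb V}_n\bigr|\lesssim \sup_\tau \frac1n\sum_i|\Delta_i(\tau)| + \sup_\tau\frac1n\sum_i\Delta_i(\tau)^2 .
\]
The denominator errors are handled as follows.
\begin{itemize}
\item $|\widehat\pi(w^\dagger)-\pi_n(w^\dagger)|=O_P(n^{-1/2})$, by the same dependency-graph variance bound.
\item $|\widehat f_{w^\dagger}(\widehat q_{w^\dagger}(\tau))-f_{w^\dagger}(q_{w^\dagger}(\tau))|\le O_P(r_{f,n})+L\sup_\tau|\widehat q_{w^\dagger}-q_{w^\dagger}|$, using \cref{assumption:density_est_convergence} and Lipschitz continuity. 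This requires $\widehat q_{w^\dagger}(\tau)\in[a,b]$ with probability approaching one, which we slightly enlarge if needed.
\item By \cref{lemma:Bahadur representation} and boundedness of the linear term, $\sup_\tau|\widehat q_{w^\dagger}-q_{w^\dagger}|=O_P(n^{-1/2}\sqrt{\log n})$.
\end{itemize}

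For the numerator, the key quantity is $\sup_\tau n^{-1}\sum_i\mathds 1_i(w^\dagger)\bigl|\mathds 1\{Y_i\le\widehat q(\tau)\}-\mathds 1\{Y_i\le q(\tau)\}\bigr|$. We bound it by $\sup_\tau\sup_{|h|\le\epsilon_n} n^{-1}\sum_i\mathds 1_i(w^\dagger)\mathds 1\{|Y_i-q(\tau)|\le |h|\}$ with $\epsilon_n\asymp n^{-1/2}\sqrt{\log n}$. Its mean is $O(\epsilon_n)$ by the bounded density. Its fluctuation is controlled by the same grid-plus-dependency-graph argument: the variance of a local average of indicators over a window of width $\epsilon_n$ is $O(\epsilon_n/n)$. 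This produces an $O_P\bigl(\epsilon_n+\sqrt{\epsilon_n\log n/n}\bigr)$ bound. To stay safe we may use the cruder rate from \cref{lemma:Bahadur representation}, namely $\sup|\widehat q-q|=O_P(n^{-1/2}+n^{-5/8}(\log n)^{1/4})$, which gives at worst $O_P(n^{-1/4})$ after the modulus-of-continuity step.

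Since $\Delta_i$ is bounded, the quadratic term is dominated by the linear term. Combining the two parts gives $O_P(r_{f,n}\vee n^{-1/4})$.
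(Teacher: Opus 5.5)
Your proposal is correct. It uses the same split as the paper, $\widehat{\mathbb V}_n-\widetilde{\mathbb V}_n$ plus $\widetilde{\mathbb V}_n-\widetilde{\mathbb V}_{\boldsymbol A_n}$, but handles each piece differently.

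\emph{Sampling term.} The paper builds a dependency graph on the pairs $(i,j)$ with $\ell_{\boldsymbol A_n}(i,j)\le 2$. It colors this graph into boundedly many classes of independent pair-variables and applies a VC maximal inequality within each class, obtaining $O_P(n^{-1/2})$ (\cref{lemma:Oracle_variance}). Your grid-plus-Janson argument with monotone bracketing gives $O_P(\sqrt{\log n/n})$. That is equally sufficient and avoids entropy integrals. You do, however, need the bracket counts $\max_k n^{-1}\sum_i\mathds 1\{\tau_k<U_i\le\tau_{k+1}\}$ controlled uniformly in probability, not only in expectation. The same exponential inequality, or a DKW bound as in \cref{lemma: ULLN_U}, supplies this.

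\emph{Feasibility term.} This is the more substantive difference.
\begin{itemize}
\item The paper uses Cauchy--Schwarz with $\|\boldsymbol\Omega_n\|_{\mathrm{op}}\le 1+K_{\max}^2$, so it needs the empirical $L_2$ norm of $\widehat\psi_i-\widetilde\psi_i$ (\cref{lemma: score_LLN}). The indicator discrepancy $|\mathds 1\{Y_i\le\widehat q_w(\tau)\}-\mathds 1\{Y_i\le q_w(\tau)\}|$ has empirical mean $O_P(n^{-1/2})$, so its $L_2$ norm is $O_P(n^{-1/4})$. That square root is exactly where the $n^{-1/4}$ in $\bar r_{\mathbb V,n}$ comes from.
\item You use boundedness of $\widetilde\psi_j$ (and of $\widehat\psi_j$ with probability approaching one) together with the row sparsity of $\boldsymbol\Omega_n$. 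This reduces everything to the empirical $L_1$ norm, which is $O_P(r_{f,n}+n^{-1/2})$ up to a logarithmic factor.
\end{itemize}
So your route actually proves the sharper rate $r_{f,n}\vee n^{-1/2}$ (up to logs). That would relax the condition $\bar r_{\mathbb V,n}\log^2 p_n=o(1)$ in \cref{theorem: feasible_gaussian_approximation}. The paper's route is simply shorter given its auxiliary lemmas.

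Your closing remark about falling back on a ``cruder'' quantile rate that gives ``at worst $O_P(n^{-1/4})$'' is unnecessary and muddled. First, $n^{-1/2}+n^{-5/8}(\log n)^{1/4}=O(n^{-1/2})$. Second, no $n^{-1/4}$ arises anywhere in your $L_1$ argument.

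Both proofs tacitly need $r_{f,n}=o(1)$ and $\widehat q_{w^\dagger}(\tau)\in[a,b]$ with probability approaching one. You are right to flag the latter. The clean fix is to assume the quantile range lies in the interior of $[a,b]$, rather than to enlarge an interval that the assumption fixes.
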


\begin{assumption}\label{assumption: smallest_variance_diagonal}
Let $\sigma_n^2(\tau) = \widetilde{\mathbb V}_{\boldsymbol A_n} (\tau,\tau)$.
There are constants $c_\sigma, c_\sigma^\prime >0$ such that
 $c_\sigma \leq \sigma_n(\tau) \leq  c_\sigma^\prime$ for all $\tau \in \mathcal{T}$.
\end{assumption}

Recall that \(\bar{r}_{\mathbb{V}, n} \) is defined in
\cref{lemma: feasible_variance_convergence}, and define
\[
\eta_n
=
n^{-1/8}(\log n)^{1/4}
+
\sqrt{\delta_n\log(e/\delta_n)}
+
n^{-1/2}\log(e/\delta_n)
+
\delta_n ,
\]
where $\delta_n$ is the mesh size of the grid $\mathcal T_n$. The following theorem shows that the conditional distribution of the
Gaussian maximum $\widehat{T}_n^\star$ approximates that of the studentized
supremum statistic.

\begin{theorem}
\label{theorem: feasible_gaussian_approximation}
Suppose  \cref{ass:degree,assumption: density,assumption:local dependency,assumption:density_est_convergence,assumption: smallest_variance_diagonal} hold. Further assume that $\delta_n = o(1)$,  $\left(\log p_n\right)^7/n\to0$, and $
\bar{r}_{\mathbb{V}, n} \log^2 p_n + \eta_n \sqrt{\log p_n} = o(1)$. Then, conditional on \(\{ \boldsymbol{A}_n \}_{n\ge1}\),
\[
\sup_{s\in\mathbb R}
\left|
\mathbb P_{\boldsymbol A_n}
\left[
\sup_{\tau\in\mathcal T}
\left|
\frac{\sqrt n\left(\widehat q(\tau)-q(\tau)\right) }
{\widehat\sigma_n(\tau)}
\right|
\le s
\right] - \mathbb P_\ast
\big(
\widehat{T}_n^\star 
\le s
\big)
\right|
=o_P(1),
\]
where \(\mathbb P_\ast\) denotes probability conditional on $\left(Y_i, W_i\right)_{i=1}^n$ and $\boldsymbol A_n$.
\end{theorem}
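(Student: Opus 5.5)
The proof follows the usual chain of couplings. Conditional on $\boldsymbol A_n$, we move from the studentized supremum statistic to a high-dimensional maximum over the grid $\mathcal T_n$, then to a Gaussian maximum with the true correlation matrix, and finally to the Gaussian maximum with covariance $\widehat{\boldsymbol\Gamma}_n^+$. Each link is converted into a Kolmogorov-distance bound by the Gaussian anti-concentration inequality of \citet{chernozhukov2014anti}. The target comparison variable is $T_n^0=\max_{1\le j\le p_n}|Z_n(\tau_j)|$ with $\boldsymbol Z_n\sim \mathrm N(0,\boldsymbol\Gamma_n)$, where $\boldsymbol\Gamma_n$ is the correlation matrix with entries $\widetilde{\mathbb V}_{\boldsymbol A_n}(\tau_i,\tau_j)/(\sigma_n(\tau_i)\sigma_n(\tau_j))$. \cref{assumption: smallest_variance_diagonal} makes the unit variances legitimate. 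Anti-concentration gives $\sup_s\mathbb P(|T_n^0-s|\le\epsilon)\lesssim \epsilon\sqrt{\log p_n}$.

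\textbf{Step 1 (linearization and discretization).} By \cref{lemma:Bahadur representation}, the numerator differs uniformly from $S_n(\tau)=n^{-1/2}\sum_i\widetilde\psi_i(\tau)$ by $O_P(n^{-1/8}(\log n)^{1/4})$. \cref{lemma: feasible_variance_convergence} together with \cref{assumption: smallest_variance_diagonal} gives $\sup_\tau|\widehat\sigma_n(\tau)/\sigma_n(\tau)-1|=O_P(\bar r_{\mathbb V,n})$. Since $\sup_\tau|S_n(\tau)|=O_P(\sqrt{\log p_n})$, replacing $\widehat\sigma_n$ by $\sigma_n$ costs $O_P(\bar r_{\mathbb V,n}\sqrt{\log p_n})$. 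Next, I would replace the supremum over $\mathcal T$ by the maximum over $\mathcal T_n$. The process $S_n(\tau)/\sigma_n(\tau)$ has increments over $\tau$-distance $\delta_n$ of order $\sqrt{\delta_n\log(e/\delta_n)}+n^{-1/2}\log(e/\delta_n)+\delta_n$. I would prove this with a Bernstein-type maximal inequality for the VC class $\{\mathds 1\{y\le q_w(\tau)\}\}$ under the dependency graph of radius $2$, whose degree is bounded by $K_{\max}^2$. For this, split the units into $O(K_{\max}^4)$ colour classes of mutually independent summands via a proper colouring of the square of the distance-2 graph. The $\delta_n$ term comes from the Lipschitz dependence of $\tau/f_w(q_w(\tau))$ and of $\sigma_n$. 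Altogether the statistic equals $\max_j|S_n(\tau_j)/\sigma_n(\tau_j)|+O_P(\eta_n+\bar r_{\mathbb V,n}\sqrt{\log p_n})$.

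\textbf{Step 2 (Gaussian approximation under local dependence).} Consider the vector $X_i=(\widetilde\psi_i(\tau_j)/\sigma_n(\tau_j))_{j\le p_n}$. These vectors are bounded and mean zero given $\boldsymbol A_n$, and they have a dependency graph with bounded maximal degree. I would invoke the high-dimensional CLT for sums with dependency graphs of \citet{chang2024central} (or \citet{fang2021high}). It yields $\sup_s|\mathbb P_{\boldsymbol A_n}(\max_j|n^{-1/2}\sum_iX_{ij}|\le s)-\mathbb P(T_n^0\le s)|\lesssim(\log^7 p_n/n)^{1/6}$ up to constants depending on $K_{\max}$, which is $o(1)$ by $(\log p_n)^7/n\to0$. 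Because the covariance of $n^{-1/2}\sum_iX_i$ is exactly $\boldsymbol\Gamma_n$, no limit kernel is needed. This is precisely the step where the stabilization assumptions are avoided.

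\textbf{Step 3 (bootstrap side).} By \cref{lemma: feasible_variance_convergence} and the bounded-away variances, $\|\widehat{\boldsymbol\Gamma}_n-\boldsymbol\Gamma_n\|_{\max}=O_P(\bar r_{\mathbb V,n})$. Since $\boldsymbol\Gamma_n$ is feasible for the program \eqref{eq: PS_programming}, the minimizer satisfies $\|\widehat{\boldsymbol\Gamma}_n^+-\boldsymbol\Gamma_n\|_{\max}\le 2\|\widehat{\boldsymbol\Gamma}_n-\boldsymbol\Gamma_n\|_{\max}$. The Gaussian comparison inequality of \citet{chernozhukov2013gaussian} then bounds the Kolmogorov distance between $\widehat T_n^\star$ and $T_n^0$ by $C(\bar r_{\mathbb V,n}\log^2p_n)^{1/3}=o_P(1)$. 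Combining Steps 1–3 via anti-concentration with $\epsilon_n$ chosen slightly larger than the $O_P$ remainder, the condition $\eta_n\sqrt{\log p_n}=o(1)$ absorbs the Step 1 error. The triangle inequality then yields the claim.

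\textbf{Main obstacle.} I expect the obstacle to be the uniform increment (modulus-of-continuity) bound in Step 1 under network dependence with $\tau$-indexed indicators. The colouring decomposition reduces it to independent non-identically distributed sums, where a chaining/Bernstein bound gives the stated $\eta_n$ terms. The delicate part is keeping the $\delta_n$-dependent variance factor $\sqrt{\delta_n}$ explicit rather than using a crude bound.
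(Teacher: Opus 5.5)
Your proposal is correct and follows essentially the same route as the paper: the same three comparisons, glued together by anti-concentration with $\epsilon_n$ dominating $\eta_n+\bar r_{\mathbb V,n}\sqrt{\log p_n}$. These are the studentized supremum versus the oracle grid maximum (Bahadur representation, a variance-ratio swap costing $O_P(\bar r_{\mathbb V,n}\sqrt{\log p_n})$, and the colouring-based grid modulus bound); the grid maximum versus $\mathrm N(0,\boldsymbol\Gamma_n)$ via Theorem 2 of \citet{chang2024central}; and $\widehat{\boldsymbol\Gamma}_n^+$ versus $\boldsymbol\Gamma_n$ via the factor-$2$ projection argument and a max-norm Gaussian comparison. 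The only cosmetic difference is that you colour the square of the distance-2 graph, whereas a proper colouring of $\boldsymbol A_n^{(2)}$ itself already gives mutually independent classes, as the paper uses.
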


\cref{theorem: feasible_gaussian_approximation} immediately yields the
asymptotic validity of the UCB constructed in
\cref{subsec:UCB}.

\section{Numerical Examples}\label{section: Numerical_Examples}

\subsection{Monte Carlo}

In this section, we examine the finite sample performance of the proposed estimators.  Treatment is independently assigned according to $D_{i} \sim \text{Bern}\left(0.5\right)$. 
The network $\boldsymbol{A}_n$ is generated following \cite{leung2020treatment}, with the parameter $\kappa$ set to $1.55$ to yield an expected degree of 3 in the simulated network.

Let $\left\{\nu_i \right\}_{i=1}^n$ be \text{i.i.d.} random variables drawn from $\mathrm{N}\left(0,1/4\right)$, and $V_i = \nu_i + \vert{}N_i\vert{}^{-1/2} \sum_{j=1}^{n} A_{ij} \nu_j$. We then set $U_i = \Phi\big(\sqrt{2}\, V_i \big)$, where $\Phi(\cdot)$ denotes the standard normal cdf, so that $U_i$ is marginally uniform on $(0, 1)$. The observed outcome $Y_i$ is generated by
\[
Y_i = \beta_0\left(U_i\right) + \beta_{\mathrm{D}}\left(U_i\right) D_i + \beta_{\mathrm{M}}\left(U_i\right) M_i,
\]
where $M_i = T_{i}/|N_i|$ is the fraction of treated neighbors, and $\beta(\tau) \equiv \left(\beta_0(\tau), \beta_{\mathrm{D}}(\tau), \beta_{\mathrm{M}}(\tau) \right)$ is given by
\[
\beta_0(\tau) = \Phi^{-1}(\tau)/\sqrt{2}  , \quad
\beta_{\mathrm{D}}(\tau) = \Phi\left(  \beta_0(\tau) \right),\quad
\beta_{\mathrm{M}}(\tau) = \frac{1.5 \exp\left( \beta_0(\tau)  \right)}{1 + \exp\left( \beta_0(\tau)   \right)}.
\]

\paragraph{Implementation.} We estimate and perform uniform inference for the quantile spillover effect $q(\tau) = q_w(\tau) -  q_{w^{\prime}}(\tau)$ over $\mathcal{T} = [0.1,0.9]$. Specifically, as described in \cref{subsec:nonparametric-QTE}, we compute the estimators $\widehat{q}\left(\tau\right)$ over a grid of $p_n = 30$ equally spaced quantile levels, i.e., $0.1 = \tau_0 < \tau_1 < \ldots < \tau_{p_n} = 0.9$. Following Section 2.2 of \cite{belloni2019valid}, the estimates $\widehat{f}_{w^{\dagger}}(\widehat{q}_{w^{\dagger}}(\tau_k))$ are computed as
\begin{equation*}
    \widehat{f}_{w^{\dagger}}(\widehat{q}_{w^{\dagger}}(\tau_k)) = \frac{\tau_k^+ - \tau_k^-}{\widehat{q}_{w^{\dagger}}(\tau_k^+) - \widehat{q}_{w^{\dagger}}(\tau_k^-)},
\end{equation*}
where $\tau_k^+ = \min\{\tau_k + h_n, 0.9\}$, $\tau_k^- = \max\{\tau_k - h_n, 0.1\}$, and  $h_n \propto n^{-1/5}$. We construct our UCBs for $\left\{q(\tau) : \tau \in \mathcal{T} \right\}$ following the procedure outlined in Section \ref{subsec:UCB}, with critical values computed via $B = 200$ replications.

In order to provide more details on the quantile effect estimates, we report 95\% pointwise confidence intervals (CIs), the UCB developed in \cref{subsec:UCB}, and a UCB based on the network-dependent wild bootstrap. First, given  the estimated covariance kernel $\widehat{\mathbb{V}}_n\left(\cdot, \cdot\right)$, the pointwise CI for $q(\tau)$ is constructed by 
\[
\left[\widehat{q}\left(\tau\right) - \frac{1.96\cdot \widehat{\sigma}_n(\tau)}{\sqrt{n}} , \ \widehat{q}\left(\tau\right) + \frac{1.96\cdot \widehat{\sigma}_n(\tau)}{\sqrt{n}} \right],
\]
where $\widehat{\sigma}_n^2(\tau) = \widehat{\mathbb{V}}_n(\tau, \tau)$.

Alternatively, we implement a network-dependent wild 
bootstrap following \cite{shao2010dependent}, \cite{conley2023bootstrap} and \cite{kojevnikov2021bootstrap}, referred to as ``DWB'' hereafter. This procedure introduces auxiliary multipliers whose covariance structure is designed to mimic the network dependence of the original data. The resulting UCB follows the similar construction as that in \cref{subsec:UCB}, using the same point estimates and pointwise standard errors, but differs in how the critical value is computed. Recall that the kernel matrix $\boldsymbol{\Omega}_n $  with 
	\(
	\boldsymbol{\Omega}_n(i,j) = \mathds{1}\{\ell_{\boldsymbol{A}_n}(i,j) \le 2 \}
	\), which may not be PSD. Following \cite{gao2025causal}, we adjust $\boldsymbol{\Omega}_n$  by truncating the negative eigenvalues. Specifically,    let $\boldsymbol{Q}_n \boldsymbol{\Phi}_n \boldsymbol{Q}_n^{\top}$ be the eigen-decomposition of $\boldsymbol{\Omega}_n$, and $\boldsymbol{\Omega}^+_n = \boldsymbol{Q}_n \max\{\boldsymbol{\Phi}_n,0\} \boldsymbol{Q}_n^{\top}$ be the eigenvalue-truncated version of $\boldsymbol{\Omega}_n$. Let $\boldsymbol{\xi}_n = \left(\xi_1,\ldots, \xi_n \right) \sim \mathrm{N}( 0, \boldsymbol{\Omega}_n^+ )$, and the bootstrap score process is defined as $ \widehat{ \mathbb{G}}_n^\star(\tau)
    =
    \frac{1}{\sqrt n}
    \sum_{i=1}^n
    \xi_i\widehat\psi_i(\tau)$,
where $\widehat\psi_i(\tau)$ is defined in \cref{eq: feasible_score_UCB}. Define \[
\widehat T_n^{\mathrm{DWB}}
    =
    \sup_{\tau\in\mathcal T_n}
    \left|
        \frac{\widehat{ \mathbb{G}}_n^\star(\tau) }
        {\widehat\sigma_n(\tau)}
    \right|,
\]
where $\mathcal{T}_n = \left\{  \tau_k: 1\leq k \leq p_n \right\}$.    Letting  $c_{1-\alpha}^{\star}$ denote the empirical $(1-\alpha)$-quantile of $\widehat{T}_n^{\mathrm{DWB}}$, the resulting uniform confidence band is given by
\[
    \tau
    \mapsto
    \left[
        \widehat q(\tau)
        -
        \frac{ c_{1-\alpha}^{\star} \widehat\sigma_n(\tau)}{\sqrt n},
        \ 
        \widehat q(\tau)
        +
        \frac{        c_{1-\alpha}^{\star}\widehat\sigma_n(\tau)}{\sqrt n}
    \right].
\]

\paragraph{Results.} We conduct 3,000 Monte Carlo (MC) replications and implement the proposed estimation and uniform inference procedures for
\( \left\{q(\tau): \tau \in [0.1,0.9]\right\}\).
\cref{fig:ucb_plot} plots the estimated quantile spillover-effect functions together with the true effect curves, 95\% pointwise CIs, and the 95\% UCBs proposed in \cref{subsec:UCB}. The three panels correspond to the quantile-effect functions $q(\tau) = q_w( \tau) - q_{w^\prime}(\tau)$  for $w^\prime=(0,0,3)$ with $w=(0,1,3)$, $(0,2,3)$, and $(0,3,3)$, respectively. As illustrated in \cref{fig:ucb_plot}, the true quantile-effect functions  are seen to lie inside our UCBs.\footnote{To avoid confirmation bias in the visual presentation, the empirical realization plotted in each sub-panel is independently selected by minimizing the absolute deviation of its mean integrated squared error (MISE) from the median MISE across 300 independent replications with $n=3,000$.} The results of this MC experiment are presented in \cref{tab:simultaneous_coverage_results}. The pointwise  CIs substantially under-cover the entire quantile-effect functions. While the UCBs constructed using the network-dependent wild bootstrap (DWB) tend to be conservative, our proposed UCBs achieve coverage rates that are much closer to the nominal levels, especially in larger samples. 

\begin{figure}[htbp]
    \centering
    \includegraphics[width=\textwidth]{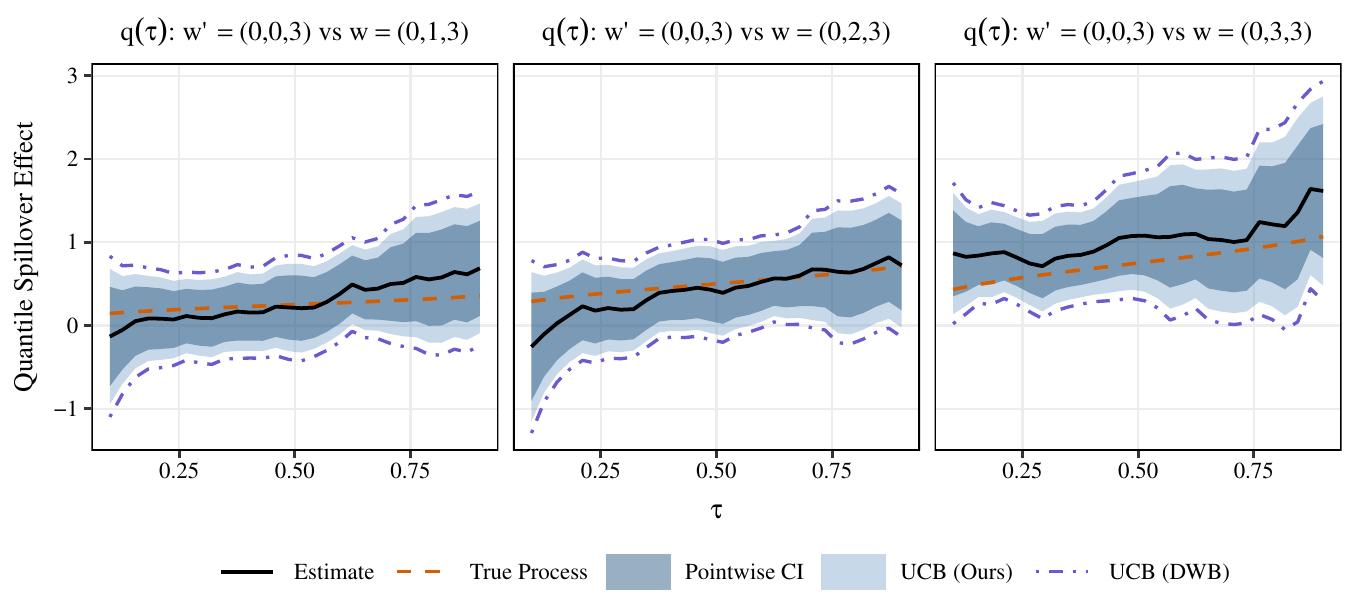}
    \caption{Estimated quantile spillover-effect functions for $n=3{,}000$, together with 95\% pointwise CIs and 95\% UCBs constructed using our  method and DWB. The three panels correspond to $w=(0,1,3)$, $(0,2,3)$, and $(0,3,3)$, respectively, with $w'=(0,0,3)$ serving as the baseline exposure state.}  
    \label{fig:ucb_plot}
\end{figure}

\begin{table}[htbp]
    \centering
    \begin{threeparttable}
        \caption{Coverage probabilities of pointwise CIs and UCBs for quantile effects.}
        \label{tab:simultaneous_coverage_results}
        \small
        \setlength{\tabcolsep}{5.2pt}     
        \renewcommand{\arraystretch}{1.15} 
        \begin{tabular}{cc ccc ccc ccc}
            \toprule
            & & \multicolumn{3}{c}{90\% Cover} & \multicolumn{3}{c}{95\% Cover} & \multicolumn{3}{c}{99\% Cover} \\
            \cmidrule(lr){3-5} \cmidrule(lr){6-8} \cmidrule(lr){9-11}
            Comparison & $n$ & PW & DWB & Ours & PW & DWB & Ours & PW & DWB & Ours \\
            \midrule
            
            \multirow{3}{*}{\makecell[c]{$(0,1,3)$ \\ vs \\ $(0,0,3)$}} 
            & 1,000 & 0.574 & 0.935 & 0.878 & 0.718 & 0.959 & 0.921 & 0.897 & 0.983 & 0.961 \\
            & 2,000 & 0.600 & 0.966 & 0.911 & 0.750 & 0.983 & 0.948 & 0.923 & 0.996 & 0.983 \\
            & 3,000 & 0.596 & 0.966 & 0.914 & 0.740 & 0.987 & 0.948 & 0.923 & 0.996 & 0.988 \\
            \midrule
            
            \multirow{3}{*}{\makecell[c]{$(0,2,3)$ \\ vs \\ $  (0,0,3)$}} 
            & 1,000 & 0.594 & 0.937 & 0.881 & 0.730 & 0.965 & 0.926 & 0.902 & 0.987 & 0.970 \\
            & 2,000 & 0.607 & 0.963 & 0.914 & 0.762 & 0.980 & 0.948 & 0.928 & 0.993 & 0.982 \\
            & 3,000 & 0.579 & 0.970 & 0.918 & 0.747 & 0.985 & 0.955 & 0.928 & 0.994 & 0.986 \\
            \midrule
            
            \multirow{3}{*}{\makecell[c]{$(0,3,3)$ \\ vs \\ $(0,0,3)$}} 
            & 1,000 & 0.592 & 0.906 & 0.849 & 0.727 & 0.937 & 0.890 & 0.873 & 0.970 & 0.946 \\
            & 2,000 & 0.643 & 0.965 & 0.921 & 0.795 & 0.979 & 0.952 & 0.931 & 0.994 & 0.983 \\
            & 3,000 & 0.630 & 0.973 & 0.929 & 0.785 & 0.989 & 0.961 & 0.937 & 0.997 & 0.990 \\
            \bottomrule
        \end{tabular}
        \begin{tablenotes}[flushleft]
            \item \textit{Note.} PW, DWB, and Ours denote pointwise CIs, UCBs using dependent wild bootstrap, and our UCBs, respectively. Coverage probabilities are calculated over $3,000$ replications. The average effective sample sizes, $\sum_{i=1}^{n}\mathds{1}_i( w)$, for $w=(0,0,3)$, $(0,1,3)$, $(0,2,3)$ and $(0,3,3)$ are $11.4$, $33.9$, $34.0$, and $11.3$ when $n=1,000$. These effective sample sizes increase to $25.3$, $75.6$, $75.4$, and $25.1$ for $n=2,000$, and to $36.7$, $109.4$, $109.3$, and $36.3$ for $n=3,000$. 
        \end{tablenotes}
    \end{threeparttable}
\end{table}

\paragraph{Linear QR.} Given the simplicity and interpretability of linear QR,
we next examine the finite-sample performance of the estimators and UCBs for the linear QR coefficient. Details of the linear
QR and the uniform inference procedure under
network interference are provided in \cref{appendix: linear}. Specifically, we
consider a correctly specified linear model of the form

\[
q(d,t,l,\tau)
=
\beta_0(\tau)
+
\beta_{\mathrm{D}}(\tau)d
+
\beta_{\mathrm{M}}(\tau) (t/l),
\]
where $\beta_{\mathrm{D}}(\tau)$ and $\beta_{\mathrm{M}}(\tau)$ represent the direct and spillover quantile effect coefficients, respectively.

Our goal is to estimate and construct UCBs for $\beta_{\mathrm{D}}(\tau)$ and $\beta_{\mathrm{M}}(\tau)$ over $\mathcal{T} = [0.1, 0.9]$. Further implementation details are provided in \cref{appendix: linear}. Following \cite{powell1986censored} and \cite{angrist2006quantile}, we estimate the Jacobian matrix $J(\tau)$ as $\widehat{J}_n(\tau) = (2 n h_n)^{-1} \sum_{i=1}^n \mathds{1}\{ \vert{}Y_i - X_i^\top \widehat{\beta}(\tau)\vert{} \leq h_n \} X_i X_i^\top$, where $X_i = (1, D_i, M_i)^\top$ and the bandwidth $h_n(\tau) \propto n^{-1/5}$ as suggested by  \cite{bofinger1975estimation}.

\cref{fig:ucb_coef_plots} plots the estimated linear QR coefficients $\widehat{\beta}_{\mathrm{D}}(\tau)$ and $\widehat{\beta}_{\mathrm{M}}(\tau)$ alongside their true counterparts $\beta_{\mathrm{D}}(\tau)$ and $\beta_{\mathrm{M}}(\tau)$, along with the 95\% pointwise CIs and 95\% UCBs for sample size $n=1,000$. \cref{tab:linear_coverage_rates} summarizes the MC simulation results evaluated over 3,000 replications. While the pointwise CIs exhibit severe under-coverage and the DWB UCBs tend to be conservative, our proposed UCBs deliver coverage rates that are much closer to the nominal levels across all sample sizes and significance levels.

\begin{figure}[htbp]
	\centering
    \includegraphics[width=0.84\textwidth]{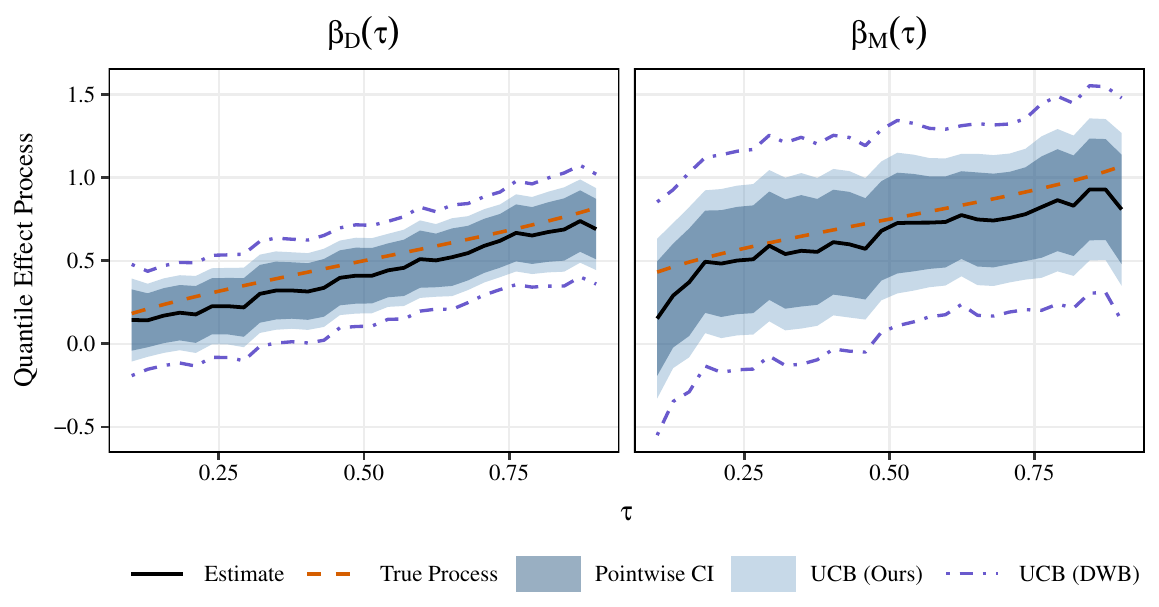}
	\caption{Estimated linear QR coefficients for $n=1,000$, together with 95\% pointwise CIs and 95\% UCBs constructed using our proposed method and DWB.
    }
	\label{fig:ucb_coef_plots}
\end{figure}

\begin{table}[htbp]
	\centering
    \begin{threeparttable}
        \caption{Coverage probabilities of pointwise CIs and UCBs for linear QR.}
        \label{tab:linear_coverage_rates}
        \small
        \setlength{\tabcolsep}{5.2pt}    
        \renewcommand{\arraystretch}{1.15}
        \begin{tabular}{cc ccc ccc ccc}
            \toprule
            & & \multicolumn{3}{c}{90\% Cover} & \multicolumn{3}{c}{95\% Cover} & \multicolumn{3}{c}{99\% Cover} \\
            \cmidrule(lr){3-5} \cmidrule(lr){6-8} \cmidrule(lr){9-11}
            Coef. & $n$ & PW & DWB & Ours & PW & DWB & Ours & PW & DWB & Ours \\
            \midrule

            \multirow{3}{*}{$\beta_{\mathrm{D}}(\cdot)$} 
            & 1,000 & 0.437 & 0.953 & 0.883 & 0.633 & 0.980 & 0.934 & 0.888 & 0.995 & 0.984 \\
            & 2,000 & 0.453 & 0.959 & 0.890 & 0.649 & 0.983 & 0.942 & 0.898 & 0.997 & 0.985 \\
            & 3,000 & 0.449 & 0.965 & 0.898 & 0.661 & 0.985 & 0.947 & 0.904 & 0.998 & 0.988 \\
            \midrule
            
            \multirow{3}{*}{$\beta_{\mathrm{M}}(\cdot)$} 
            & 1,000 & 0.521 & 0.925 & 0.883 & 0.693 & 0.964 & 0.932 & 0.904 & 0.993 & 0.980 \\
            & 2,000 & 0.523 & 0.941 & 0.897 & 0.708 & 0.970 & 0.945 & 0.922 & 0.993 & 0.985 \\
            & 3,000 & 0.526 & 0.936 & 0.895 & 0.710 & 0.965 & 0.941 & 0.916 & 0.993 & 0.982 \\
            \bottomrule
        \end{tabular}
\begin{tablenotes}[flushleft]
    \item \textit{Note.} PW, DWB, and Ours denote pointwise CIs, UCBs based on the dependent wild bootstrap, and our proposed UCBs, respectively. Coverage probabilities are estimated using $3{,}000$ Monte Carlo replications.
\end{tablenotes} 
    \end{threeparttable}
\end{table}

\subsection{Empirical Application}

To illustrate our methods, we revisit the randomized savings-account experiment studied by \cite{prina2015banking} and \cite{comola2021treatment}. In this experiment, female household heads in 19 slums surrounding Pokhara, Nepal's second largest city, were randomly offered fee-free savings accounts through public lotteries conducted within each community. The dataset consists of 915 households together with the network of regular financial support among these households. The network is constructed from survey responses on repeated financial exchanges: two households are linked if either household identifies a member of the other household as a regular financial-support partner. The resulting network is highly sparse, containing a total of $113$ undirected links and exhibiting a network density of $2.71 \times 10^{-4}$. Node degrees range from $0$ to $4$, with an average degree of $0.25$. 

We focus on three household outcomes: monetary assets, total assets, and education expenditures. For each original outcome $Y_i^{o}$, we use  $Y_i = \log(1+Y_i^{o})$  in our analysis. We estimate the following linear QR model:
\begin{equation}\label{empirical model2}
Q_\tau (Y_i\vert{}X_i) = \beta_0(\tau) + \beta_{\mathrm{D}}(\tau)D_i + \beta_{\mathrm{M}}(\tau) M_i + C_i^{\top}\gamma(\tau),    
\end{equation}
where  $X_i = (D_i, M_i, C_i)$, $D_i$ denotes the binary treatment whether household $i$ was offered a savings account, $M_i = T_i/|N_i|$ is the fraction of treated neighbors, and $C_i$ contains baseline household characteristics, village dummies, and a zero-neighbor indicator $\mathds{1}\{|N_i| = 0\}$ to properly account for isolated households. The parameter processes are evaluated over quantile domains $\mathcal{T}$: we set $\mathcal{T} = [0.10, 0.75]$ for monetary and total assets, and $\mathcal{T} = [0.40, 0.90]$ for education expenditures to avoid the point mass at zero concentrated at lower quantiles.

Figure \ref{fig:expenditure} reports the estimated coefficients $\beta_{\mathrm{D}}(\tau)$ and $\beta_{\mathrm{M}}(\tau)$, along with their 90\% pointwise CIs and UCBs constructed using the method in \cref{subsec:UCB}.
For monetary assets, the estimated  coefficient $\widehat{\beta}_{\mathrm{D}}(\tau)$ is positive  and generally declines with $\tau$. The 90\% UCB remains entirely above zero over
$\tau\in[0.1,0.75]$, suggesting that the offer of a savings account generated positive direct gains in monetary assets for households across the lower and middle quantiles. The estimated coefficients $\widehat{\beta}_{\mathrm{D}}(\tau)$ for total assets and educational expenditures exhibit a similar but less pronounced pattern, with positive estimates concentrated in the lower quantiles. Moreover, the 90\% UCBs for $\beta_{\mathrm{M}}(\tau)$ include zero
throughout the quantile ranges considered for all three outcomes, providing
no uniform evidence of spillover effects. The  wide bands reflect
substantial sampling uncertainty of
$\widehat{\beta}_{\mathrm{M}}(\tau)$, which may partly arise because many
households have no recorded financial-support links.

\begin{figure}
    \centering
    \includegraphics[width=\linewidth]{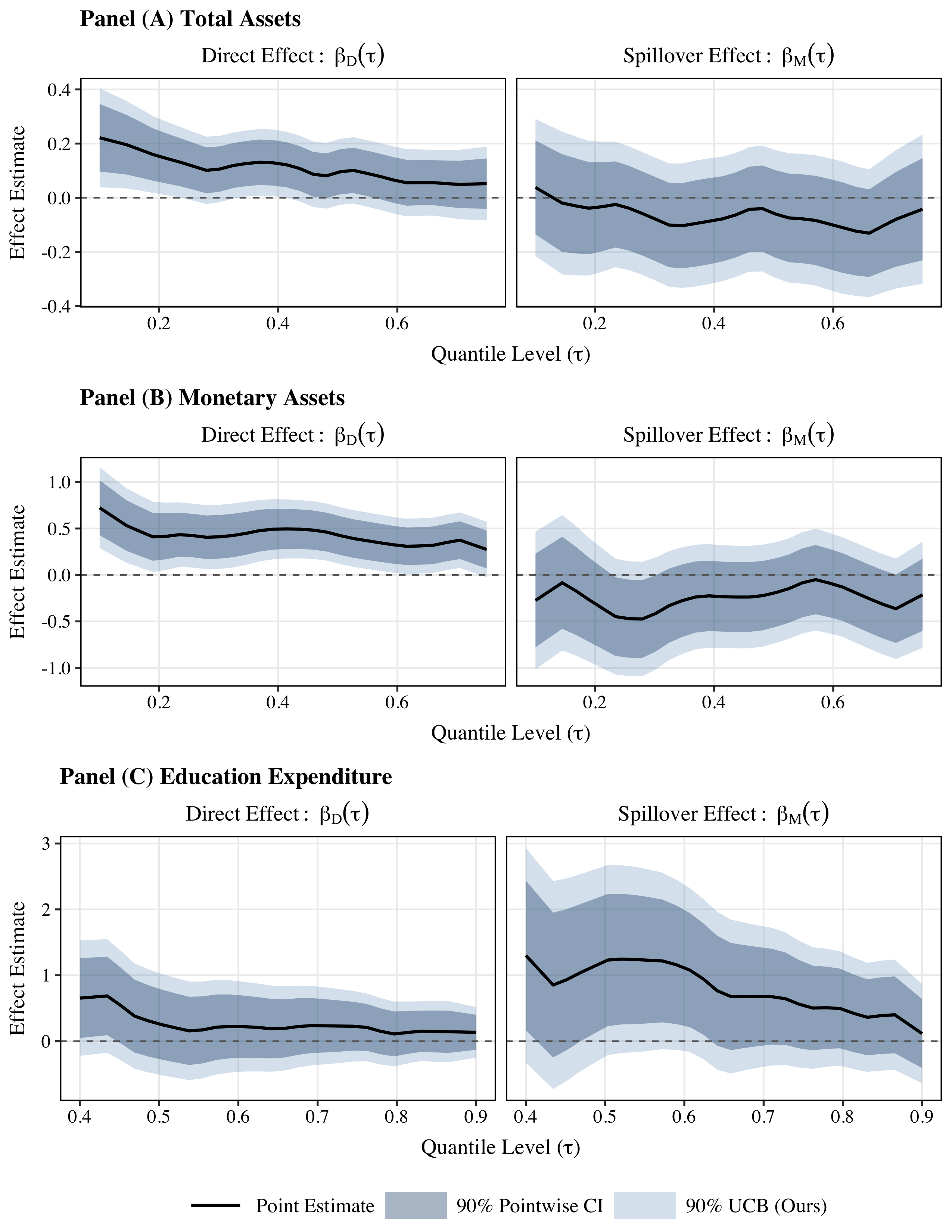}
    \caption{Estimated quantile regression coefficient processes with 90\% pointwise CI and UCBs.}
    \label{fig:expenditure}
\end{figure}


		

\section{Concluding Remarks}
This paper develops a framework for identifying and estimating quantile treatment and spillover effects in randomized experiments with network interference. Based on a structural quantile model, we construct QR estimators for these effects. Our main contribution is a novel uniform inference procedure using Gaussian approximations conditional on the realized network. By directly approximating the distribution of the studentized supremum statistic, we construct the UCBs without requiring the stringent stabilization conditions on the degree distribution and the network-dependent covariance structure typically needed for a fixed Gaussian limit.

Our analysis relies on Assumptions \ref{assumption:Interference} and \ref{assumption:local dependency}, that is, a correctly specified exposure mapping and conditional independence of the latent ranks for units whose network distance exceeds two. \cref{assumption:Interference} is needed for the causal interpretation. When the exposure mapping is misspecified, the QR estimator constructed in \cref{subsec:nonparametric-QTE} still compares conditional outcome quantiles between the specified exposure states, but these contrasts may no longer represent structural quantile effects. \cref{assumption:local dependency} is needed for the validity of our uniform inference in \cref{sec:dependent bootstrap}. If the outcome dependence extends beyond network distance two, the proposed UCBs  may be invalid.

However, \cref{assumption:local dependency} can be relaxed to allow for finite-range network dependence with any fixed radius, and our UCBs remain valid once the covariance-kernel estimator is adjusted to account for the expanded dependency graph. Another extension is to replace finite-range dependence with $\psi$-dependence on the QR process \citep{kojevnikov2021limit,leung2022causal}. Recent advances in high-dimensional Gaussian approximations for network-dependent data \citep{zhenggaussian2026} make this extension feasible. Finally, misspecification of the exposure mapping is a more fundamental issue, as it invalidates the structural interpretation of the estimand in our setting. Defining estimands that capture distributional heterogeneity robust to misspecified exposure mappings, alongside valid inference procedures, is left for future research.

\newpage

\appendix

\section{Linear QR under Network Interference}\label{appendix: linear}

In \cref{subsec:nonparametric-QTE}, we proposed a nonparametric estimator of the structural quantile function $q(w,\tau)$. This approach is flexible, but
the resulting estimates may have large finite-sample variance when some
exposure state cells contain relatively few observations.  To complement the
nonparametric approach, we therefore consider a parsimonious linear
specification, which may yield gains in finite-sample precision when correctly
specified.

Let $M_i=T_i/|N_i|$ denote the fraction of treated neighbors, with the
convention that $M_i=0$ when $|N_i|=0$. For an exposure state $w=(d,t,l)$, let
$m=t/l$ when $l>0$ and $m=0$ when $l=0$. For each $\tau\in\mathcal T$, suppose that the structural quantile function satisfies
\[
q(w,\tau)
=
\beta_0(\tau)
+
\beta_{\mathrm{D}}(\tau)d
+
\beta_{\mathrm{M}}(\tau)m.
\] 
Under this linear specification, $\beta_{\mathrm{D}}(\tau)$ is the quantile direct effect,
holding the fraction of treated neighbors fixed. For a unit with degree
$l>0$, one additional treated neighbor changes $m$ by $1/l$, so the
corresponding quantile spillover effect is $\beta_{\mathrm{M}}(\tau)/l$.

Let $X_i\equiv (1,D_i,M_i)^\top$ and
\({\beta}(\tau) = \bigl(\beta_0(\tau),\beta_{\mathrm{D}}(\tau),\beta_{\mathrm{M}}(\tau)\bigr)^\top \). The Koenker-Bassett regression estimator of $\beta(\tau)$ is 
\[
\widehat{\beta}(\tau)
\in
\mathop{\argmin}_{ b\in\mathbb R^3}
\sum_{i=1}^n
\rho_\tau\bigl(Y_i- X_i^\top b\bigr).
\]
We next construct a uniform confidence band for $\beta_{\mathrm{M}}(\tau)$. Let \(e_3=(0,0,1)\) selects the spillover-effect coefficient. The  estimation for $\beta_{\mathrm{D}}(\tau)$ is obtained by replacing $e_3$ with $e_2=(0,1,0)$ and is therefore omitted. By arguments analogous to those used in \cref{lemma:Bahadur representation}, the influence score for $\beta_{\mathrm{M}}(\tau)$ is
\begin{equation}\label{eq:infeasible score}
    \psi_{M,i}(\tau) = e_3 J(\tau)^{-1} X_i\left(\tau-\mathds 1\{Y_i\le X_i^\top \beta(\tau)\}\right),
\end{equation}
where $J(\tau)$ is the Jacobian matrix defined by
\[
J(\tau) = \lim_{n\to\infty} \frac{1}{n} \sum_{i=1}^n \mathbb{E} \left[  X_i X_i^\top f_{Y|X}\left(X_i^\top \beta(\tau) | X_i\right) \right]   .
\]
For $i\in[n]$, we estimate $\psi_{M,i}(\tau)$ by
\begin{equation}\label{eq:feasible score}
    \widehat{\psi}_{M,i}(\tau) = e_3 \widehat{J}_{n}(\tau)^{-1} X_i\left(\tau-\mathds 1\{Y_i\le X_i^\top \widehat{\beta}(\tau)\}\right).
\end{equation}
Following \citet{powell1986censored} and \citet{angrist2006quantile}, we estimate the Jacobian matrix $J(\tau)$ by
\[
\widehat{J}_n(\tau) = \frac{1}{2 n h_n} \sum_{i=1}^n \mathds{1}\{ \vert{}Y_i - X_i^\top \widehat{\beta}(\tau)\vert{} \leq h_n \} X_i X_i^\top,
\]
where $h_n$ is the corresponding bandwidth.

We then estimate the network-dependent covariance kernel for $\widehat{\beta}_{M}(\tau)$ by
\begin{equation}\label{eq:linear_variance_estimator_M}
\widehat{\mathbb{V}}_{M,n}(\tau, \tau^\prime) = \frac{1}{n}  \widehat{\boldsymbol{\psi}}_{M}(\tau)^\top  \boldsymbol{\Omega}_n \widehat{\boldsymbol{\psi}}_{M}(\tau^\prime),
\end{equation}
where $\widehat{\boldsymbol{\psi}}_{M}(\tau) = \bigl( \widehat{\psi}_{M,1}(\tau), \dots, \widehat{\psi}_{M,n}(\tau) \bigr)^\top$, and $\boldsymbol{\Omega}_n$ is a $n\times n$ matrix with 
	\(
	\boldsymbol{\Omega}_n(i,j) = \mathds{1}\{\ell_{\boldsymbol{A}_n}(i,j) \le 2 \}
	\).  A 100$(1-\alpha)$\% Gaussian bootstrap UCB for $\{ \beta_{M}(\tau): \tau \in \mathcal{T} \}$ is constructed as
\[
\tau \mapsto \left[ \widehat{\beta}_{M}(\tau ) -  \frac{z_{M,1-\alpha}^\star \widehat{\sigma}_{M,n}(\tau) }{\sqrt{n}}, \ \widehat{\beta}_{M}(\tau ) +  \frac{ z_{M,1-\alpha}^\star \widehat{\sigma}_{M,n}(\tau) }{\sqrt{n}}   \right],
\]
where $\widehat{\sigma}_{M,n}^2(\tau) = \widehat{\mathbb{V}}_{M,n}(\tau, \tau)$, and $z_{M,1-\alpha}^\star$ is a bootstrap-based critical value. It can be
computed by following the procedure in \cref{subsec:UCB}, replacing the
estimated covariance kernel $\widehat{\mathbb{V}}_n(\cdot,\cdot)$ with $\widehat{\mathbb{V}}_{M,n}(\cdot,\cdot)$ throughout.

\section{Proofs for Section \ref{sec:framework}}

\begin{proof}[Proof of \cref{prop:id}]

Fix $\tau\in(0,1)$ and a feasible exposure state $w=(d,t,l)$. Consider any
unit $i$ for which $w$ lies in the support of $W_i$. Suppose first that $U_i\mid W_i=w\sim \mathrm{Unif}(0,1)$, as established below. Then
\[
\begin{aligned}
\mathbb{P}\left[ Y_i \leq q(w, \tau) \mid W_i = w \right] & = \mathbb{P}\left[ q(W_i, U_i) \leq q(w, \tau) \mid W_i = w \right] \\
 & = \mathbb{P}\left[ q(w, U_i) \leq q(w, \tau) \mid W_i = w \right] \\ 
  & = \mathbb{P}[ U_i \leq \tau \mid W_i = w ] \\
  & = \tau.
\end{aligned}
\]
Since $\tau \mapsto q(w, \tau)$ is strictly increasing and continuous by \cref{assumption:mono}, the conditional distribution function of $Y_i \mid W_i = w$ is continuous and strictly increasing on its support. Hence, $q(w,\tau)$ is identified as the unique $\tau$-quantile of this conditional distribution.

Next, we show that $U_i \mid W_i=w \sim \mathrm{Unif}(0,1)$. For any $u \in (0,1)$, $l \in \mathbb{N}$, $\boldsymbol{d} \in \{0,1\}^n$, and $\boldsymbol{a} \in \{0,1\}^{n\times n}$, it follows that
\[
\begin{aligned}
& \mathbb{P}[U_i \leq u, \boldsymbol{A}_{n} =\boldsymbol{a}, \boldsymbol{D}_n = 
\boldsymbol{d} \mid |N_i| =l ]\\
= \ &  \mathbb{P}\left(\boldsymbol{D}_n = \boldsymbol{d} \right) \cdot \mathbb{P}[U_i \leq u, \boldsymbol{A}_{n} =\boldsymbol{a} \mid |N_i| =l ]  \\
= \ &  \mathbb{P}\left(\boldsymbol{D}_n = \boldsymbol{d} \right) \cdot \mathbb{P}[U_i \leq u \mid |N_i| =l ]  \cdot \mathbb{P}[\boldsymbol{A}_{n} =\boldsymbol{a} \mid |N_i| =l ]   \\
= \ &    \mathbb{P}[U_i \leq u \mid |N_i| =l ]  \cdot \mathbb{P}\left[\boldsymbol{D}_n = \boldsymbol{d},\boldsymbol{A}_{n} =\boldsymbol{a} \mid |N_i| =l \right].  \\
\end{aligned}
\]
The first and third equalities hold by \cref{assumption:RCT}, and the second equality follows from \cref{assumption:CE}. Thus, $U_i \indep \left(\boldsymbol{A}_{n}, \boldsymbol{D}_n \right) \mid |N_i|$. Since $(D_i,T_i)\in\sigma(\boldsymbol A_n,\boldsymbol D_n)$, for every $w=(d,t,l)$ in the support of $W_i$,
\[
\begin{aligned}
\mathds{P}\left[ U_i \leq u \mid  W_i = w \right] & =  \mathds{P}\left[ U_i \leq u \mid  D_i = d, T_i = t, |N_i| = l \right]  \\
&  = \mathds{P}\left[ U_i \leq u\mid  |N_i| = l \right] \\
& = u,
\end{aligned}
\]
where the last equality follows from \cref{assumption:nor}.
\end{proof}

\section{Proofs for Section \ref{subsec:nonparametric-QTE}} \label{appendix:proof for section 3}

\subsection{Preliminaries}\label{sec:preliminary}

In this subsection, we introduce notation that is essential for the proofs in Section 3. Let $\mathcal{A}_n$ denote the collection of $n \times n$ undirected adjacency matrices with zeros on the main diagonal. Since an undirected network is fully characterized by its adjacency matrix, we use the terms \textit{network, graph}, and \textit{adjacency matrix} interchangeably throughout.

\begin{definition}\label{def:proper_cover}
Given $\boldsymbol{A}\in\mathcal{A}_n$, a collection $\mathcal{C}_n=\{ \mathcal{C}_n(k): 1 \leq k \leq K\}$ of disjoint subsets of $[n] = \{1,\ldots,n\}$ is called a \textit{proper cover} of $\boldsymbol{A}$ if $\bigcup_{k=1}^K \mathcal{C}_n(k)=[n]$, and, for each $k$, $\boldsymbol{A}_{ij}=0$ for all distinct $i,j\in \mathcal{C}_n(k)$.
\end{definition}

\begin{definition}\label{def:2_hop}
For a given adjacency matrix $\boldsymbol{A}\in\mathcal{A}_n$, let $\boldsymbol{A}^{(2)}$ be defined such that $\boldsymbol{A}^{(2)}(i,i)=0$ for all $i$, and for $i \neq j$, $\boldsymbol{A}^{(2)}(i,j)=1$ if $i$ and $j$ are adjacent or share a common neighbor, and $0$ otherwise.
\end{definition}

\begin{definition}[Dependency graph]
A graph $\boldsymbol{G}$ with vertex set $[n]$ is called a dependency graph for the collection of random elements
$\{X_i\}_{i=1}^n$ if, for any two disjoint sets $S,T\subset[n]$ such that
no edge of $\boldsymbol{G}$ connects a vertex in $S$ to a vertex in $T$,
the $\sigma$-fields $\sigma(X_i:i\in S)$ and $\sigma(X_j:j\in T)$ are independent.
\end{definition}

Recall the interference network $\boldsymbol{A}_n \in \mathcal {A}_n$ introduced in \cref{sec:framework}. By \cref{assumption:Interference,assumption:RCT,assumption:CE,assumption:local dependency}, the graph $\boldsymbol{A}_n^{(2)}$ is a dependency graph for the random vectors $\left( Y_i, W_i, U_i  \right)_{i=1}^n$ conditional on $\boldsymbol{A}_{n}$.  Let $\Delta_n$ denote the
maximum degree of $\boldsymbol{A}_n^{(2)}$, i.e., $\Delta_n \equiv \max_{i \in [n]}\sum_{j=1}^n \boldsymbol{A}_n^{(2)}(i,j)$. For any proper cover $\{\mathcal{C}_k\}$ of $\boldsymbol{A}_n^{(2)}$,   $\left\{ \left(Y_i, W_i, U_i\right): i\in \mathcal{C}_k \right\}$ are mutually independent. Finally, if $\Delta_n = 0$, the dependency graph has no edges and $(Y_i, W_i, U_i)_{i=1}^n$ are conditionally independent. Hence, without loss of generality, we assume $\Delta_n \geq 1$. By the Hajnal-Szemer\'edi theorem \citep{kierstead2008short}, there exists a proper cover
$\{\mathcal{C}_1,\dots,\mathcal{C}_{\Delta_n + 1}\}$ of $\boldsymbol{A}_n^{(2)}$ such that
\begin{equation}\label{eq:HS_thm}
\sup_{1\le k\le \Delta_n + 1 }\left|\,|\mathcal{C}_k|-\frac{n}{\Delta_n + 1}\right|\le 1.
\end{equation}

Our asymptotic analysis is conducted conditional on the realized network sequence $\{\boldsymbol{A}_n\}_{n\ge 1}$, treated as fixed.  Randomness arises from the randomized treatment assignment and from the latent variables $\{U_i\}_{i=1}^n$, where $\{U_i\}_{i=1}^n$ may exhibit network-induced dependence and may be correlated with network characteristics. All stochastic orders and probability statements, including $O_P(\cdot)$, $o_P(\cdot)$, $\rightsquigarrow$,  are interpreted under the conditional law given $\{ \boldsymbol{A}_n\}_{n\ge 1}$.

\subsection{Proof of Lemma \ref{lemma:Bahadur representation}}

We first establish the uniform rate for the estimator of $q_{w^\dagger}(\tau)$. For any $w^\dagger = \left(d^\dagger, t^\dagger, l \right) \in \{0,1\} \times  \mathbb{N}^2$ with $t^\dagger \leq l$, Recall that $\pi_{n}(w^\dagger) = \frac{1}{n} \sum_{i=1}^n \mathbb{P}_{\boldsymbol{A}_n}[ W_i = w^\dagger  ]$. Since $D_i \overset{\mathrm{i.i.d.}}{\sim} \mathrm{Bern} \left(p \right)$ with $p \in (0,1)$, it follows that
\begin{equation}\label{eq:big_Pi}
\pi_{n}(w^\dagger)  = \underbrace{\binom{l}{t^\dagger} p^{t^\dagger + d^\dagger} (1-p)^{l - t^\dagger+ 1-d^\dagger}}_{ \equiv \omega\left(w^\dagger\right) } \left[ \frac{1}{n} \sum_{i=1}^n \mathds{1} \left( |N_i| = l \right) \right]. 
\end{equation}
By \cref{assumption:RCT,assumption:overlap}, we can conclude that there is a constant $c_{\pi}>0$ such that
\begin{equation}\label{eq:bounded_big_Pi}
    \liminf_{n\to\infty}\pi_{n}(w^\dagger) >  c_{\pi}>0.
\end{equation}

\begin{lemma}\label{lemma: Bahadur Representation-rate}
Suppose \cref{ass:degree,assumption: density,assumption:local dependency} hold. 
Then, conditional on $\left\{ \boldsymbol{A}_n \right\}_{n\geq1}$, for each $w^\dagger \in \{w, w^\prime \}$,  
\begin{equation}\label{eq: Bahadur Representation}
\sup_{\tau \in \mathcal{T}} \left|\sqrt{n}\left(\widehat q_{w^\dagger}(\tau)-q_{w^\dagger}(\tau)\right)  - \frac{1}{\sqrt{n}} \sum_{i=1}^n \widetilde{\phi}_{i}\big( w^\dagger, \tau \big) \right| =  O_P\left( r_{\mathrm{B},n} \right) ,    
\end{equation}
where $r_{\mathrm{B},n} =n^{-1/8} \left(\log n\right)^{1/4}$ and
\[
\widetilde{\phi}_{i}\big( w^\dagger, \tau \big) =  \frac{ \mathds{1}_i(w^\dagger) \left(\tau - \mathds{1}\{ Y_i \leq q_{w^\dagger}(\tau) \} \right) }{ f_{w^\dagger}\big(  q_{w^\dagger} (\tau)  \big) \pi_{n}(w^\dagger)  }.
\]
\end{lemma}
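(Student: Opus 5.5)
Because $\widehat q_{w^\dagger}(\tau)$ is a one-dimensional sample quantile within the cell $\{i:W_i=w^\dagger\}$, I will use the standard route through the empirical distribution function instead of a convexity argument. Conditional on $\boldsymbol A_n$, set
\[
\mathbb F_n(y)=\frac{1}{n}\sum_{i=1}^n \mathds 1_i(w^\dagger)\mathds 1\{Y_i\le y\},\qquad \widehat\pi(w^\dagger)=\frac1n\sum_{i=1}^n\mathds 1_i(w^\dagger).
\]
By \cref{prop:id} and the discussion preceding \cref{assumption: density}, $\mathbb E_{\boldsymbol A_n}[\mathbb F_n(y)]=\pi_n(w^\dagger)F_{w^\dagger}(y)$. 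The subgradient condition for \eqref{equation: Normalized_estimator} gives
\[
\bigl|\mathbb F_n(\widehat q_{w^\dagger}(\tau))-\tau\widehat\pi(w^\dagger)\bigr|\le \frac1n\sum_{i=1}^n \mathds 1_i(w^\dagger)\mathds 1\{Y_i=\widehat q_{w^\dagger}(\tau)\}.
\]
Since $F_{w^\dagger}$ is continuous, ties occur with probability zero, so the right-hand side is $O(1/n)$ uniformly in $\tau$. Define the centered process $\nu_n(y)=\sqrt n\{\mathbb F_n(y)-\pi_n(w^\dagger)F_{w^\dagger}(y)\}$. The argument then has three steps: (i) a uniform maximal inequality for $\nu_n$ and its increments; (ii) uniform consistency at rate $\sup_\tau|\widehat q_{w^\dagger}(\tau)-q_{w^\dagger}(\tau)|=O_P(\sqrt{\log n/n})$; (iii) a Taylor expansion of $F_{w^\dagger}$ combined with an increment bound for $\nu_n$.

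For step (i) I will use the dependency-graph structure from \cref{sec:preliminary}. Conditional on $\boldsymbol A_n$, the graph $\boldsymbol A_n^{(2)}$ is a dependency graph for $(Y_i,W_i)$. By \cref{ass:degree}, $\Delta_n\le K_{\max}+K_{\max}^2$ is bounded. The Hajnal–Szemerédi cover \eqref{eq:HS_thm} therefore splits $\nu_n$ into at most $\Delta_n+1$ sums, and each sum has independent summands over a set of size of order $n$. On each color class I apply Bernstein's inequality together with a union bound over a grid of mesh $n^{-1}$ in $y$, using monotonicity of indicators to pass from the grid to all $y$. This yields
\[
\sup_y|\nu_n(y)|=O_P(\sqrt{\log n}),\qquad
\sup_{|y-y'|\le h}|\nu_n(y)-\nu_n(y')|=O_P\bigl(\sqrt{h\log n}+n^{-1/2}\log n\bigr),
\]
where the second bound uses that the summand variance is at most $C h$ by boundedness of $f_{w^\dagger}$. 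The same decomposition gives $\widehat\pi(w^\dagger)-\pi_n(w^\dagger)=O_P(n^{-1/2})$. By \eqref{eq:bounded_big_Pi}, $\pi_n(w^\dagger)$ is bounded away from zero for large $n$.

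For step (ii), subtracting $\pi_n F_{w^\dagger}(q_{w^\dagger}(\tau))=\pi_n\tau$ from the subgradient relation gives
\[
\pi_n(w^\dagger)\bigl\{F_{w^\dagger}(\widehat q_{w^\dagger}(\tau))-\tau\bigr\}=-n^{-1/2}\nu_n(\widehat q_{w^\dagger}(\tau))+\tau\bigl(\widehat\pi-\pi_n\bigr)+O(n^{-1}).
\]
This is $O_P(\sqrt{\log n/n})$ uniformly in $\tau$. Before inverting, I first need $\widehat q_{w^\dagger}(\tau)$ to lie in the neighbourhood from \cref{assumption: density}. I get this by monotonicity: for a small $\epsilon$, the values $F_{w^\dagger}(q_{w^\dagger}(\tau)\pm\epsilon)$ are separated from $\tau$ uniformly over compact $\mathcal T$. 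Once inside that neighbourhood, the density lower bound lets me invert and obtain the rate in (ii).

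For step (iii), a first-order expansion with Lipschitz $f_{w^\dagger}$ gives
\[
F_{w^\dagger}(\widehat q)-\tau=f_{w^\dagger}(q)(\widehat q-q)+O\bigl((\widehat q-q)^2\bigr).
\]
I then replace $\nu_n(\widehat q)$ by $\nu_n(q)$ using the increment bound with $h=C\sqrt{\log n/n}$. The combined term $-\nu_n(q)+\tau\sqrt n(\widehat\pi-\pi_n)$ equals exactly $n^{-1/2}\sum_i\mathds 1_i(w^\dagger)(\tau-\mathds 1\{Y_i\le q\})$. Dividing by $\pi_n f_{w^\dagger}(q)$, which is bounded away from zero, yields $\widetilde\phi_i$.

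The remainder consists of three pieces. The increment term is $O_P\bigl((\log n/n)^{1/4}(\log n)^{1/2}\bigr)=O_P\bigl(n^{-1/4}(\log n)^{3/4}\bigr)$, and the quadratic term is $O_P(\log n/\sqrt n)$. These are both $O_P(r_{\mathrm B,n})$, with room to spare. The paper's rate $n^{-1/8}(\log n)^{1/4}$ suggests a cruder chaining, so my bound suffices for the statement. The main obstacle is the uniform increment bound in step (i) under dependence: a single class of functions indexed by $y$ must be controlled across dependent summands. The bounded-degree proper cover is what reduces this to independent Bernstein bounds, and I expect that reduction to be the delicate point.
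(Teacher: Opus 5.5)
Your argument is correct, and it follows a genuinely different route from the paper.

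\textbf{How the paper argues.} The paper works with the convex objective through Knight's identity. It writes $L_n(u,\tau)=-L_{1,n}(\tau)u+L_{2,n}(u,\tau)$. Using coloring, symmetrization and a VC maximal inequality, it shows that $L_{2,n}$ is within $O_P(n^{-1/4}\sqrt{\log n})$ of $\tfrac12\zeta_n(w,\tau)u^2$ uniformly over $|u|\le M$. It localizes $\widehat u_n$ with a separate convexity lemma (\cref{lemma:uniform_argmin_localization}). It then compares the minimizers of two nearly equal objectives. That last step turns an $O_P(r_n)$ error in objective values into an $O_P(r_n^{1/2})$ error in minimizers. This square-root loss is exactly where $n^{-1/8}(\log n)^{1/4}$ comes from, not from crude chaining as you guessed.

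\textbf{How your argument differs.} You exploit the fact that $\widehat q_{w^\dagger}(\tau)$ is a one-dimensional cell quantile. You use the exact count identity from the subgradient condition, uniform consistency by monotone inversion, and a modulus-of-continuity bound for the cell empirical cdf. This is the classical Bahadur--Kiefer route. It avoids the square-root loss and delivers $O_P(n^{-1/4}(\log n)^{3/4})$. Your increment bound is essentially \cref{lemma:local_modulus_gamma} transported to the $y$-scale, since on $\{W_i=w^\dagger\}$ one has $\mathds 1\{Y_i\le y\}=\mathds 1\{U_i\le F_{w^\dagger}(y)\}$.

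\textbf{What each route buys.} The paper's route needs only convexity, so it carries over to the multivariate linear QR in \cref{appendix: linear}, where no monotone inversion is available. Your route gives a sharper remainder. That would improve the $r_{\mathrm{B},n}$ component of $\eta_n$, and hence relax the rate condition in \cref{theorem: feasible_gaussian_approximation}.

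\textbf{Two details to repair.} Neither affects the conclusion.
\begin{enumerate}
\item \emph{Ties.} \cref{assumption:local dependency} allows arbitrary dependence among units within distance two, for example $U_i=U_j$ for neighbours. So ties within the cell need not have probability zero. However, ties between units at distance greater than two are null events. Hence, almost surely, every tied group in the cell is a clique of $\boldsymbol A_n^{(2)}$ and has at most $\Delta_n+1$ members. The right-hand side of your subgradient bound is therefore at most $(\Delta_n+1)/n$, and your $O(1/n)$ claim stands.
\item \emph{The grid.} Here $y$ ranges over $\mathbb R$, and $f_{w^\dagger}$ is controlled only near $\{q_{w^\dagger}(\tau):\tau\in\mathcal T\}$. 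For the global bound on $\sup_y|\nu_n(y)|$ needed in step (ii), place the grid in the probability scale $t=F_{w^\dagger}(y)\in[0,1]$, which needs no density bound. Apply the variance-$Ch$ increment bound only for $y$ in the $\epsilon$-enlargement where the density is bounded; that is all step (iii) uses.
\end{enumerate}
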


\begin{proof}[Proof of \cref{lemma: Bahadur Representation-rate}]

We prove \eqref{eq: Bahadur Representation} for $w^\dagger=w$; the argument for $w^\dagger=w^\prime$ is identical. The identity in \citet{knight1998limiting} gives
\begin{equation}\label{eq: u_n_definition}
\widehat u_n(\tau) \equiv \sqrt{n} \left(  \widehat{q}_w(\tau) - {q}_w(\tau) \right)  = \mathop{\argmin}_{u\in \mathbb{R}} L_n(u, \tau),
\end{equation}
where 
\[
\begin{aligned}
L_n(u, \tau) & = \sum_{i=1}^n \mathds{1}_i(w)  \left[ \rho_\tau(Y_i - q_w(\tau) - u/\sqrt{n} )  -  \rho_\tau( Y_i - q_w(\tau) ) \right].
\end{aligned}
\]
The objective admits the decomposition
\[
L_n(u, \tau) = - L_{1,n}(\tau) u + L_{2,n}(u, \tau),
\]
where $L_{1,n}(\tau)$ and  $L_{2,n}(u, \tau)$ are defined as 
\[
\begin{aligned}
L_{1,n}(\tau)  & = \frac{1}{\sqrt{n}}  \sum_{i=1}^n \mathds{1}_i(w) \left[\tau - \mathds{1}\{ Y_i \leq q_w(\tau) \} \right],\\
L_{2,n}(u, \tau) & = \sum_{i=1}^n \mathds{1}_i(w) \int_0^{u/\sqrt{n}}  \left[\mathds{1}\{ Y_i \leq q_w(\tau) + t \} - \mathds{1}\{ Y_i \leq q_w(\tau) \} \right] \mathrm{d}t.
\end{aligned}
\]


Let $\zeta_n(w,\tau)=f_w(q_w(\tau))\pi_{n}(w)$. \cref{eq:bounded_big_Pi} and \cref{assumption: density} together imply that there exist universal constants $C_\zeta>c_\zeta>0$ such that
\begin{equation}\label{eq: bounds_for_zeta_n}
c_\zeta \leq \inf_{n, \tau\in \mathcal{T}} \zeta_n(w, \tau) \leq \sup_{n, \tau\in \mathcal{T}} \zeta_n(w, \tau) \leq C_\zeta.    
\end{equation}
We first establish that, for every fixed $M>0$,
\begin{equation}\label{equation: second_term_convergence}
\sup_{\tau \in \mathcal{T}, |u|\leq M}\left|L_{2,n}(u, \tau) -  \frac{1}{2} \zeta_n(w,\tau) u^2  \right| = O_P\left(  n^{-1/4} \sqrt{\log n}  \right).    
\end{equation}
For each $u$ and $\tau$, let 
\begin{equation}\label{eq:R_i_u_tau}
R_i(u, \tau) = \int_0^{u/\sqrt{n}} \left[\mathds{1}\{ Y_i \leq q_w(\tau) + t \} - \mathds{1}\{ Y_i \leq q_w(\tau) \} \right] \mathrm{d}t.
\end{equation}
Write $L_{2,n}(u,\tau)$ as its conditional expectation given $\boldsymbol{A}_n$ plus a centered remainder:
\[
L_{2,n}(u, \tau) = \mathbb{E}_{\boldsymbol{A}_n}\left[L_{2,n}(u, \tau)   \right] + \underbrace{ L_{2,n}(u, \tau) - \mathbb{E}_{\boldsymbol{A}_n}\left[L_{2,n}(u, \tau)   \right]  }_{ \equiv \nu_n(u, \tau) }.
\]

\noindent \underline{\bf Step 1.}  First, we analyze $\mathbb{E}_{\boldsymbol{A}_n}\left[L_{2,n}(u, \tau)   \right]$. By Fubini's theorem and the law of iterated expectations, it follows that 
\begin{equation}\label{eq:L_2n}
\begin{aligned}
\mathbb{E}_{\boldsymbol{A}_n}[L_{2,n}(u,\tau)] = \sum_{i=1}^n \mathbb{P}_{\boldsymbol{A}_n}\left[W_i=w\right] \mathbb{E}\left[R_i(u, \tau) | \boldsymbol{A}_n, W_i = w \right].
\end{aligned}
\end{equation}
Moreover, \cref{assumption:RCT,assumption:CE} imply that the conditional density of $Y_i$ given $(W_i,\boldsymbol{A}_n)$ coincides with that given $W_i$. Consequently,
\begin{equation}\label{eq:Taylor_exp}
\begin{aligned}
 \mathbb{E}\left[  R_i(u, \tau) | \boldsymbol{A}_n, W_i =w \right]  & =  \mathbb{E}\left[R_i(u, \tau) | W_i = w \right] =   \int_0^{u/\sqrt{n}} \left[ F_w\left(q_w(\tau) + t\right) - F_w\left(q_w(\tau) \right) \right] \mathrm{d}t  \\
 & =   \int_0^{u/\sqrt{n}}  f_w\left(  q_w(\tau) \right) t  \mathrm{d}t + r_n(u, \tau) , \\
  & = \frac{1}{2}   f_w\left(  q_w(\tau) \right) n^{-1} u^2  + r_n(u, \tau), \\
\end{aligned}
\end{equation}
where the remainder term  $r_n(u, \tau)$ is given by 
\[
r_n(u, \tau) = \int_0^{u/\sqrt{n}} \int_0^t \left[ f_w(q_w(\tau) + s) - f_w(q_w(\tau)) \right] \mathrm{d}s \, \mathrm{d}t.
\]
Taylor's theorem and the Lipschitz continuity of $f_w(\cdot)$ implies that, for every fixed $M>0$,
\[
\sup_{\tau \in \mathcal{T}, |u| \leq M}\left| r_n(u, \tau) \right| = O\big( n^{-3/2} \big).
\]
Substituting this expansion into \cref{eq:L_2n} gives
\[
\begin{aligned}
\mathbb{E}_{\boldsymbol{A}_n}[L_{2,n}(u,\tau)] & = \frac{1}{n} \sum_{i=1}^n \mathbb{P}_{\boldsymbol{A}_n}\left[W_i=w\right]  \left[ \frac{u^2}{2}  f_w(q_w(\tau)) \right] + O\big(n^{-1/2}\big)  \\
& =  \pi_{n}(w)   \left[ \frac{u^2}{2}  f_w(q_w(\tau)) \right] +  O\big(n^{-1/2}\big),
\end{aligned}
\]
where the remainder is uniform over $(\tau,u)\in\mathcal T\times[-M,M]$. Equivalently,
\begin{equation}\label{eq:leading_rate}
    \sup_{\tau \in \mathcal{T}, |u|\leq M}\left|\mathbb{E}_{\boldsymbol{A}_n}\left[L_{2,n}(u, \tau) \right] - \frac{1}{2} \zeta_n (w,\tau) u^2 \right| =  O\big(n^{-1/2}\big).
\end{equation}

\noindent \underline{\bf Step 2.} Second, we show that
\begin{equation}\label{eq:st_2_convergence}
\sup_{\tau \in \mathcal{T}, |u| \leq M} | \nu_n(u, \tau) | = \sup_{\tau \in \mathcal{T}, |u| \leq M} \left| \sum_{i=1}^n Z_i(u, \tau) \right| = O_P\left(  \left(\Delta_n + 1 \right) n^{-1/4} \sqrt{\log n} \right),
\end{equation}
where $Z_i(u, \tau) = \mathds{1}_i(w) R_i(u, \tau) - \mathbb{E}_{\boldsymbol{A}_n}\left[\mathds{1}_i(w) R_i(u, \tau)  \right]$, and $R_i(u, \tau)$ is defined in \cref{eq:R_i_u_tau}. The argument parallels the proof of Theorem D.1 in \citet{viviano2025policy}. The Cauchy-Schwarz inequality gives
\begin{equation}\label{eq:squared_R}
\begin{aligned}
\left|R_i(u, \tau)\right|^2  & \leq \left|\frac{u}{\sqrt{n}} \int_0^{u/\sqrt{n}} \left| \mathds{1}\{ Y_i \leq q_w(\tau) + t \} - \mathds{1}\{ Y_i \leq q_w(\tau) \} \right| \mathrm{d}t \right| \leq u^2/n.
\end{aligned}
\end{equation}

Recall that $\boldsymbol{A}_n^{(2)}$ is a dependency graph of $ \left(Y_i, W_i, U_i \right)_{i=1}^n$, and  $\Delta_n$ is the maximum degree of $\boldsymbol{A}_n^{(2)}$. By the Hajnal-Szemer\'edi theorem and \cref{eq:HS_thm}, there exists a proper cover 
$\left\{ \mathcal{C}_1,\ldots,\mathcal{C}_{\Delta_n +1} \right\}$. Conditional on $\boldsymbol A_n$, the processes $\{\tau\mapsto Z_i(u,\tau):i\in\mathcal C_k\}$ are mutually independent within each class $\mathcal C_k$. Hence,
\begin{equation}\label{eq:target_nu_n}
\mathbb{E}_{\boldsymbol{A}_n}\left[ \sup_{\tau \in \mathcal{T}} | \nu_n(u, \tau) |  \right] \leq \sum_{k=1}^{\Delta_n + 1} \mathbb{E}_{\boldsymbol{A}_n}\left[ \sup_{\tau \in \mathcal{T}} \left| \sum_{i \in \mathcal{C}_k} Z_i(u, \tau) \right|  \right].
\end{equation}

For $(q,\epsilon)\in\mathbb R^2$, define $h_{q,\epsilon}(\bar y)=\int_0^\epsilon\left(\mathds{1}\{\bar y\le q+t\}-\mathds{1}\{\bar y\le q\}\right)\mathrm{d}t$. For fixed $M>0$, let
\[
\mathcal{H}_n = \left\{ (\bar{w},\bar{y} ) \mapsto \mathds{1} \{\bar{w} = w\}  h_{q, \epsilon}(\bar{y}) :  q \in \mathbb{R}, \left|\epsilon \right| \le M/\sqrt{n} \right\}.
\]
By Lemma 2.6.18 in \citet{vaart2023empirical}, $\mathcal H_n$ is a VC-subgraph class with $\mathrm{VC}(\mathcal H_n)\le3$. Since $R_i(u,\tau)=h_{q_w(\tau),u/\sqrt n}(Y_i)$, the class $\mathcal H_n^\prime\equiv\{Y_i\mapsto R_i(u,\tau):\tau\in\mathcal T,\ |u|\le M\}$ is also VC-subgraph.

Fix a color class $\mathcal C_k$. Conditional on $\boldsymbol A_n$, the variables $\{Z_i(u,\tau)\}_{i\in\mathcal C_k}$ are independent and zero-mean. Lemma 2.3.1 in \citet{vaart2023empirical} therefore yields
\begin{equation}\label{eq:Rad_process}
\begin{aligned}
\mathbb{E}_{\boldsymbol{A}_n}\left[ \sup_{\tau \in \mathcal{T}} \left| \sum_{i \in \mathcal{C}_k} Z_i(u, \tau) \right|  \right] & \leq  2 \mathbb{E}_{\boldsymbol{A}_n} \left[\mathbb{E}_\varepsilon \left[ \sup_{\tau \in \mathcal{T}} \left| \sum_{i \in \mathcal{C}_k} \varepsilon_i \mathds{1}_i(w) R_i(u, \tau)  \right| \right]  \right],
\end{aligned}
\end{equation}
where $\{\varepsilon_i\}_{i=1}^n$ are i.i.d. Rademacher variables
independent of the data, and $\mathbb{E}_{\varepsilon}$ denotes expectation
with respect to the  $\{\varepsilon_i\}_{i=1}^n$, conditional on the data and
$\boldsymbol A_n$. In order to bound the RHS of \cref{eq:Rad_process}, define
\begin{equation}\label{eq:sigma_nk}
\widetilde{\sigma}_{n,k} = \sup_{\tau \in \mathcal{T}, |u| \leq M } \sqrt{ \frac{1}{|\mathcal{C}_k|} \sum_{i \in \mathcal{C}_k} \left| R_i(u, \tau) \right|^2 }.
\end{equation}
The VC maximal inequality gives
\[
\begin{aligned}
\mathbb{E}_\varepsilon \left[ \sup_{\tau \in \mathcal{T}, |u|\leq M} \left| \sum_{i \in \mathcal{C}_k} \varepsilon_i \mathds{1}_i(w) R_i(u, \tau)  \right|\right]  & \le  K \sqrt{|\mathcal{C}_k|} \int_0^{\widetilde{\sigma}_{n,k}} \sqrt{ 2 \log (A/\epsilon) } \mathrm{d}\epsilon,
\end{aligned}
\]
where $A$ and $K$ are universal constants. Consequently, for all sufficiently large $n$,
\begin{equation}\label{eq:Chi_n_sum}
\begin{aligned}
\sum_{k=1}^{\Delta_n + 1}  \mathbb{E}_\varepsilon \left[ \sup_{\tau \in \mathcal{T}, |u|\leq M} \left| \sum_{i \in \mathcal{C}_k} \varepsilon_i \mathds{1}_i(w) R_i(u, \tau)  \right| \right]  & \leq  K  \sum_{k=1}^{\Delta_n + 1} \sqrt{|\mathcal{C}_k|}  \int_{0}^{\widetilde{\sigma}_{n,k}} \sqrt{2\log (A/\epsilon) 
} \mathrm{d}\epsilon\\
& \leq  2 K  \sum_{k=1}^{\Delta_n + 1} \sqrt{|\mathcal{C}_k|}  \widetilde{\sigma}_{n,k} \sqrt{2 \log \left(A/\widetilde{\sigma}_{n,k} \right)},
\end{aligned}
\end{equation}
where the last inequality follows from $\widetilde{\sigma}_{n,k}^{2}\le M^2/n$, \cref{eq:squared_R,lem:entropy-integral-bound}.

Let $\varphi(x)=x\sqrt{2\log(A/x)}$. Combining \cref{eq:target_nu_n,eq:Rad_process} and taking expectations in \cref{eq:Chi_n_sum} conditional on $\boldsymbol{A}_{n}$ gives, for all sufficiently large $n$,
\[
\begin{aligned}
\mathbb{E}_{\boldsymbol{A}_n}\left[ \sup_{\tau \in \mathcal{T}, |u|\leq M } | \nu_n(u, \tau) |   \right] & \leq  2\sum_{k=1}^{\Delta_n + 1}  \mathbb{E}_{\boldsymbol{A}_n} \left[ \mathbb{E}_\varepsilon \left[ \sup_{\tau \in \mathcal{T}, |u|\leq M } \left| \sum_{i \in \mathcal{C}_k} \varepsilon_i \mathds{1}_i(w) R_i(u, \tau)  \right| \right]  \right] \\
& \leq K \sum_{k=1}^{\Delta_n + 1} \sqrt{|\mathcal{C}_k|} \, \mathbb{E}_{\boldsymbol{A}_n} \left[ \widetilde{\sigma}_{n,k} \sqrt{2 \log (A/\widetilde{\sigma}_{n,k})}   \right] \\
& = K \sum_{k=1}^{\Delta_n + 1} \sqrt{|\mathcal{C}_k|} \ \mathbb{E}_{\boldsymbol{A}_n}\left[ \varphi( \widetilde{\sigma}_{n,k}  ) \right] \leq_{(1)} K \sum_{k=1}^{\Delta_n + 1} \sqrt{|\mathcal{C}_k|} \ \varphi\left( \mathbb{E}_{\boldsymbol{A}_n}\left[  \widetilde{\sigma}_{n,k}   \right] \right)\\
& \leq_{(2)} K \sum_{k=1}^{\Delta_n + 1} \sqrt{|\mathcal{C}_k|} \ \varphi\left( \sqrt{\mathbb{E}_{\boldsymbol{A}_n}[  \widetilde{\sigma}_{n,k}^2  ] } \right) \\
& = K  \sum_{k=1}^{\Delta_n + 1} \sqrt{|\mathcal{C}_k|}  \   \sqrt{  \mathbb{E}_{\boldsymbol{A}_n}[  \widetilde{\sigma}_{n,k}^2   ]  \log \left(A \big / \sqrt{  \mathbb{E}_{\boldsymbol{A}_n}[  \widetilde{\sigma}_{n,k}^2 ] } \right) }   \\
& \lesssim_{(3)}  \left(\Delta_n + 1 \right) n^{-1/4} \sqrt{\log n}.
\end{aligned}
\]
Inequality (1) follows from the concavity of $\varphi$ near zero. Inequality (2) uses the concavity of $x\mapsto\sqrt{x}$ and the monotonicity of $\varphi$ near zero, and inequality (3) follows from \cref{lemma: Variance_Color_K}. This proves \cref{eq:st_2_convergence}.

\vspace{0.1cm}

\noindent \underline{\bf Step 3.} Finally, combining \cref{eq:leading_rate} with \cref{eq:st_2_convergence} gives, for every fixed $M>0$,
\[
\sup_{\tau \in \mathcal{T}, |u|\leq M} \left| L_{2,n}(u, \tau) - \tfrac{1}{2} \zeta_n(w,\tau) u^2 \right| = O_P\left( (\Delta_n+ 1) n^{-1/4} \sqrt{\log n}  \right)=  O_P\left( r_n  \right),
\]
where $r_n=n^{-1/4}\sqrt{\log n}$, and the last equality follows from \cref{ass:degree}.\footnote{By definition, $\Delta_n\le K_{\max}^2$, so $\Delta_n$ is uniformly bounded almost surely.} Hence, for every fixed $M>0$,
\begin{equation}\label{eq:compact-uniform approximation}
\sup_{\tau \in \mathcal{T}, |u| \leq M} \left|  L_n(u, \tau) - Q_n(u,\tau) \right| = O_P\left( r_n  \right),   
\end{equation}
where  
\begin{equation}\label{eq:Q_n_def}
Q_n(u,\tau)
=
\frac{1}{2}\zeta_n(w,\tau)u^2-L_{1,n}(\tau)u.    
\end{equation}

Let $u_n(\tau)=\mathop{\arg\min}_{u\in\mathbb R}Q_n(u,\tau)$. Since $Q_n(\cdot,\tau)$ is quadratic,
$u_n(\tau)=\zeta_n(w,\tau)^{-1}L_{1,n}(\tau)$. By \cref{eq: bounds_for_zeta_n,lemma:L1n_uniform_bounded}, we have
$\sup_{\tau\in\mathcal T}|u_n(\tau)|=O_P(1)$. Moreover, \cref{lemma:uniform_argmin_localization} gives $\sup_{\tau\in\mathcal T}|\widehat u_n(\tau)|=O_P(1)$, where $\widehat u_n(\tau)=\sqrt n\{\widehat q_w(\tau)-q_w(\tau)\}$ is defined in \cref{eq: u_n_definition}.

Therefore, for any $\epsilon>0$, there exists a fixed $M<\infty$ such that, for all sufficiently large $n$, both $\sup_{\tau\in\mathcal T}|\widehat u_n(\tau)|\le M$ and $\sup_{\tau\in\mathcal T}|u_n(\tau)|\le M$ hold with probability at least $1-\epsilon$. On this event, the minimizing property of $\widehat u_n(\tau)$ gives
\[
L_n\left(\widehat u_n(\tau),\tau\right),
\le
L_n(u_{n}(\tau),\tau)
\]
uniformly over $\tau\in\mathcal T$. Applying \cref{eq:compact-uniform approximation}, we obtain
\[
Q_n\left(\widehat u_n(\tau),\tau\right)
\le
Q_n(u_{n}(\tau),\tau)+O_P(r_n),
\]
uniformly over \(\tau\in\mathcal T\). Since $Q_n(\cdot,\tau)$ is quadratic with unique minimizer $u_n(\tau)$,
\[
Q_n(u,\tau)-Q_n(u_{n}(\tau),\tau)
=
\frac{1}{2}\zeta_n(w,\tau)\left|u-u_{n}(\tau)\right|^2.
\]
Using \cref{eq: bounds_for_zeta_n} and the definitions of $u_n(\tau)$ and $\widehat u_n(\tau)$ gives
\[
\sup_{\tau\in\mathcal T}
\left|
\sqrt n\left(\widehat q_w(\tau)-q_w(\tau)\right)
-
\zeta_n(w,\tau)^{-1}L_{1,n}(\tau)
\right|
=
O_P\left(r_n^{1/2}\right).
\]
Since $r_n^{1/2}=n^{-1/8}(\log n)^{1/4}=r_{\mathrm B,n}$, this is the desired result.
\end{proof}

Next, we prove \cref{lemma:Bahadur representation}, which follows as a direct consequence of \cref{lemma: Bahadur Representation-rate}.

\begin{proof}[Proof of \cref{lemma:Bahadur representation}]
Recall $\widehat q(\tau)=\widehat q_w(\tau)-\widehat q_{w^\prime}(\tau)$, $q(\tau)=q_w(\tau)-q_{w^\prime}(\tau)$, and $\widetilde\psi_i(\tau)
=
\widetilde\phi_i(w,\tau)
-
\widetilde\phi_i(w^\prime,\tau)$.
Therefore, applying the triangle inequality and
\cref{lemma: Bahadur Representation-rate} give
\[
\begin{aligned}
&\sup_{\tau\in\mathcal T}
\left|
\sqrt n\left\{\widehat q(\tau)-q(\tau)\right\}
-
\frac{1}{\sqrt n}\sum_{i=1}^n\widetilde\psi_i(\tau)
\right|
\\
\le \ &
\sum_{w^\dagger\in\{w,w^\prime\}}
\sup_{\tau\in\mathcal T}
\left|
\sqrt n\left\{
\widehat q_{w^\dagger}(\tau)-q_{w^\dagger}(\tau)
\right\}
-
\frac{1}{\sqrt n}
\sum_{i=1}^n
\widetilde\phi_i(w^\dagger,\tau)
\right|
\\
= \ &
O_P(r_{\mathrm B,n})
=
o_P(1).
\end{aligned}
\]
\end{proof}

\subsection{Proof of Corollary \ref{corollary: fixed_Representation}}

\begin{proof}[Proof of \cref{corollary: fixed_Representation}]
We omit the proof of \cref{corollary: fixed_Representation}, as it is an immediate implication of \cref{lemma:Bahadur representation} given that $\left|\pi_n(w^\dagger) - \pi(w^\dagger)\right| = o_P(1)$.
\end{proof}

\subsection{Proof of Proposition \ref{proposition:NP_quantile_process_convergence}}
\label{appendix:proof of proposition:NP_quantile_process_convergence}

\begin{proof}[Proof of \cref{proposition:NP_quantile_process_convergence}]
Recall that the score function $\psi_i(\tau)$ is defined as
\[
\psi_i(\tau)
=
\frac{
\mathds{1}_i(w)
\bigl(\tau-\mathds{1}\{Y_i\le q_w(\tau)\}\bigr)
}{
\pi(w)\,
f_w\bigl(q_w(\tau)\bigr)
}
-
\frac{
\mathds{1}_i(w')
\bigl(\tau-\mathds{1}\{Y_i\le q_{w'}(\tau)\}\bigr)
}{
\pi(w^\prime)\,
f_{w'}\bigl(q_{w'}(\tau)\bigr)
},
\]
where the probabilities $\pi(w^\dagger)$ for $w^\dagger \in \{ w,w^\prime \}$ are defined in \cref{eq: pi_w}. By \cref{corollary: fixed_Representation}, to show the weak convergence of $  \sqrt{n} \left( \widehat{q}\left(\tau\right) -  q\left(\tau\right) \right)$, it suffices to show 
\[
\mathbb{G}_n(\tau) \equiv \frac{1}{\sqrt{n}} \sum_{i=1}^n \psi_i(\tau)  \rightsquigarrow \mathbb{G}(\tau)
\quad\text{in } \  \ell^\infty(\mathcal T),
\]
where $\mathbb{G}(\cdot)$ is a tight, mean-zero Gaussian process with covariance kernel $\Sigma(\tau, \tau')$.  Following  Example 1.5.10 in \cite{vaart2023empirical}, this requires establishing the marginal convergence and stochastic equicontinuity.

\vspace{0.1cm}

\noindent \underline{\textbf{Step 1.} Marginal Convergence of $\mathbb{G}_n(\tau) $.} For any $\bar{K} \in \mathbb{N}^+$, fix a finite set $\{\tau_1, \ldots, \tau_{\bar{K}}\} \subseteq \mathcal{T}$ and $\boldsymbol{\lambda} \equiv \left(\lambda_1,\ldots, \lambda_{\bar{K}} \right)^\top \in \mathbb{R}^{\bar{K}}$. Let  
\[
S_n = \frac{1}{\sqrt n} \sum_{i=1}^n  \eta_i, \quad\text{where}\quad \eta_i = \sum_{k=1}^{\bar{K}} \lambda_k \psi_i(\tau_k).
\]
By definition of $\eta_i$,  the dependency graph $\boldsymbol{A}^{(2)}_n$ of $\{ \eta_i \}_{i=1}^n$ has  the maximal degree $\Delta_n$ uniformly bounded by a universal constant, not depending on $n, {\bar{K}},  \boldsymbol{\lambda}$ and $\{ \tau_j \}_{k=1}^{\bar{K}}$. Let $\boldsymbol{G}_{n}(\boldsymbol{\tau} ) = \left( \mathbb{G}_{n}(\tau_k) \right)_{k=1}^{\bar{K}} $, and it follows that
\begin{equation}\label{eq:co_variance_limit}
\begin{aligned}
\mathrm{Var}(S_n) & = \boldsymbol{\lambda}^\top \mathrm{Cov}\left(  \boldsymbol{G}_{n}(\boldsymbol{\tau} ),  \boldsymbol{G}_{n}(\boldsymbol{\tau} ) \right)  \boldsymbol{\lambda}  \rightarrow  \boldsymbol{\lambda}^\top \boldsymbol{\Sigma}_{\bar{K}}  \boldsymbol{\lambda},  \quad \text{as } n \rightarrow \infty,
\end{aligned}
\end{equation}
where $\boldsymbol{\Sigma}_{\bar{K}}$ denotes the ${\bar{K}} \times {\bar{K}}$ matrix with entries ${\Sigma}\left(\tau_k, \tau_\ell\right)$ given in \cref{ass:final_variance}. Moreover, it is easy to see that $\eta_i$  are mean-zero, and uniformly bounded. Let $\bar{\sigma}_n=\sqrt{\mathrm{Var}(S_n)}$, Theorem 3.6 in \cite{ross2011fundamentals} implies that 
\[
\mathbb{W}_1\left(   S_n/\bar{\sigma}_n, Z \right) \leq \frac{\Delta_n^2}{\bar{\sigma}_n^3} \sum_{i=1}^n \mathbb{E} \left|\eta_i/\sqrt{n}\right|^3 + \frac{\sqrt{28 \Delta_n^3 }  }{\sqrt{\pi} \bar{\sigma}_n^2 }  \sqrt{\sum_{i=1}^n \mathbb{E} \left|\eta_i/\sqrt{n}\right|^4 },
\]
where $\mathbb{W}_1(\cdot, \cdot)$ denotes the 1-Wasserstein distance, and $Z \sim N(0,1)$ is the standard normal variable.

Notice that  $\Delta_n$ is uniformly bounded in $n$ under \cref{ass:degree}. Together with $\sum_{i=1}^n \mathbb{E} |\eta_i|^3 = O(n)$, $\sum_{i=1}^n \mathbb{E} |\eta_i|^4 = O(n)$, $\bar{\sigma}_n \rightarrow |\boldsymbol{\lambda}^\top \boldsymbol{\Sigma}_{\bar{K}}\boldsymbol{\lambda}|^{1/2}$, it follows that
\[
\lim_{n\rightarrow \infty}\mathbb{W}_1\left(   S_n/\bar{\sigma}_n, Z \right) = 0.
\]
This establishes that $S_n / \bar{\sigma}_n \rightsquigarrow \mathrm{N}(0,1)$ by Theorem 7.12 in \cite{villani2021topics}. Then, applying \cref{eq:co_variance_limit} and Slutsky’s Theorem, we conclude that $S_n \rightsquigarrow N \big(0, \boldsymbol{\lambda}^\top \boldsymbol{\Sigma}_{\bar{K}}  \boldsymbol{\lambda} \big)$. This completes the proof of the marginal convergence.

\noindent \underline{\textbf{Step 2.} Stochastic Equicontinuity of $\mathbb{G}_n(\tau)$.} Recall that the score function $\psi_i(\tau)$ is defined as
\[
\psi_i(\tau)
=
\frac{
\mathds{1}_i(w)
\bigl(\tau-\mathds{1}\{Y_i\le q_w(\tau)\}\bigr)
}{
\pi(w)\,
f_w\bigl(q_w(\tau)\bigr)
}
-
\frac{
\mathds{1}_i(w')
\bigl(\tau-\mathds{1}\{Y_i\le q_{w'}(\tau)\}\bigr)
}{
\pi(w^\prime)\,
f_{w'}\bigl(q_{w'}(\tau)\bigr)
}.
\]
Consider the following decomposition:
\[
\begin{aligned}
\mathbb{G}_{n}(\tau) = \frac{1}{\sqrt{n}} \sum_{i=1}^n \psi_i(\tau)  =  \frac{1}{\sqrt{n}} \sum_{i=1}^n    \phi_i(\tau, w)  - \frac{1}{\sqrt{n}} \sum_{i=1}^n    \phi_i(\tau, w^\prime),
\end{aligned}
\]
where 
\[
\phi_i(\tau, w^\dagger) \equiv \frac{
\mathds{1}_i(w^\dagger)
\bigl(\tau-\mathds{1}\{Y_i\le q_{w^\dagger}(\tau)\}\bigr)
}{
\pi(w^\dagger)\,
f_{w^\dagger}\bigl(q_{w^\dagger}(\tau)\bigr)
}  = \frac{
\mathds{1}_i(w^\dagger)
\bigl(\tau-\mathds{1}\{U_i \leq \tau\}\bigr)
}{
\pi(w^\dagger)\,
f_{w^\dagger}\bigl(q_{w^\dagger}(\tau)\bigr)
}  .
\]
Consequently, to establish the stochastic equicontinuity of $\mathbb{G}_n(\tau)$, it suffices to verify this property for the components $n^{-1/2}\sum_{i=1}^n \phi_i(\tau, w^\dagger)$ for each $w^\dagger \in \{ w, w^\prime \}$. We focus on the case $w^\dagger = w$, as the argument for $w^\prime$ is identical. Let $\zeta(w,\tau) = f_w\bigl(q_w(\tau)\bigr) \pi(w)$. By \cref{assumption: density,assumption: DD_limit}, the function $\zeta(w,\tau)$ is uniformly bounded and bounded away from zero over $\tau \in \mathcal{T}$. Moreover, the mapping $\tau \mapsto \zeta(w,\tau)$ is Lipschitz continuous. Hence, proving the stochastic equicontinuity of $\tau \mapsto n^{-1/2} \sum_{i=1}^n \phi_i(\tau, w)$ in $\ell^\infty(\mathcal{T})$ reduces to establishing that of the simplified process
\[
L_{1,n}(\tau)
=
\frac{1}{\sqrt n}
\sum_{i=1}^n
\mathds{1}_i(w) \left[\tau-\mathds{1}\{U_i \leq \tau \} \right].
\]

Now, let us establish that for any $\epsilon > 0$,
\begin{equation}\label{eq: SEQC}
\lim_{\delta \downarrow 0} \limsup_{n \to \infty} \mathbb{P}_{\boldsymbol{A}_n}\left[\sup_{ \tau_1 < \tau_2 < \tau_1 + \delta}  |L_{1,n}(\tau_1) - L_{1,n}(\tau_2)| > \epsilon  \right] = 0.
\end{equation}
For any $\tau_1, \tau_2 \in \mathcal{T}$, define $f_{\tau_1, \tau_2}: \{0,1\} \times \mathbb{N}^2 \times  [0,1] \rightarrow \mathbb{R}$ as
\[
f_{\tau_1, \tau_2}(\bar{w}, \bar{u}) = \mathds{1}\{ \bar{w} =w \} \left[ (\tau_2- \tau_1) - \mathds{1}\{ \tau_1 < \bar{u} \leq \tau_2 \} \right].
\]
Consequently, we have for any $\tau_2 > \tau_1$,
\[
L_{1,n}(\tau_2) - L_{1,n}(\tau_1) = \frac{1}{\sqrt{n}}\sum_{i=1}^n  f_{\tau_1, \tau_2}(W_i,U_i).
\]

For any $\delta >0$, define $\mathcal{F}_\delta \equiv \left\{ f_{\tau_1, \tau_2}:  \tau_1 < \tau_2 < \tau_1 + \delta \right \}$. Example 4.17 in \cite{wainwright2019high} implies that $\mathrm{VC}(\mathcal{F}_\delta) = 3$ for any $\delta >0$.  In fact, it suffices to show that 
\begin{equation}\label{eq:SEC}
\lim_{\delta \downarrow 0} \limsup_{n \to \infty} \mathbb{P}_{\boldsymbol{A}_n}\left[ \sup_{ f\in \mathcal{F}_\delta} \left| \frac{1}{\sqrt{n}} \sum_{i=1}^n   f (W_i, U_i) \right| > \epsilon \right] = 0
\end{equation}
with probability approaching one. Let $\delta \downarrow 0$ be arbitrary. By Markov's inequality and the symmetrization Lemma 2.3.1 in \cite{vaart2023empirical},  
\begin{equation}\label{eq:Markov}
\mathbb{P}_{\boldsymbol{A}_n}\left[  \sup_{ f\in \mathcal{F}_\delta} \left| \frac{1}{\sqrt{n}} \sum_{i=1}^n   f(W_i, U_i) \right| > \epsilon   \right]  \leq \frac{2}{\epsilon} \mathbb{E}_{\boldsymbol{A}_n} \left[ \mathbb{E}_{\varepsilon} \left[\sup_{f \in \mathcal{F}_\delta}  \left|\frac{1}{\sqrt{n}} \sum_{i=1}^n \varepsilon_i f(W_i, U_i)  \right| \right]  \right], 
\end{equation}
where $\varepsilon_i$ are i.i.d. Rademacher
variables. Recall the partition  $[n] = \mathcal{C}_1 \cup \mathcal{C}_2 \cup \dots \cup \mathcal{C}_{\Delta_{n}+1}$ introduced in \cref{sec:preliminary}, such that the variables $\{ (W_i, U_i):i \in \mathcal{C}_k \}$ are 
independent conditional on $\boldsymbol{A}_n$.  For each color  $k \in \{1, \ldots, \Delta_n+1 \}$, define 
\[
\sigma_{n,k} = \sup_{f \in \mathcal{F}_{\delta} } \sqrt{ |\mathcal{C}_k|^{-1} \sum_{i \in \mathcal{C}_k} \left| f(W_i, U_i)  \right|^2 }.
\]
Given that $\|f\|_\infty \leq 1$ for all  $f \in \mathcal{F}_\delta$, and the fact that $\mathrm{VC}(\mathcal{F}_\delta) = 3$, we obtain that
\[
\mathbb{E}_{\varepsilon} \left[\sup_{f \in \mathcal{F}_\delta}  \left|\frac{1}{\sqrt{|\mathcal{C}_k|}} \sum_{i\in\mathcal C_k} \varepsilon_i f(W_i, U_i)  \right|  \right] \leq K \int_0^{\sigma_{n,k}} \sqrt{ 2 \log (A/\epsilon)  } \mathrm{d} \epsilon,
\]
for some universal constant $A,K > 0$.  Let $\sigma_n = \sup_{f \in \mathcal{F}_{\delta}} \sqrt{ n^{-1} \sum_{i=1}^n |f(W_i, U_i)|^2  }$.  So, we have 
\[
\sigma_{n,k} \leq \sqrt{n/|\mathcal{C}_k|}  \sigma_n \leq \sqrt{\Delta_n +1}  \sigma_n = O(\sigma_n),
\]
by \cref{eq:HS_thm}. Thus,
\[
\begin{aligned}
\mathbb{E}_{\varepsilon} \left[\sup_{f \in \mathcal{F}_\delta}  \left|\frac{1}{\sqrt{n}} \sum_{i\in\mathcal C_k} \varepsilon_i f(W_i, U_i)  \right|  \right] \leq K  \sqrt{ |\mathcal{C}_k|/ n } \int_0^{ \sqrt{\Delta_n +1}  \sigma_n  } \sqrt{ 2 \log (A/\epsilon)  } \mathrm{d} \epsilon.
\end{aligned}
\]
Summing over $k \in \{1,\ldots, \Delta_n+1\}$, there is a universal constant $K > 0$ large enough such that
\begin{equation}\label{eq:integral_1}
\begin{aligned}
\mathbb{E}_{\boldsymbol{A}_n} \left[  \mathbb{E}_\varepsilon \left[ \sup_{f\in \mathcal{F}_{\delta} }  \left|  \frac{1}{\sqrt{n}} \sum_{i=1}^n \varepsilon_i f(W_i, U_i) \right|  \right]  \right] & \leq \sum_{k=1}^{\Delta_n+1} \mathbb{E}_{\boldsymbol{A}_n} \left[ \mathbb{E}_\varepsilon\left[\sup_{f\in \mathcal{F}_{\delta} }\left|\frac{1}{\sqrt{n}}\sum_{i\in\mathcal C_k}\varepsilon_i f(W_i, U_i)\right|\right]  \right] \\
 & \leq K  \sum_{k=1}^{\Delta_n+1} \sqrt{ |\mathcal{C}_k|/ n } \int_0^{ \sqrt{\Delta_n +1}  \sigma_n  } \sqrt{ 2 \log (A/\epsilon)  } \mathrm{d} \epsilon \\
  & \leq K   \int_0^{ \sqrt{\Delta_n +1}  \sigma_n  } \sqrt{ 2 \log (A/\epsilon)  } \mathrm{d} \epsilon.
\end{aligned}
\end{equation}
By \cref{lemma: ULLN_U}, we can derive
 \[
 \begin{aligned}
\sup_{f \in \mathcal{F}_{\delta } }\frac{1}{n} \sum_{i=1}^n |f(W_i,U_i)|^2 &   \leq  \sup_{ \tau_1 < \tau_2 < \tau_1 + \delta }   \frac{1}{n} \sum_{i=1}^n \left| (\tau_2 - \tau_1) -  \mathds{1}\{ \tau_1 < U_i \leq \tau_2 \} \right|^2 \\
 & \leq  \delta^2 + \delta + \sup_{ \tau_1 \in\mathcal{T}} \left|   \frac{1}{n} \sum_{i=1}^{n} \mathds{1}\{  U_i \leq \tau_1 \} -  \tau_1 \right|
 + \sup_{ \tau_2 \in\mathcal{T} } \left|   \frac{1}{n} \sum_{i=1}^{n} \mathds{1}\{  U_i \leq \tau_2 \} -  \tau_2 \right|\\
 & \leq \delta^2 + \delta + O_P(n^{-1/2}) = o_P(1),
 \end{aligned}
 \]
which implies that $\sigma_n = o_P(1)$. Thus, by using the Cauchy-Schwarz inequality and the dominated convergence theorem to see that the expectation  of the  integral on the RHS of \cref{eq:integral_1} converges to zero. Therefore, \cref{eq:Markov} converges to zero in probability, and the desired result follows.
\end{proof}

\section{Proofs for Section \ref{sec:dependent bootstrap} }\label{appendix:proof for section 4}

\subsection{Proof of Lemma \ref{lemma: feasible_variance_convergence}}

Recall the score function $\widetilde{\psi}_i(\tau)$ defined in \cref{eq:influence_function} and the network-conditional covariance kernel $\widetilde{\mathbb V}_{\boldsymbol A_n}(\tau,\tau^\prime)$ defined in \cref{eq: network_conditional_covariance_kernel}. Let $\widetilde{\boldsymbol{\psi}}(\tau) = \bigl( \widetilde{\psi}_1(\tau), \dots, \widetilde{\psi}_n(\tau) \bigr) \in \mathbb{R}^n$, and define
\[
\widetilde{\mathbb{V}}_n(\tau, \tau^\prime) = \frac{1}{n} \widetilde{ \boldsymbol{\psi}}(\tau)^\top  \boldsymbol{\Omega}_n  \widetilde{ \boldsymbol{\psi}} (\tau^\prime).
\]

\begin{proof}[Proof of \cref{lemma: feasible_variance_convergence}]
We decompose
\begin{equation}\label{eq:variance_difference}
\begin{aligned}
\widehat{\mathbb{V}}_n(\tau,\tau') - \widetilde{\mathbb V}_{\boldsymbol{A}_n}(\tau,\tau') &= \left[\widehat{\mathbb{V}}_n(\tau,\tau')-\widetilde{\mathbb{V}}_n(\tau,\tau')  \right]  + \left[ \widetilde{\mathbb{V}}_n(\tau,\tau')- \widetilde{\mathbb V}_{\boldsymbol{A}_n}(\tau,\tau')\right].
\end{aligned}
\end{equation}
By \cref{lemma:Oracle_variance}, the second term on the RHS of \cref{eq:variance_difference} is $O_P\big(n^{-1/2}\big)$. It therefore remains to control the first term. Define $\widehat{\Delta}_i(\tau)=\widehat\psi_i(\tau)-\widetilde{\psi}_i(\tau)$ and $\widehat{\boldsymbol{\Delta}}(\tau) = \big(\widehat{\Delta}_1(\tau), \ldots, \widehat{\Delta}_n(\tau) \big) $. Then
\[
\widehat{\mathbb{V}}_n(\tau,\tau') - \widetilde{\mathbb{V}}_n(\tau,\tau') = \frac1n \widehat{\boldsymbol\Delta}(\tau)^\top \boldsymbol{\Omega}_n\widehat{\boldsymbol\psi}(\tau') + \frac1n \widetilde{ \boldsymbol\psi}(\tau)^\top \boldsymbol{\Omega}_n\widehat{\boldsymbol\Delta}(\tau').
\]
The Cauchy-Schwarz inequality gives
\[
\left| \frac1n \widehat{\boldsymbol\Delta}(\tau)^\top \boldsymbol{\Omega}_n\widehat{\boldsymbol\psi}(\tau') \right| \le \|\boldsymbol{\Omega}_n\|_{\mathrm{op}} \left[ \frac1n\sum_{i=1}^n|\widehat{\Delta}_i(\tau)|^2 \right]^{1/2} \left[ \frac1n\sum_{i=1}^n|\widehat\psi_i(\tau')|^2 \right]^{1/2},
\]
where $\| \boldsymbol{\Omega}_n\|_{\mathrm{op}}$ denotes the spectral norm of $\boldsymbol{\Omega}_n$. Moreover,
\[
\|\boldsymbol\Omega_n\|_{\mathrm{op}}
\le
\max_i\sum_{j=1}^n|\boldsymbol\Omega_n(i,j)|
\le 1+K_{\max}^2 = O(1),
\]
where the last inequality follows from \cref{ass:degree}. By \cref{eq: bounds_for_zeta_n}, $\sum_{i=1}^n| \widetilde{\psi}_i(\tau)|^2/n$ is uniformly bounded over $\tau \in \mathcal{T}$. Together with \cref{lemma: score_LLN}, this yields
\[
\frac{1}{n}\sum_{i=1}^n|\widehat\psi_i(\tau)|^2 \leq 2 \frac{1}{n}\sum_{i=1}^n|\widetilde\psi_i(\tau)|^2+2 \frac{1}{n}\sum_{i=1}^n|\widehat\Delta_i(\tau)|^2=O_P(1),
\] 
uniformly over $\tau\in\mathcal T$. Applying \cref{lemma: score_LLN} once more gives
\[
\sup_{\tau,\tau'\in\mathcal T} \left| \frac1n \widehat{\boldsymbol\Delta}(\tau)^\top \boldsymbol{\Omega}_n \widehat{\boldsymbol\psi}(\tau') \right| = O_P\left( \sup_{\tau\in\mathcal T}\sqrt{\frac1n\sum_{i=1}^n|\widehat{\Delta}_i(\tau)|^2} \right)  = O_P \big(  r_{f,n} \vee n^{-1/4} \big).
\]
The same argument controls $\frac1n \widetilde{\boldsymbol\psi}(\tau)^\top \boldsymbol{\Omega}_n \widehat{ \boldsymbol\Delta}(\tau')$. Hence,
\[
\sup_{\tau,\tau'\in\mathcal T} \left| \widehat{\mathbb{V}}_n(\tau,\tau') - \widetilde{\mathbb{V}}_n(\tau,\tau') \right|  =  O_P \big(  r_{f,n} \vee n^{-1/4} \big),
\]
and the desired result follows.
\end{proof}

\subsection{Proof of Theorem \ref{theorem: feasible_gaussian_approximation}}

Let $\mathcal{T}_n \equiv \{\tau_{1},\ldots,\tau_{p_n}\}\subseteq\mathcal T$ be a prespecified grid, and define its mesh size by
\begin{equation}\label{eq: def_delta_}
\delta_n\equiv \sup_{\tau\in\mathcal T}\inf_{1\le i\le p_n}|\tau-\tau_{i}|.
\end{equation}

\begin{lemma}\label{lemma: grid approximation}
Suppose \cref{ass:degree,assumption: density,assumption:local dependency,assumption: smallest_variance_diagonal} hold. Then, conditional on \(\{ \boldsymbol{A}_n \}_{n\ge1}\),
\[
\left| \sup_{\tau \in \mathcal{T}}  \left|  \frac{\sqrt{n} \left( \widehat{q}(\tau) - q(\tau) \right) }{ \sigma_n(\tau) } \right|  -   \sup_{\tau \in \mathcal{T}_n} \left|   \frac{1}{\sqrt{n}}\sum_{i=1}^n  \frac{\widetilde{\psi}_i(\tau) }{ \sigma_n(\tau)} \right|  \right| =O_P\left( \eta_n \right),
\]
where $\eta_n = r_{\mathrm{B},n} + r_{\mathrm{Loc},n} +  \delta_n$, and
\[
r_{\mathrm{B},n} = n^{-1/8} (\log n)^{1/4} \quad \text{and} \quad r_{\mathrm{Loc},n} = \sqrt{ \delta_n \log(e/\delta_n)} + n^{-1/2} \log (e/\delta_n).
\]
\end{lemma}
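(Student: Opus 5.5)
The plan is a three-step triangle inequality: replace the estimator by its linearization (Bahadur step), replace the supremum over $\mathcal T$ by the maximum over the grid $\mathcal T_n$ (discretization step), and control the error from $\sigma_n(\cdot)$ varying between grid points. Write $\mathbb G_n(\tau)=n^{-1/2}\sum_{i=1}^n\widetilde\psi_i(\tau)$. Throughout we use $|\sup|a|-\sup|b||\le\sup|a-b|$ and the bounds $c_\sigma\le\sigma_n\le c'_\sigma$ from \cref{assumption: smallest_variance_diagonal}.

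\textbf{Bahadur step.} By \cref{lemma:Bahadur representation},
\[
\sup_{\tau\in\mathcal T}\bigl|\sqrt n(\widehat q(\tau)-q(\tau))-\mathbb G_n(\tau)\bigr|=O_P(r_{\mathrm B,n}).
\]
Dividing by $\sigma_n(\tau)\ge c_\sigma$ gives
\[
\Bigl|\sup_{\tau\in\mathcal T}\bigl|\sqrt n(\widehat q-q)/\sigma_n\bigr|-\sup_{\tau\in\mathcal T}|\mathbb G_n/\sigma_n|\Bigr|=O_P(r_{\mathrm B,n}).
\]

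\textbf{Discretization step.} Since $\mathcal T_n\subseteq\mathcal T$, we have $0\le\sup_{\mathcal T}|\mathbb G_n/\sigma_n|-\max_{\mathcal T_n}|\mathbb G_n/\sigma_n|$. This difference is at most the sup, over pairs with $|\tau-\tau'|\le\delta_n$, of $|\mathbb G_n(\tau)/\sigma_n(\tau)-\mathbb G_n(\tau')/\sigma_n(\tau')|$. We split it as
\[
\frac{|\mathbb G_n(\tau)-\mathbb G_n(\tau')|}{\sigma_n(\tau)}+|\mathbb G_n(\tau')|\,\frac{|\sigma_n(\tau)-\sigma_n(\tau')|}{\sigma_n(\tau)\sigma_n(\tau')}.
\]

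For the second term, $\sup_\tau|\mathbb G_n(\tau)|=O_P(1)$ follows from the maximal inequality used in \cref{lemma:L1n_uniform_bounded}. It then suffices to show $\sigma_n^2(\cdot)$ is Lipschitz on $\mathcal T$, so the term is $O_P(\delta_n)$. Lipschitzness holds because $\widetilde{\mathbb V}_{\boldsymbol A_n}(\tau,\tau)$ is a finite sum (at most $1+K_{\max}^2$ neighbors per unit) of expectations of products of terms $\tau-\mathds 1\{U_i\le\tau\}$, divided by $f_{w^\dagger}(q_{w^\dagger}(\tau))\pi_n$. These denominators are Lipschitz in $\tau$ and bounded away from zero by \cref{assumption: density} and \cref{eq: bounds_for_zeta_n}, since $q_{w^\dagger}$ has derivative $1/f$, which is bounded. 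The lower bound $c_\sigma$ converts the Lipschitz bound on $\sigma_n^2$ into one on $\sigma_n$.

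\textbf{Modulus of continuity of $\mathbb G_n$ (main obstacle).} We need
\[
\mathbb E_{\boldsymbol A_n}\sup_{|\tau-\tau'|\le\delta_n}|\mathbb G_n(\tau)-\mathbb G_n(\tau')|\lesssim\sqrt{\delta_n\log(e/\delta_n)}+n^{-1/2}\log(e/\delta_n)=r_{\mathrm{Loc},n}.
\]
As in the proof of \cref{proposition:NP_quantile_process_convergence}, rewrite $\mathds 1\{Y_i\le q_{w^\dagger}(\tau)\}=\mathds 1\{U_i\le\tau\}$ on $\{W_i=w^\dagger\}$. The increment then equals a smooth Lipschitz-in-$\tau$ rescaling, which contributes $O_P(\delta_n)$, plus the process $n^{-1/2}\sum_i\mathds 1_i(w^\dagger)[(\tau_2-\tau_1)-\mathds 1\{\tau_1<U_i\le\tau_2\}]$ indexed by $\mathcal F_{\delta_n}$. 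Using the Hajnal--Szemer\'edi proper cover $\{\mathcal C_k\}_{k\le\Delta_n+1}$ of $\boldsymbol A_n^{(2)}$, within each color the summands are independent. Each function class has VC index 3, envelope 1, and variance $\sigma^2\le\delta_n$ because $U_i$ is uniform. We apply a local maximal inequality for VC classes with bounded envelope within each color, of the form $\mathbb E\|\mathbb G_{\mathcal C_k}\|_{\mathcal F}\lesssim\sigma\sqrt{\log(e/\sigma)}+\log(e/\sigma)/\sqrt{|\mathcal C_k|}$ (e.g.\ Chernozhukov--Chetverikov--Kato, Cor.\ 5.1). Summing over the boundedly many colors, weighted by $\sqrt{|\mathcal C_k|/n}$, gives $r_{\mathrm{Loc},n}$. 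The delicate point is to use this variance-sensitive inequality rather than the cruder entropy integral bound of \cref{eq:integral_1}, since that is what yields the explicit rate. Markov's inequality then gives $O_P(r_{\mathrm{Loc},n})$, and collecting the three pieces yields $O_P(r_{\mathrm B,n}+r_{\mathrm{Loc},n}+\delta_n)=O_P(\eta_n)$.
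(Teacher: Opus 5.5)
Your proposal is correct and follows the paper's proof essentially step for step: the Bahadur step, the same two-term split of the grid increment, Lipschitz continuity of $\sigma_n$, and a local modulus bound for the $\gamma_i$-increments over the Hajnal--Szemer\'edi color classes. For the Lipschitz step, the paper bounds $\mathbb E_{\boldsymbol A_n}|\widetilde\psi_i(\tau)\widetilde\psi_j(\tau)-\widetilde\psi_i(\tau')\widetilde\psi_j(\tau')|\lesssim|\tau-\tau'|$, which is your argument.

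The only difference is in the modulus bound. You apply Corollary 5.1 of Chernozhukov--Chetverikov--Kato directly to the centered process within each color. This is legitimate because the degree-$l$ units in a color are i.i.d.\ given $\boldsymbol A_n$ and the others contribute zero. The paper (\cref{lemma:local_modulus_gamma}) instead runs the entropy-integral bound \eqref{eq:integral_1} with the random radius $\sigma_n$, and uses that corollary only to show $\mathbb E_{\boldsymbol A_n}\sigma_n^2\lesssim\delta+\log(e/\delta)/n$ (\cref{lemma:local_interval_increment}). So the entropy-integral route is not cruder than yours, as you suggest; it yields the same $r_{\mathrm{Loc},n}$.
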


\begin{proof}[Proof of \cref{lemma: grid approximation}]

Define the standardized score process $\widetilde{\mathbb{H}}_{n}(\tau) = n^{-1/2}\sum_{i=1}^n  {\widetilde{\psi}_i(\tau) }/{ \sigma_n(\tau)} $. By \cref{lemma:Bahadur representation},
\begin{equation}\label{eq: Rate_B_n}
\sup_{\tau \in \mathcal{T}} \left| \sqrt{n}\left( \widehat{q}(\tau) - q(\tau) \right) - \frac{1}{\sqrt{n}} \sum_{ i=1}^n \widetilde{\psi}_i(\tau)  \right| = O_P\left( r_{\mathrm{B},n}  \right).
\end{equation}
Since \cref{assumption: smallest_variance_diagonal} gives $\inf_{\tau\in\mathcal T}\sigma_n(\tau)\ge c_\sigma$, the preceding bound implies
\[
\begin{aligned}
\sup_{\tau \in \mathcal{T}} \left|  \frac{\sqrt{n}\left( \widehat{q}(\tau) - q(\tau) \right) }{\sigma_n(\tau)} -   \widetilde{\mathbb{H}}_{n}(\tau)\right| & \leq   \sup_{\tau \in \mathcal{T}} \left|  \frac{\sqrt{n}\left( \widehat{q}(\tau) - q(\tau) \right)  -  \frac{1}{\sqrt{n}}\sum_{i=1}^n  \widetilde{\psi}_i(\tau) }{c_\sigma }   \right| \\
& = O_P\left( r_{\mathrm{B},n} \right).
\end{aligned}
\]

It remains to establish the grid approximation
\begin{equation}\label{eq: final}
\sup_{\tau \in \mathcal{T}} \left|    \widetilde{\mathbb{H}}_{n}(\tau)  \right|  -     \sup_{\tau \in \mathcal{T}_n} \left|   \widetilde{\mathbb{H}}_{n}(\tau) \right|  = O_P\left( r_{\mathrm{Loc},n} + \delta_n \right).    
\end{equation}
For any $\tau,\tau'$, decompose
\[
\widetilde{\mathbb{H}}_{n}(\tau) - \widetilde{\mathbb{H}}_{n}(\tau^\prime) = \frac{1}{\sigma_n(\tau)} \frac{   1 }{ \sqrt{n}  } \sum_{i=1}^n \Big( \widetilde{\psi}_i(\tau) - \widetilde{\psi}_i(\tau^\prime) \Big) + \frac{1}{\sqrt{n}} \sum_{i=1}^n \widetilde{\psi}_i(\tau^\prime) \left[ \frac{1}{\sigma_n(\tau)} - \frac{1}{\sigma_n(\tau^\prime)} \right].
\]
Taking the supremum over $|\tau-\tau'|\le\delta_n$ and applying the triangle inequality together with \cref{assumption: smallest_variance_diagonal} yields
\begin{equation}\label{eq: Z_n_eq_continuity}
\begin{aligned}
\sup_{|\tau - \tau^\prime| \leq \delta_n} \left| \widetilde{\mathbb{H}}_{n}(\tau) - \widetilde{\mathbb{H}}_{n}(\tau^\prime)   \right| & \leq \frac{1}{c_\sigma} \underbrace{\sup_{|\tau - \tau^\prime| \leq \delta_n}  \left| \frac{1}{\sqrt{n}} \sum_{i=1}^n \Big( \widetilde{\psi}_i(\tau) - \widetilde{\psi}_i(\tau^\prime) \Big)  \right | }_{= \mathrm{Term}_{\widetilde{\psi}, n} }  \\
& + \frac{1}{c_\sigma^2}  \sup_{|\tau - \tau^\prime| \leq \delta_n}  \left| \frac{1}{\sqrt{n}} \sum_{i=1}^n  \widetilde{\psi}_i(\tau^\prime)  \right |   \underbrace{\sup_{|\tau - \tau^\prime| \leq \delta_n}   \left| \sigma_n(\tau) -\sigma_n(\tau^\prime)   \right| }_{= \mathrm{Term}_{\sigma,n}}
\end{aligned}
\end{equation}

\noindent \underline{\bf Step 1.} 
Recall that $\widetilde\psi_i(\tau)
=
\widetilde\phi_i(w,\tau)-\widetilde\phi_i(w^\prime,\tau)$. For $w^\dagger \in \{w,w^\prime \}$,
\[
\widetilde\phi_i(w^\dagger,\tau)
=\frac{\mathbb{1}_i(w^\dagger)\left( \tau- \mathds{1} \left\{ Y_i\le q_{w^\dagger}(\tau) \right\}  \right) }{f_{w^\dagger}(q_{w^\dagger}(\tau))
\pi_{n}(w^\dagger) } = \frac{\mathbb{1}_i(w^\dagger)\left( \tau- \mathds{1} \left\{U_i \leq\tau \right\}  \right) }{f_{w^\dagger}(q_{w^\dagger}(\tau))
\pi_{n}(w^\dagger) } .
\]
To bound $\mathrm{Term}_{\widetilde\psi,n}$, it suffices to establish, for each $w^\dagger\in\{w,w'\}$,
\begin{equation}\label{eq: eq_continuity}
\sup_{|\tau - \tau^\prime| \leq \delta_n}  \left| \frac{1}{\sqrt{n}} \sum_{i=1}^n \left[\widetilde\phi_i(w^\dagger,\tau) -   \widetilde\phi_i(w^\dagger,\tau^\prime)  \right] \right | = O_P\left(r_{\mathrm{Loc},n}  + \delta_n \right).
\end{equation}
We prove \cref{eq: eq_continuity} for $w^\dagger=w$; the argument for $w^\dagger=w'$ is identical.

Recall that  $\zeta_n(w,\tau) = f_{w}(q_{w}(\tau))\pi_{n}(w)$ and $\gamma_i(\tau)=\mathds{1}_i(w)\{\tau-1(U_i\le \tau)\}.$ Then
\begin{equation}\label{eq: phi_i_decomposition}
\begin{aligned}
\frac1{\sqrt n}\sum_{i=1}^n \left[ \widetilde\phi_i(w,\tau)
-\widetilde\phi_i(w,\tau')  \right] &=\frac1{\sqrt n}\sum_{i=1}^n\frac{\gamma_i(\tau)-\gamma_i(\tau')}{\zeta_n(w,\tau) }\\
& + \left[\zeta_n(w,\tau)^{-1}-\zeta_n(w,\tau')^{-1}\right]\frac1{\sqrt n}\sum_{i=1}^n\gamma_i(\tau').
\end{aligned}
\end{equation}
By \cref{eq: bounds_for_zeta_n,assumption: density}, $\zeta_n(w,\tau)$ is uniformly bounded away from zero and Lipschitz continuous in $\tau$. Hence $\tau\mapsto\zeta_n(w,\tau)^{-1}$ is also Lipschitz continuous, so
\begin{equation}\label{eq: Lipschiz_gamma}
\sup_{ |\tau - \tau^\prime| \leq \delta_n }\left| \zeta_n(w,\tau)^{-1}
-\zeta_n(w,\tau')^{-1}  \right| \lesssim \delta_n.    
\end{equation}
By \cref{lemma:L1n_uniform_bounded}, the second term on the right-hand side of \cref{eq: phi_i_decomposition} is therefore $O_P(\delta_n)$ uniformly over $|\tau-\tau'|\le\delta_n$:
\[
\sup_{ \left|\tau - \tau^\prime \right| \leq \delta}\left| \left(\zeta_n(w,\tau)^{-1}
-\zeta_n(w,\tau')^{-1}\right)
\frac1{\sqrt n}
\sum_{i=1}^n
\gamma_i(\tau')\right| = O_P\left(\delta_n\right).
\]
By \cref{lemma:local_modulus_gamma}, we have 
\[
\sup_{ |\tau - \tau^\prime| \leq \delta_n } \left| \frac{1}{\sqrt{n}} \sum_{i=1}^n \left[  \gamma_i(\tau) - \gamma_i(\tau^\prime) \right] \right| = O_P\left(r_{\mathrm{Loc},n} \right).
\]
The above results yield
\[
\sup_{|\tau-\tau'|\le\delta_n}\left|\frac1{\sqrt n}\sum_{i=1}^n \left[  \widetilde\phi_i(w,\tau)-\widetilde\phi_i(w,\tau') \right]  \right|  = O_P\left(r_{\mathrm{Loc},n} + \delta_n \right). 
\]
Combining these two bounds proves \cref{eq: eq_continuity}, and hence 
\[
\mathrm{Term}_{\widetilde{\psi}, n}= O_P\left(r_{\mathrm{Loc},n}  + \delta_n \right).
\]

\noindent \underline{\bf Step 2.} We next show that
\[
\sup_{|\tau - \tau^\prime| \leq \delta_n}   \left| \sigma_n(\tau) -\sigma_n(\tau^\prime)   \right| = O\left( \delta_n\right).
\]
Recall that 
\[
\sigma_n^2(\tau)=\frac1n\sum_{i=1}^n\sum_{{\ell}_{\boldsymbol A_n}(i,j)\le 2}\mathbb E_{\boldsymbol A_n}\left[\widetilde\psi_i(\tau)\widetilde\psi_j(\tau)\right].
\]
Then, uniformly over \(|\tau-\tau'|\le\delta_n\),
\[
\begin{aligned}
\left|\sigma_n^2(\tau)-\sigma_n^2(\tau')\right|
&\le\frac1n\sum_{i=1}^n\sum_{{\ell}_{\boldsymbol A_n}(i,j)\le 2}\mathbb E_{\boldsymbol A_n}\left[\left|\widetilde\psi_i(\tau)\widetilde\psi_j(\tau)-\widetilde\psi_i(\tau')\widetilde\psi_j(\tau')\right|\right]\\
&\lesssim\frac1n\sum_{i=1}^n\sum_{{\ell}_{\boldsymbol A_n}(i,j)\le 2}|\tau-\tau'| \lesssim(\Delta_n+1)\delta_n.
\end{aligned}
\]
By \cref{ass:degree,assumption: smallest_variance_diagonal}, we can conclude that
\[
\mathrm{Term}_{\sigma,n} =  \sup_{|\tau-\tau'|\le\delta_n} |\sigma_n(\tau)-\sigma_n(\tau')|\le\frac{1}{2c_\sigma}\sup_{|\tau - \tau^\prime| \leq \delta_n}|\sigma_n^2(\tau)-\sigma_n^2(\tau')| = O \left( \delta_n \right).
\]

\noindent \underline{\bf Step 3.}  Substituting the bounds from Steps 1 and 2 into \cref{eq: Z_n_eq_continuity} gives
\begin{equation}\label{eq: Z_n_sup_diff}
\sup_{|\tau - \tau^\prime| \leq \delta_n} \left| \widetilde{\mathbb{H}}_{n}(\tau) - \widetilde{\mathbb{H}}_{n}(\tau^\prime)   \right|  = O_P\left(  r_{\mathrm{Loc}, n} + \delta_n \right). 
\end{equation}
Since \(\mathcal T_n\subseteq \mathcal T\), we have $\sup_{\tau\in\mathcal T}|\widetilde{\mathbb{H}}_{n}(\tau)|
\geq 
\sup_{\tau\in\mathcal T_n}\left|\widetilde{\mathbb{H}}_{n}(\tau)\right| $. For each \(\tau\in\mathcal T\), let \( \mathfrak{s}_n(\tau)\in\mathcal T_n\) be such that $|\tau-\mathfrak{s}_n(\tau)|\le \delta_n$. Then, for every
\(\tau\in\mathcal T\),
\[
\begin{aligned}
|\widetilde{\mathbb{H}}_{n}(\tau)|
& \leq
|\widetilde{\mathbb{H}}_{n}\left(\mathfrak{s}_n(\tau) \right)|
+
|\widetilde{\mathbb{H}}_{n}(\tau)-\widetilde{\mathbb{H}}_{n}\left(\mathfrak{s}_n(\tau) \right)|\\
& \leq \sup_{\tau\in\mathcal T_n}|\widetilde{\mathbb{H}}_{n}(\tau)| + \sup_{\tau^{\prime}\in\mathcal{T}_{n}:|\tau-\tau'|\le\delta_n}
\left|\widetilde{\mathbb{H}}_{n}(\tau)-\widetilde{\mathbb{H}}_{n}(\tau')\right|
\end{aligned}
\]
Taking the supremum over \(\tau\in\mathcal T\), we obtain
\[
\sup_{\tau\in\mathcal T}|\widetilde{\mathbb{H}}_{n}(\tau)|
\le
\sup_{\tau\in\mathcal T_n}|\widetilde{\mathbb{H}}_{n}(\tau)|
+
\sup_{|\tau-\tau'|\le\delta_n}
|\widetilde{\mathbb{H}}_{n}(\tau)-\widetilde{\mathbb{H}}_{n}(\tau')|.
\]
Hence,
\[
0\le
\sup_{\tau\in\mathcal T}|\widetilde{\mathbb{H}}_{n}(\tau)|
-
\sup_{\tau\in\mathcal T_n} \left|\widetilde{\mathbb{H}}_{n}(\tau)\right|
\le
\sup_{|\tau-\tau'|\le\delta_n}
\left|\widetilde{\mathbb{H}}_{n}(\tau)-\widetilde{\mathbb{H}}_{n}(\tau')\right|.
\]
By \cref{eq: Z_n_sup_diff}, we conclude that
\[
\left| \sup_{\tau\in\mathcal T}|\widetilde{\mathbb{H}}_{n}(\tau)|
-
\sup_{\tau\in\mathcal T_n}|\widetilde{\mathbb{H}}_{n}(\tau)|\right|
= O_P\left( r_{\mathrm{Loc},n} + \delta_n \right).
\]
Combining \cref{eq: Rate_B_n} with the derived upper bound proves the lemma.
\end{proof}

\begin{lemma}
\label{lemma: grid_oracle_gaussian_approx}
Let $\boldsymbol{\Lambda}_n=\operatorname{diag} \left\{\sigma_n(\tau_1),\ldots,\sigma_n(\tau_{p_n}) \right\}$, let \( \boldsymbol V_n\in\mathbb R^{p_n\times p_n}\) have entries $ V_n(j,k)=\widetilde{\mathbb V}_{\boldsymbol A_n}(\tau_j,\tau_k)$, and set $\boldsymbol\Gamma_n = \boldsymbol{\Lambda}_n^{-1}\boldsymbol V_n\boldsymbol{\Lambda}_n^{-1}$.  Let
\[
\boldsymbol{Z}_n \equiv  \left( Z_n(\tau_i): 1 \leq i \leq p_n \right) \sim \mathrm{N} \left(0, \boldsymbol\Gamma_n \right).
\]
Suppose \cref{ass:degree,assumption: density,assumption:local dependency,assumption: smallest_variance_diagonal} hold and $(\log p_n)^7/n\to0$. Then, conditional on \(\{ \boldsymbol{A}_n \}_{n\ge1}\),
\begin{equation}\label{eq:S_n_Z_n_comparision}
\sup_{s\in\mathbb R}\left|\mathbb P_{\boldsymbol A_n}\left[\sup_{\tau\in\mathcal T_n}\left|\frac{1}{\sqrt n}\sum_{i=1}^n\frac{\widetilde\psi_i(\tau)}{\sigma_n(\tau)}\right|\le s\right]-\mathbb P_{\boldsymbol A_n }\left[\sup_{1\le i\le p_n}\left|Z_n(\tau_i)\right|\le s\right]\right| =o(1).
\end{equation}

\end{lemma}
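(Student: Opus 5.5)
We reduce \cref{lemma: grid_oracle_gaussian_approx} to a high-dimensional central limit theorem for sums of locally dependent vectors, of the type in \citet{chang2024central} or \citet{fang2021high}. Conditional on $\boldsymbol A_n$, define the $2p_n$-dimensional vectors
\[
X_i=\Bigl(\widetilde\psi_i(\tau_1)/\sigma_n(\tau_1),\ldots,\widetilde\psi_i(\tau_{p_n})/\sigma_n(\tau_{p_n}),\,-\widetilde\psi_i(\tau_1)/\sigma_n(\tau_1),\ldots,-\widetilde\psi_i(\tau_{p_n})/\sigma_n(\tau_{p_n})\Bigr)^\top .
\]
With this doubling, $\sup_{\tau\in\mathcal T_n}|n^{-1/2}\sum_i\widetilde\psi_i(\tau)/\sigma_n(\tau)|$ equals the coordinatewise maximum of $S_n=n^{-1/2}\sum_iX_i$. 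Likewise, $\sup_{1\le i\le p_n}|Z_n(\tau_i)|$ equals the maximum of the doubled Gaussian vector $(\boldsymbol Z_n,-\boldsymbol Z_n)$. The events $\{\max\le s\}$ are hyperrectangles, so \eqref{eq:S_n_Z_n_comparision} is a Kolmogorov-distance bound over rectangles between $S_n$ and a Gaussian with the same covariance.

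\textbf{Step 1 (structure and moments).} As noted in \cref{sec:preliminary}, $\boldsymbol A_n^{(2)}$ is a dependency graph for $(Y_i,W_i,U_i)_{i=1}^n$ conditional on $\boldsymbol A_n$. Hence it is also one for $\{X_i\}$, with maximal degree $\Delta_n\le K_{\max}^2$ by \cref{ass:degree}. By \cref{prop:id} and \cref{assumption:RCT,assumption:CE}, $\mathbb E_{\boldsymbol A_n}[\widetilde\psi_i(\tau)]=0$. By \eqref{eq: bounds_for_zeta_n} together with \cref{assumption: smallest_variance_diagonal}, $\max_{i,j}|X_{ij}|\le B$ for a universal constant $B$, since each $\widetilde\psi_i$ is bounded by $2/c_\zeta$ and $\sigma_n\ge c_\sigma$. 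Moreover, $\operatorname{Cov}_{\boldsymbol A_n}(S_n)$ has unit diagonal by the definition of $\sigma_n$, so the diagonal is bounded below. Its off-diagonal block equals $\boldsymbol\Gamma_n$ (up to signs), which matches the covariance of $(\boldsymbol Z_n,-\boldsymbol Z_n)$ exactly because $\boldsymbol V_n$ is the exact network-conditional covariance $\widetilde{\mathbb V}_{\boldsymbol A_n}$.

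\textbf{Step 2 (apply a dependency-graph Gaussian approximation).} We would invoke a high-dimensional CLT for sums under a dependency graph with bounded maximal degree. One such result is \citet{chang2024central}, which gives a bound of the form $C\{(\Delta_n+1)^2B^2\log^7(p_nn)/n\}^{1/6}$ for bounded variables with nondegenerate variances. An alternative route, if only an i.i.d.-type result is available, reduces to it via the Hajnal--Szemer\'edi cover \eqref{eq:HS_thm}: write $S_n$ as a sum over the $\Delta_n+1$ color classes of independent sums. Each class sum is then approximated using \citet{chernozhukov2017central}. However, summing classes destroys exact independence across classes, so we expect to need the genuinely dependent version, namely Stein's method with the dependency-neighborhood decomposition as in \citet{fang2021high}. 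With $\Delta_n$ and $B$ bounded, either bound becomes $O\bigl((\log^7 p_n/n)^{1/6}\bigr)\to0$ under $(\log p_n)^7/n\to0$.

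\textbf{Main obstacle.} The delicate point is checking that the cited dependent-data theorem's hypotheses hold uniformly in $n$ conditional on $\boldsymbol A_n$. That theorem requires a lower bound on the variances, which is unit here, and it must not require a lower bound on the minimal eigenvalue of the covariance, which would fail here. The latter holds since $\boldsymbol\Gamma_n$ can be nearly singular on fine grids. The Chernozhukov-type rectangle bounds of \citet{chernozhukov2017central} and \citet{chang2024central} rely only on diagonal lower bounds, which is why we take the doubled-vector hyperrectangle formulation rather than a direct anti-concentration argument. If the chosen reference states its bound via $\max$ over local sums $\sum_{j:\ell(i,j)\le2}X_j$, we would control these by $(\Delta_n+1)B$, which is again bounded.
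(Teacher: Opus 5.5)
Your proposal is correct and follows essentially the same route as the paper: you verify that the standardized scores are uniformly bounded (by \eqref{eq: bounds_for_zeta_n} and \cref{assumption: smallest_variance_diagonal}), that they have unit variances, and that $\boldsymbol A_n^{(2)}$ is a dependency graph of bounded degree under \cref{ass:degree}, and you then apply Theorem 2 of \citet{chang2024central} over hyperrectangles. Doubling to $2p_n$ coordinates is harmless but unnecessary, because the event $\{\sup_{\tau\in\mathcal T_n}|\cdot|\le s\}$ already says that the $p_n$-vector lies in the rectangle $[-s,s]^{p_n}$, which is how the paper concludes.
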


\begin{proof}
Define the \(p_n\)-dimensional vector
\begin{equation}\label{eq: tilde_S_n}
\boldsymbol{\widetilde{H}}_n=\left(\widetilde{\mathbb{H}}_{n}(\tau_{1}),\ldots,\widetilde{\mathbb{H}}_{n}(\tau_{p_{n}})\right).
\end{equation}
By construction,
\[
\sup_{\tau\in\mathcal T_n}
\left|
\frac{1}{\sqrt n}
\sum_{i=1}^n
\frac{\widetilde\psi_i(\tau)}{\sigma_n(\tau)}
\right|
=
\sup_{1\leq i\leq p_n}\left| \widetilde{\mathbb{H}}_{n}(\tau_i) \right|.
\]
It is easy to see that $\mathbb E_{\boldsymbol A_n}\big[ \boldsymbol{\widetilde{H}}_n \big]=0$ and $\mathrm{Cov}_{\boldsymbol A_n}(\boldsymbol{\widetilde{H}}_n ) = \boldsymbol\Gamma_n$.  Thus, $\boldsymbol Z_n$ is the centered Gaussian vector with the same conditional covariance matrix as $\boldsymbol{\widetilde H}_n$.

We establish \cref{eq:S_n_Z_n_comparision} by verifying the conditions of Theorem 2 in \cite{chang2024central}. Recall that $\boldsymbol A_n^{(2)}$ is as a dependency graph for $(Y_i,W_i,U_i)_{i=1}^n$, and its maximum degree is $\Delta_n$. In the notation of \cite{chang2024central}, let $D_n$ and $D_n^\ast$ denote the maximum sizes of the first- and second-order neighborhoods, respectively. Because their convention includes each node in its own neighborhood, $D_n\leq \Delta_n+1$, and $D_n^\ast\leq D_n^2\leq(\Delta_n+1)^2$. We now verify Conditions 1 and 3 of \cite{chang2024central}. By \cref{eq: bounds_for_zeta_n}, the scores $\widetilde\psi_i(\tau)$ are uniformly bounded. Together with \cref{assumption: smallest_variance_diagonal}, this gives, for some constant $C<\infty$,
\[
\sup_{1\le i\le n}\sup_{1\le j\le p_n}
\left|
\frac{\widetilde\psi_i(\tau_j)}{\sigma_n(\tau_j)}
\right|
\le C,
\]
which verifies Condition 1 in \cite{chang2024central}. Moreover, for each \(1\leq i\leq p_n\), $\operatorname{Var}_{\boldsymbol A_n}(\widetilde{\mathbb{H}}_{n}(\tau_i) )
=
\widetilde{\mathbb V}_{\boldsymbol A_n}(\tau_i,\tau_i)
/
\sigma_n^2(\tau_i)
=
1$,
so  Condition 3 is satisfied. Applying Theorem 2 of \cite{chang2024central} yields
\[
\sup_{A\in \mathcal{R} }
\left|
\mathbb P_{\boldsymbol A_n}\big[\boldsymbol{\widetilde{H}}_n\in A\big]
-
\mathbb P_{\boldsymbol A_n}\left[\boldsymbol Z_n\in A\right]
\right|
\lesssim
(\Delta_n+1)
\frac{(\log p_n)^{7/6}}{n^{1/6}},
\]
where \(\mathcal{R}\) denotes the class of all hyper-rectangles in
\(\mathbb R^{p_n}\). Finally, for each \(s\in\mathbb R\),
\[
\begin{aligned}
\boldsymbol{\widetilde{H}}_n \in [-s,s]^{p_n}  &   \Leftrightarrow   \sup_{1\leq i\leq p_n}\big|  \widetilde{\mathbb{H}}_{n}(\tau_i) \big|  \leq s ,\\
\boldsymbol Z_n\in[-s,s]^{p_n}  &\Leftrightarrow \sup_{1\leq i\leq p_n}|Z_n(\tau_i)| \leq s .
\end{aligned}
\]
Since \([-s,s]^{p_n}\in\mathcal R\), taking the supremum over \(s\in\mathbb R\) gives
\[
\sup_{s\in\mathbb R}\left|\mathbb P_{\boldsymbol A_n}\left[\sup_{1\leq i\leq p_n}\left| \widetilde{\mathbb{H}}_{n}(\tau_i) \right| \leq s\right]-\mathbb P_{\boldsymbol A_n}\left[\sup_{1\leq i\leq p_n}|Z_n(\tau_i)|\leq s\right]\right|=O\left( \frac{(\Delta_n+1) \left(\log p_n\right)^{7/6}}{n^{1/6}} \right).
\]
Under \cref{ass:degree}, $\Delta_n=O(1)$. Hence $(\log p_n)^7/n\to0$ makes the preceding bound $o(1)$, which completes the proof.
\end{proof}

\begin{lemma}
\label{lemma:psd_corrected_gamma_consistency}
Suppose \cref{ass:degree,assumption: density,assumption:local dependency,assumption:density_est_convergence,assumption: smallest_variance_diagonal} hold. Then, conditional on the realized network sequence \(\{ \boldsymbol{A}_n \}_{n\ge1}\),
\[
\big\|\widehat{\boldsymbol\Gamma}_{n}^+-\boldsymbol\Gamma_n\big\|_{\max}=O_P(\bar{r}_{\mathbb{V}, n} ),
\]
where $\bar{r}_{\mathbb{V}, n} = r_{f,n} \vee n^{-1/4}$.
\end{lemma}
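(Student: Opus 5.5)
The plan has three steps. First bound $\|\widehat{\boldsymbol\Gamma}_n-\boldsymbol\Gamma_n\|_{\max}$ for the raw (uncorrected) estimator. Then use the defining optimality of the nearest-correlation projection \eqref{eq: PS_programming}, together with the fact that the oracle $\boldsymbol\Gamma_n$ is feasible for it. Finally combine the two with the triangle inequality.

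\textbf{Step 1 (raw correlation matrix).} By \cref{lemma: feasible_variance_convergence}, restricted to the grid,
\[
\|\widehat{\boldsymbol V}_n-\boldsymbol V_n\|_{\max}\le \sup_{\tau,\tau'\in\mathcal T}\big|\widehat{\mathbb V}_n(\tau,\tau')-\widetilde{\mathbb V}_{\boldsymbol A_n}(\tau,\tau')\big| = O_P(\bar r_{\mathbb V,n}).
\]
Taking diagonal entries gives $\sup_{\tau}|\widehat\sigma_n^2(\tau)-\sigma_n^2(\tau)|=O_P(\bar r_{\mathbb V,n})$. By \cref{assumption: smallest_variance_diagonal}, $\sigma_n\ge c_\sigma$, so with probability approaching one $\widehat\sigma_n(\tau)\ge c_\sigma/2$ uniformly. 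Then
\[
|\widehat\sigma_n(\tau)-\sigma_n(\tau)| = \frac{|\widehat\sigma_n^2(\tau)-\sigma_n^2(\tau)|}{\widehat\sigma_n(\tau)+\sigma_n(\tau)} = O_P(\bar r_{\mathbb V,n})
\]
uniformly in $\tau$.

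Each entry of $\widehat{\boldsymbol\Gamma}_n-\boldsymbol\Gamma_n$ has the form $\widehat V_{jk}/(\widehat\sigma_j\widehat\sigma_k)-V_{jk}/(\sigma_j\sigma_k)$. Add and subtract $V_{jk}/(\widehat\sigma_j\widehat\sigma_k)$. This splits the entry into two pieces:
\begin{itemize}
\item $|\widehat V_{jk}-V_{jk}|/(\widehat\sigma_j\widehat\sigma_k)$;
\item $|V_{jk}|\,|\sigma_j\sigma_k-\widehat\sigma_j\widehat\sigma_k|/(\widehat\sigma_j\widehat\sigma_k\sigma_j\sigma_k)$.
\end{itemize}
On the high-probability event above, both pieces are $O_P(\bar r_{\mathbb V,n})$ uniformly in $(j,k)$. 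This uses $|V_{jk}|\le\sigma_j\sigma_k\le c_\sigma'^2$ and the lower bounds on the denominators. Because all bounds are suprema over $\mathcal T\times\mathcal T$, they are free of $p_n$, so no union bound over the grid is needed. The result is
\[
\|\widehat{\boldsymbol\Gamma}_n-\boldsymbol\Gamma_n\|_{\max}=O_P(\bar r_{\mathbb V,n}).
\]

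\textbf{Step 2 (projection).} $\boldsymbol\Gamma_n=\boldsymbol\Lambda_n^{-1}\boldsymbol V_n\boldsymbol\Lambda_n^{-1}$ is a congruence of a covariance matrix. Indeed, $\boldsymbol V_n$ is the conditional covariance of $(n^{-1/2}\sum_i\widetilde\psi_i(\tau_j))_j$, hence PSD. Also $\mathrm{diag}(\boldsymbol\Gamma_n)=1$. So $\boldsymbol\Gamma_n$ lies in the feasible set of \eqref{eq: PS_programming}. That feasible set is nonempty and compact (entries lie in $[-1,1]$) and the objective is continuous, so a minimizer $\widehat{\boldsymbol\Gamma}_n^+$ exists. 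Its optimality gives
\[
\|\widehat{\boldsymbol\Gamma}_n-\widehat{\boldsymbol\Gamma}_n^+\|_{\max}\le\|\widehat{\boldsymbol\Gamma}_n-\boldsymbol\Gamma_n\|_{\max}.
\]

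\textbf{Step 3 (conclusion).} By the triangle inequality,
\[
\|\widehat{\boldsymbol\Gamma}^+_n-\boldsymbol\Gamma_n\|_{\max}\le 2\|\widehat{\boldsymbol\Gamma}_n-\boldsymbol\Gamma_n\|_{\max}=O_P(\bar r_{\mathbb V,n}).
\]

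The only delicate point is Step 1's handling of the random denominators $\widehat\sigma_n(\tau)$. It requires the uniform (in $\tau$) variance consistency from \cref{lemma: feasible_variance_convergence} together with the uniform lower bound $c_\sigma$. This ensures the normalization is well defined with probability tending to one and that the rate transfers from $\widehat{\boldsymbol V}_n$ to $\widehat{\boldsymbol\Gamma}_n$ without any dependence on $p_n$.
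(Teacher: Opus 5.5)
Your proposal is correct and follows essentially the same argument as the paper. It obtains an entrywise $O_P(\bar r_{\mathbb V,n})$ bound for $\widehat{\boldsymbol\Gamma}_n-\boldsymbol\Gamma_n$ from \cref{lemma: feasible_variance_convergence}, the lower bound $c_\sigma$ (so that $\widehat\sigma_n\ge c_\sigma/2$ with probability approaching one) and $|V_n(j,k)|\le\sigma_n(\tau_j)\sigma_n(\tau_k)$, then uses feasibility of $\boldsymbol\Gamma_n$ in \eqref{eq: PS_programming} and the triangle inequality to get the factor $2$; your remark that a minimizer exists because the feasible set is compact is a small detail the paper leaves implicit.
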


\begin{proof}
First, by \cref{lemma: feasible_variance_convergence}, we have $\big\|
\widehat{\boldsymbol V}_n-\boldsymbol V_n
\big\|_{\max}
=
O_P\left(\bar{r}_{\mathbb{V}, n} \right)$. In particular,
\[
\sup_{1\le i\le p_n}
\left|
\widehat\sigma_n^2(\tau_i)-\sigma_n^2(\tau_i)
\right|
=
O_P\left(\bar{r}_{\mathbb{V}, n} \right).
\]
By \cref{assumption: smallest_variance_diagonal}, there exists
\(c_\sigma>0\) such that $\inf_{\tau\in\mathcal T}\sigma_n(\tau)\ge c_\sigma$. So, with probability approaching one,
\(\widehat\sigma_n(\tau_i)\ge c_\sigma/2\) for all  $i \in \left[p_n\right]$. Thus,
\[
\sup_{1\le i\le p_n}
\left|
\widehat\sigma_n^{-1}(\tau_i)-\sigma_n^{-1}(\tau_i)
\right|
=
O_P(\bar{r}_{\mathbb{V}, n} ).
\]
For any $1\le j,k\le p_n$,
\begin{align}
\big|\widehat{\Gamma}_n(j,k)-\Gamma_n(j,k)\big|
&=\left|\frac{\widehat{V}_n(j,k)}{\widehat\sigma_n(\tau_j)\widehat\sigma_n(\tau_k)}-\frac{V_n(j,k)}{\sigma_n(\tau_j)\sigma_n(\tau_k)}\right| \notag\\
&\le\left|\widehat{V}_n(j,k)- V_n(j,k)\right|\left|\widehat\sigma_n^{-1}(\tau_j)\widehat\sigma_n^{-1}(\tau_k)\right| \label{eq:gamma_bound_1}\\
& +\left| V_n(j,k)\right|\left|\widehat\sigma_n^{-1}(\tau_j)\widehat\sigma_n^{-1}(\tau_k)-\sigma_n^{-1}(\tau_j)\sigma_n^{-1}(\tau_k)\right|. \label{eq:gamma_bound_2}
\end{align}
The term in \eqref{eq:gamma_bound_1} is \(O_P(\bar{r}_{\mathbb{V}, n} )\) uniformly over \(j,k\) by \cref{lemma: feasible_variance_convergence}. For  \eqref{eq:gamma_bound_2},  the Cauchy-Schwarz inequality gives $|V_n(j,k)|\le\sigma_n(\tau_j)\sigma_n(\tau_k)$. Moreover,  $\widehat{\sigma}_n^{-1}(\tau)$ are uniformly bounded over $\tau$ with probability approaching one. Hence \eqref{eq:gamma_bound_2} is also $O_P(\bar r_{\mathbb V,n})$ uniformly over $j,k$. Therefore,
\[
\big\|
\widehat{\boldsymbol\Gamma}_n-\boldsymbol\Gamma_n
\big\|_{\max}
=
O_P(\bar{r}_{\mathbb{V}, n} ).
\]

Because \(\boldsymbol\Gamma_n = \boldsymbol{\Lambda}_n^{-1}   \boldsymbol{V}_n \boldsymbol{\Lambda}_n^{-1} \) is PSD and satisfies $\mathrm{diag}(\boldsymbol\Gamma_n)=\boldsymbol 1_{p_n}$. Therefore, $\boldsymbol\Gamma_n$ is feasible for \eqref{eq: PS_programming}. The optimality of $\widehat{\boldsymbol\Gamma}_n^+$ implies
\[
\big\|\widehat{\boldsymbol\Gamma}_{n}^+-\widehat{\boldsymbol\Gamma}_n\big \|_{\max}\le\big\|\boldsymbol\Gamma_n-\widehat{\boldsymbol\Gamma}_n\big \|_{\max}.
\]
By the triangle inequality,
\[
\big\|\widehat{\boldsymbol\Gamma}_{n}^+-\boldsymbol\Gamma_n\big\|_{\max}\le\big\|\widehat{\boldsymbol\Gamma}_{n}^+-\widehat{\boldsymbol\Gamma}_n\big\|_{\max}+\big\|\widehat{\boldsymbol\Gamma}_n-\boldsymbol\Gamma_n\big \|_{\max}\le2\big\|\widehat{\boldsymbol\Gamma}_n-\boldsymbol\Gamma_n\big\|_{\max} = O_P\left(\bar{r}_{\mathbb{V}, n} \right).
\]
This completes the proof.
\end{proof}

\begin{proof}[Proof of \cref{theorem: feasible_gaussian_approximation}]
Recall that $\boldsymbol Z_n=\big(Z_n(\tau_i):1\le i\le p_n\big)\sim\mathrm{N}\left(0,\boldsymbol\Gamma_n\right)$, and $\widehat{\boldsymbol{Z}}_n  = \big( \widehat{Z}_n(\tau_i): 1\leq i \leq p_n  \big) \sim \mathrm{N}\big(0, \widehat{\boldsymbol{ \Gamma}}^+_n\big)$. Define
\[
\widehat{T}_n
=
\sup_{\tau\in\mathcal T}
\left|
\frac{\sqrt n\left(\widehat q(\tau)-q(\tau)\right)}
{\widehat{\sigma}_n(\tau)}
\right| \quad \text{and} \quad
\widetilde{T}_n
=
\sup_{\tau\in\mathcal T_n}
\left|
\frac{1}{\sqrt n}
\sum_{i=1}^n
\frac{\widetilde\psi_i(\tau)}{\sigma_n(\tau)}
\right|.
\]
We prove
\[
\sup_{s\in\mathbb R}\left| \mathbb P_{\boldsymbol A_n}\big[\widehat{T}_n \le s\big] - \mathbb P_\ast\left[\sup_{1\leq i \leq p_n}\left|\widehat{Z}_n(\tau_i)\right|\le s\right] \right|=o_P(1),
\]
in three steps.

\noindent \underline{\bf Step 1.} We first compare the distributions of $\sup_{1\leq i \leq p_n}|  \widehat{Z}_n(\tau_i)|$ and $\sup_{1\leq i \leq p_n} \left|  Z_n(\tau_i) \right| $. By \cref{lemma:psd_corrected_gamma_consistency} and  \(\bar{r}_{\mathbb{V}, n} \left(\log p_n\right)^2 =o(1) \), it follows that
\[
\big\|
\widehat{\boldsymbol\Gamma}_n^+
-
\boldsymbol\Gamma_n
\big\|_{\max}
\left(\log p_n\right)^2
=o_P(1).
\]
The Gaussian comparison inequality in Lemma C.1 in \cite{chen2018gaussian} or  Theorem 4.1 of \cite{chernozhukov2017central} gives
\begin{equation}\label{eq: comparison_1}
\begin{aligned}
\sup_{s\in\mathbb R}
\left|
\mathbb P_\ast\left[
\sup_{1\le i\le p_n}|\widehat Z_n(\tau_i)|\le s
\right]
-
\mathbb P_{\boldsymbol A_n}\left[
\sup_{1\le i\le p_n}|Z_n(\tau_i)|\le s
\right]
\right| 
& \lesssim 
\big\|
\widehat{\boldsymbol\Gamma}_n^+
-
\boldsymbol\Gamma_n
\big\|_{\max}^{1/3}
\left(\log p_n\right)^{2/3}\\
& =  O_P\left(\bar{r}_{\mathbb{V}, n}^{1/3}  \log^{2/3} p_n\right)\\
& = o_P(1).
\end{aligned}
\end{equation}

\noindent \underline{\bf Step 2.} We next compare $\widetilde T_n$ with  $\sup_{1\le i\le p_n}|Z_n(\tau_i)|$. By \cref{lemma: grid_oracle_gaussian_approx},
\begin{equation}\label{eq: comparison_2}
\sup_{s\in\mathbb R}
\left|
\mathbb P_{\boldsymbol A_n}\big[\widetilde{T}_n\leq s\big]
-
\mathbb P_{\boldsymbol A_n}
\left[
\sup_{1\le i\le p_n}|Z_n(\tau_i)|\le s
\right]
\right|
=o(1).
\end{equation}

\noindent \underline{\bf Step 3.} We compare $\widetilde{T}_n$ with $\widehat{T}_n$. In particular, we aim to show 
\begin{equation}\label{eq: comparison_3}
\begin{aligned}
\sup_{s \in \mathbb{R}} \left| \mathbb P_{\boldsymbol A_n}\big[\widetilde{T}_n\leq s\big]
-
\mathbb P_{\boldsymbol A_n}
\big[ \widehat T_n \leq s
\big] \right| = o_P(1).
\end{aligned}
\end{equation}
Let $\epsilon_n \searrow 0$ be a sequence satisfying 
\[
  \bar{r}_{\mathbb{V}, n} \sqrt{\log p_n} + \eta_n  = o\left( \epsilon_n \right) \quad \text{and} \quad \epsilon_n \sqrt{\log p_n} = o(1).
\]
Such a sequence exists under the condition $\bar{r}_{\mathbb{V}, n} \log p_n + \eta_n \sqrt{\log p_n} = o(1)$. We verify below that $|\widehat{T}_n - \widetilde{T}_n| = o_P(\epsilon_n)$.

For any $s \in \mathbb{R}$, we have 
\[
\mathbb{P}_{\boldsymbol{A}_n}\big[\widehat{T}_n \leq s\big] \leq  \mathbb{P}_{\boldsymbol{A}_n}\big[ \widetilde{T}_n \leq s + \epsilon_n\big] + \mathbb{P}_{\boldsymbol{A}_n}\big[|\widetilde{T}_n - \widehat{T}_n| > \epsilon_n \big]. 
\]
Consequently, subtracting $\mathbb{P}_{\boldsymbol{A}_n}\big[ \widetilde{T}_n \leq s \big]$ from both hand sides gives
\[
\mathbb{P}_{\boldsymbol{A}_n} \big[\widehat{T}_n  \leq s\big] -  \mathbb{P}_{\boldsymbol{A}_n}\big[ \widetilde{T}_n \leq s \big]   \leq  \mathbb{P}_{\boldsymbol{A}_n}\big[ s < \widetilde{T}_n \leq s + \epsilon_n \big] + \mathbb{P}_{\boldsymbol{A}_n}\big[|\widetilde{T}_n - \widehat{T}_n| > \epsilon_n \big ]. 
\]
Interchanging the roles of \(\widehat{T}_n\) and \(\widetilde{T}_n\), applying the same argument gives
\[
\mathbb{P}_{\boldsymbol{A}_n}\big[\widetilde{T}_n \leq s\big]
-
\mathbb{P}_{\boldsymbol{A}_n} \big[\widehat{T}_n  \leq s\big]
\leq 
\mathbb{P}_{\boldsymbol{A}_n}\big[s - \epsilon_n< \widetilde{T}_n\leq s\big]
+
\mathbb{P}_{\boldsymbol{A}_n}\big[|\widehat{T}_n-\widetilde{T}_n|>\epsilon_n \big].
\]
Combining the two inequalities and applying \cref{lemma: grid_oracle_gaussian_approx}, it follows that 
\[
\begin{aligned}
\sup_{s \in \mathbb{R}} \left| \mathbb P_{\boldsymbol A_n}\big[\widetilde{T}_n\leq s\big]
-
\mathbb P_{\boldsymbol A_n}
\big[ \widehat T_n \leq s
\big] \right| & \leq \sup_{s \in \mathbb{R}} \mathbb{P}_{\boldsymbol{A}_n} \big[ |\widetilde{T}_n - s| \leq \epsilon_n  \big] + \mathbb{P}_{\boldsymbol A_n}\big[  |\widetilde{T}_n - \widehat T_n| >\epsilon_n  \big] \\
& \leq \sup_{s \in \mathbb{R}} \mathbb{P}_{\boldsymbol{A}_n} \left[ \left|\sup_{1\le i\le p_n}
\left|
Z_n(\tau_i)
\right|  - s \right| \leq \epsilon_n  \right] + o_P(1),
\end{aligned}
\]
provided $|\widetilde{T}_n - \widehat T_n|  = o_P(\epsilon_n)$. Define $\left\{ \upsilon_i: 1\leq i \leq 2p_n \right\}$ such that $\upsilon_i =  Z_n(\tau_i)$ for $1\leq i \leq p_n$, and $\upsilon_i =  - Z_n(\tau_{i-p_n})$ for $p_n + 1 \leq i \leq 2p_n$. By construction, $\sup_{1\le i\le p_n}
\left|Z_n(\tau_i)\right| = \sup_{1\leq i \leq 2p_n} \upsilon_i$. Applying Lemma 2.1 in  \cite{chernozhukov2013gaussian} and \cref{assumption: smallest_variance_diagonal} therefore gives 
\[
\begin{aligned}
\sup_{s \in \mathbb{R}} \left| \mathbb P_{\boldsymbol A_n}\big[\widetilde{T}_n\leq s\big]
-
\mathbb P_{\boldsymbol A_n}
\big[ \widehat T_n \leq s
\big] \right|   &  \leq    \sup_{s \in \mathbb{R}} \mathbb{P}_{\boldsymbol{A}_n} \left[ \left|\sup_{1\leq i \leq 2p_n} \upsilon_i  - s \right| \leq \epsilon_n  \right] + o_P(1) \\
& \lesssim   \epsilon_n \left( \sqrt{\log p_n} + \sqrt{ 1 \vee \log (  1 /\epsilon_n )  } \right) + o_P(1)\\
& = o_P(1).
\end{aligned}
\]

It remains to verify that $|\widetilde{T}_n - \widehat T_n|  = o_P(\epsilon_n)$. Let $T_n$ be the oracle version of $\widehat{T}_n$, obtained by replacing $\widehat{\sigma}_n(\tau)$ with  $\sigma_n(\tau)$: 
\[
T_n = \sup_{\tau \in \mathcal{T}}  \left|  \frac{\sqrt{n} \left( \widehat{q}(\tau) - q(\tau) \right) }{ \sigma_n(\tau) } \right|. 
\]
By \cref{lemma: feasible_variance_convergence}, $\sup_{\tau\in\mathcal T}\left|\widehat\sigma_n^2(\tau)-\sigma_n^2(\tau)\right|=O_P(\bar{r}_{\mathbb{V}, n} )$. Together with \cref{assumption: smallest_variance_diagonal}, this implies
\[
\sup_{\tau\in\mathcal T}
\left|
\frac{\widehat\sigma_n(\tau)}{\sigma_n(\tau)}-1
\right|
=
O_P(\bar{r}_{\mathbb{V}, n} ).
\]
Consequently,
\[
\big|\widehat T_n-T_n \big|\le
\sup_{\tau\in\mathcal T}
\left|
\frac{\sqrt n \left(\widehat q(\tau)-q(\tau)\right) }
{\sigma_n(\tau)}
\right| \times 
\sup_{\tau\in\mathcal T}
\left|
\frac{\sigma_n(\tau)}{\widehat\sigma_n(\tau)}-1
\right| = O_P\left( T_n \bar{r}_{\mathbb{V}, n}  \right).
\]
By  \cref{lemma: grid approximation},
\[
T_n
=
\sup_{\tau\in\mathcal T_n}
\left|
\frac1{\sqrt n}
\sum_{i=1}^n
\frac{\widetilde\psi_i(\tau)}{\sigma_n(\tau)}
\right|
+ O_P(\eta_n) =  \sup_{1\leq i\leq p_n}\big| \widetilde{\mathbb H}_{n}(\tau_i) \big| + O_P(\eta_n) ,
\]
According to \cref{lemma: grid_oracle_gaussian_approx}, the distribution of $\sup_{1\leq i\leq p_n}\big| \widetilde{\mathbb H}_{n}(\tau_i) \big|$ can be
approximated by $\sup_{1\le i\le p_n}\left|Z_n(\tau_i)\right|$. Lemma 2.1(b) of \cite{chernozhukov2013gaussian} and Markov's inequality give
\[
\sup_{1\le i\le p_n}|Z_n(\tau_i)|=O_P\left(\sqrt{\log p_n} \right) \Rightarrow
T_n=O_P\left(\sqrt{\log p_n}\right).
\]
Since \(\bar{r}_{\mathbb{V}, n} \log^2 p_n=o(1)\), it follows that
$\big|\widehat T_n-T_n\big|
=
O_P\left(\bar{r}_{\mathbb{V}, n} \sqrt{\log p_n} \right)$.  Moreover, \cref{lemma: grid approximation} gives $| T_n - \widetilde{T}_n | = O_P(\eta_n) $. Therefore, the triangle inequality and the definition of $\epsilon_n$ yield
\[
|\widetilde{T}_n - \widehat T_n|  \leq  \big|\widehat T_n-T_n\big|  +  | T_n - \widetilde{T}_n |  = O_P\left(  \bar{r}_{\mathbb{V}, n} \sqrt{\log p_n}  + \eta_n \right) = o_P(\epsilon_n).
\]

\noindent \underline{\bf Wrap up.} Combining \cref{eq: comparison_1,eq: comparison_2,eq: comparison_3} and applying the triangle inequality proves the theorem.
\end{proof}

\section{Auxiliary  Lemmas}

This section collects auxiliary lemmas used in the proofs of 
\cref{appendix:proof for section 3,appendix:proof for section 4}.

\begin{lemma}\label{lemma: Variance_Color_K}
Recall the definition of \(\widetilde{\sigma}_{n,k}^2\) in \cref{eq:sigma_nk}. Suppose that the conditions of \cref{lemma: Bahadur Representation-rate} hold. For every fixed \(M>0\), there exists a constant \(K>0\), independent of \(n\) and \(k\), such that, for each \(k=1,\ldots,\Delta_n+1\) and all sufficiently large \(n\),
\[
\mathbb{E}_{\boldsymbol A_n} \big[\widetilde{\sigma}_{n,k}^2  \big] \leq  K (\Delta_n+1) n^{-3/2}.
\]
\end{lemma}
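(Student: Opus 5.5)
The plan is to first reduce $\widetilde{\sigma}_{n,k}^2$ to a supremum of local empirical frequencies, and then control that supremum by a fixed binning of the range of $q_w(\cdot)$. Only the summands with $\mathds{1}_i(w)=1$ contribute, as in the definition of $\nu_n$ in Step 2. For these, refine \cref{eq:squared_R}: the integrand in $R_i(u,\tau)$ is nonzero only when $Y_i$ lies between $q_w(\tau)$ and $q_w(\tau)+u/\sqrt n$. Hence, uniformly over $|u|\le M$,
\[
\mathds{1}_i(w)\,|R_i(u,\tau)|^2\le \frac{M^2}{n}\,\mathds{1}_i(w)\,\mathds{1}\bigl\{|Y_i-q_w(\tau)|\le M/\sqrt n\bigr\}.
\]
Therefore
\[
\widetilde{\sigma}_{n,k}^2\le \frac{M^2}{n}\sup_{\tau\in\mathcal T}\frac{1}{|\mathcal C_k|}\sum_{i\in\mathcal C_k}\mathds{1}_i(w)\,\mathds{1}\bigl\{|Y_i-q_w(\tau)|\le M/\sqrt n\bigr\},
\]
and it suffices to show that the expectation of the supremum on the right is $\lesssim (\Delta_n+1)n^{-1/2}$.

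\emph{Key point.} The supremum sits outside the average, so the crude bound $\sup_\tau \mathds{1}\{\cdot\}\le 1$ is useless. I will therefore discretize. By \cref{assumption: density}, $\tau\mapsto q_w(\tau)$ is continuous on the compact $\mathcal T$, so $\{q_w(\tau):\tau\in\mathcal T\}\subseteq[a,b]$. Partition $[a-M/\sqrt n,\,b+M/\sqrt n]$ into $N_n\lesssim\sqrt n$ consecutive bins $B_1,\dots,B_{N_n}$, each of width $M/\sqrt n$. Every interval $[q_w(\tau)-M/\sqrt n,\,q_w(\tau)+M/\sqrt n]$ is covered by at most three consecutive bins. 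Writing $X_{k,j}=|\mathcal C_k|^{-1}\sum_{i\in\mathcal C_k}\mathds{1}_i(w)\mathds{1}\{Y_i\in B_j\}$, the supremum above is at most $3\max_{j\le N_n}X_{k,j}$.

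\emph{Means.} For $n$ large, all bins lie in the open neighborhood of \cref{assumption: density} on which $f_w$ is bounded by $\bar f$. Since the conditional law of $Y_i$ given $W_i=w$ and $\boldsymbol A_n$ is $F_w$ (as in \cref{eq:Taylor_exp}),
\[
\mathbb E_{\boldsymbol A_n}\bigl[\mathds{1}_i(w)\mathds{1}\{Y_i\in B_j\}\bigr]\le \bar f M/\sqrt n .
\]

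\emph{Maximum over bins.} Conditional on $\boldsymbol A_n$, the summands within $\mathcal C_k$ are independent Bernoulli variables, because $\mathcal C_k$ is a color class of the dependency graph $\boldsymbol A_n^{(2)}$. The variance of each summand is at most its mean, $\bar fM/\sqrt n$. A Bernstein-type maximal inequality over the $N_n$ bins then gives
\[
\mathbb E_{\boldsymbol A_n}\max_j X_{k,j}\le \frac{\bar fM}{\sqrt n}+C\left(\sqrt{\frac{n^{-1/2}\log N_n}{|\mathcal C_k|}}+\frac{\log N_n}{|\mathcal C_k|}\right).
\]
By \cref{eq:HS_thm}, $|\mathcal C_k|\ge n/(\Delta_n+1)-1\gtrsim n/(\Delta_n+1)$, and $\log N_n\lesssim\log n$. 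The two remainder terms are thus of order $(\Delta_n+1)^{1/2}n^{-3/4}\sqrt{\log n}$ and $(\Delta_n+1)n^{-1}\log n$, both $o(n^{-1/2})$ (indeed $\Delta_n\le K_{\max}^2$). Multiplying by $3M^2/n$ yields $\mathbb E_{\boldsymbol A_n}[\widetilde{\sigma}_{n,k}^2]\le K(\Delta_n+1)n^{-3/2}$ for all large $n$, with $K$ depending only on $M$, $\bar f$, $a$ and $b$.

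The main obstacle is exactly this interchange of the supremum and the average. A direct VC/symmetrization bound for the class of indicator functions of intervals of width $2M/\sqrt n$ would yield only $n^{-1/2}\sqrt{\log n}$, losing a logarithm relative to the target $n^{-1/2}$. The binning argument avoids that loss because each bin count has a small variance, of order $n^{-1/2}$, so the fluctuations are of lower order than the mean.
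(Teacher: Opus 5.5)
Your proof is correct, but it takes a different route from the paper. The paper works with $|R_i(u,\tau)|$ itself. It bounds $\mathbb{E}_{\boldsymbol A_n}\sup_{\tau,|u|\le M}\sum_{i\in\mathcal C_k}|R_i(u,\tau)|$ by two pieces. The first is the sum of means, of order $|\mathcal C_k|/n$ via the density computation in \cref{eq:Taylor_exp}. The second is a symmetrized VC fluctuation term for the class $\mathcal H_n^\prime$, evaluated with the constant envelope $M/\sqrt n$, of order $\sqrt{|\mathcal C_k|/n}$. Both are $O(1)$; then $|R_i|^2\le (M/\sqrt n)|R_i|$ and division by $|\mathcal C_k|\gtrsim n/(\Delta_n+1)$ finish the proof. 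You instead dominate $|R_i|^2$ by $(M^2/n)$ times a window indicator, which eliminates $u$. You then replace the supremum over $\tau$ by a maximum over $O(\sqrt n)$ fixed bins, controlled with Bernstein's inequality. Your route is more elementary, with no symmetrization or entropy integrals. It also yields a sharper intermediate statement: the fluctuation is $o(n^{-1/2})$, hence of smaller order than the mean. The paper's route is shorter because it reuses the VC-subgraph property of $\mathcal H_n^\prime$ already invoked in Step 2.

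Your closing remark about the ``main obstacle'' is not right, though: a direct VC bound loses no logarithm. Let $\chi_i(\tau)=\mathds 1_i(w)\mathds 1\{|Y_i-q_w(\tau)|\le M/\sqrt n\}$. Even the crude bound with envelope $1$ gives $\mathbb E_{\boldsymbol A_n}\sup_\tau\bigl|\,|\mathcal C_k|^{-1}\sum_{i\in\mathcal C_k}\{\chi_i(\tau)-\mathbb E_{\boldsymbol A_n}\chi_i(\tau)\}\bigr|\lesssim |\mathcal C_k|^{-1/2}\lesssim\sqrt{(\Delta_n+1)/n}$. This is already within the target $(\Delta_n+1)n^{-1/2}$, which is exactly why the paper's unlocalized bound suffices. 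So binning is an alternative, not a necessity.

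Two small points. First, \cref{eq:sigma_nk} contains no factor $\mathds 1_i(w)$, so say explicitly that you bound the version with it. That version is all Step 2 needs, and the paper's own first-moment bound via \cref{eq:Taylor_exp} tacitly does the same, since no density bound near $\{q_w(\tau):\tau\in\mathcal T\}$ is assumed for other exposure states. Second, if $\mathcal T$ is not an interval, restrict the maximum to bins meeting some window $[q_w(\tau)-M/\sqrt n,\,q_w(\tau)+M/\sqrt n]$. Bins lying in gaps of $[a,b]$ need not be inside the neighborhood of \cref{assumption: density}.
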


\begin{proof}[Proof of \cref{lemma: Variance_Color_K}]
For each \(k=1,\ldots,\Delta_n+1\), define 
\[
\mathcal{J}_{n,k} = \mathbb{E}_{\boldsymbol A_n}\left[ \sup_{\tau \in \mathcal{T}, |u|\leq M} \left| \sum_{i \in \mathcal{C}_k} \Big( |R_i(u, \tau)| - \mathbb{E}\left[|R_i(u, \tau)| | \boldsymbol{A}_n\right] \Big)  \right| \right].
\]
Conditional on \(\boldsymbol A_n\), the variables indexed by \(i\in\mathcal C_k\) are mutually independent. Applying Lemma 2.3.1 in \cite{vaart2023empirical} therefore yields
\[
\mathcal{J}_{n,k}  \leq 2 \mathbb{E}_{\boldsymbol A_n}\left[ \mathbb{E}_\varepsilon \left[ \sup_{\tau \in \mathcal{T}, |u|\leq M} \left| \sum_{i \in \mathcal{C}_k} \varepsilon_i \left|R_i(u, \tau)\right| \right| \right]  \right].
\]
where $\{ \varepsilon_i \}_{i=1}^n$ are i.i.d. Rademacher variables independent of the data.

The bound in \cref{eq:squared_R} implies that $\lvert R_i(u,\tau)\rvert \le M/\sqrt{n}$ uniformly over $\tau \in \mathcal{T}$ and $|u|\leq M$. Hence, the constant function $M/\sqrt{n}$ is an envelope for the class $\mathcal{H}_n^\prime \equiv  \{Y_i \mapsto R_i(u,\tau): \tau \in \mathcal{T},  |u|\leq M \}$.
Since $\mathrm{VC}( \mathcal{H}_n^\prime  )$ is uniformly bounded, 
Corollary 2.2.8 in \cite{vaart2023empirical} implies that there are universal constants $A,K > 0$ such that
\[
\begin{aligned}
\mathbb{E}_\varepsilon \left[ \sup_{\tau \in \mathcal{T}, |u|\leq M} \left| \frac{1}{\sqrt{|\mathcal{C}_k|}} \sum_{i \in \mathcal{C}_k} \varepsilon_i |R_i(u, \tau)| \right| \right] 
& \leq K  \int_0^{\widetilde{\sigma}_{n,k}} \sqrt{2  \log \left( \frac{A /\sqrt{n} }{\epsilon} \right) }\mathrm{d}\epsilon   \\
& \leq \frac{2  K}{\sqrt{n}} \varphi ( \sqrt{n}\widetilde{\sigma}_{n,k} ) \lesssim 1/\sqrt{n},
\end{aligned}
\]
where $\varphi(x)=x\sqrt{2\log(A/x)}$. The last inequality follows because \(\sqrt n\,\widetilde{\sigma}_{n,k}\le M\) and \(\varphi\) is bounded on \([0,M]\), after choosing $M<A$. Consequently,
\[
\mathcal{J}_{n,k} \lesssim \sqrt{|\mathcal{C}_k|/n} = O(1).
\]
An argument analogous to that used in \cref{eq:Taylor_exp} yields
 \[
 \sup_{i\in\mathcal{C}_{k}} \sup_{\tau \in \mathcal{T}, |u|\leq M}  \mathbb{E}_{\boldsymbol A_n}\left[ \left|R_i(u, \tau) \right|     \right] = O\left(1/n\right).
 \]
 Therefore, we have
\begin{equation*}
\begin{aligned}
\mathbb{E}_{\boldsymbol A_n} \left[ \sup_{\tau \in \mathcal{T} , |u|\leq M } \sum_{i \in \mathcal{C}_k}  \left|R_i(u, \tau) \right| \right] &  \leq  \sup_{\tau \in \mathcal{T}, |u|\leq M } \sum_{i \in \mathcal{C}_k}  \mathbb{E}_{\boldsymbol A_n}\left[ \left|R_i(u, \tau) \right|     \right] \\
& + \mathbb{E}_{\boldsymbol A_n}\left[  \sup_{\tau \in \mathcal{T}, |u|\leq M } \left| \sum_{i \in \mathcal{C}_k}   \Big( |R_i(u, \tau)| - \mathbb{E}_{\boldsymbol A_n}\left[|R_i(u, \tau)| \right] \Big)  \right|   \right]  \\
&   \lesssim   |\mathcal{C}_k| / n  + \mathcal{J}_{n,k} = O(1).
\end{aligned}
\end{equation*}
Since $\lvert R_i(u,\tau)\rvert \le M/\sqrt{n}$, we have
\[
\begin{aligned}
\mathbb{E}_{\boldsymbol A_n}\left[ \sup_{\tau \in \mathcal{T}, |u|\leq M  }  \sum_{i\in \mathcal{C}_k } \left| R_i(u, \tau) \right|^2  \right] & \leq \frac{M}{\sqrt{n}}   \mathbb{E}_{\boldsymbol A_n}\left[ \sup_{\tau \in \mathcal{T}, |u|\leq M }  \sum_{i\in \mathcal{C}_k } \left| R_i(u, \tau) \right|    \right] \\
& \lesssim   \frac{M}{\sqrt{n}} \cdot  O(1)   = O \left( \frac{1}{\sqrt{n}} \right). 
\end{aligned}
\]
It follows that
\[
\mathbb{E}_{\boldsymbol A_n} \big[\widetilde{\sigma}_{n,k}^2  \big] = \frac{1}{|\mathcal{C}_k|} \mathbb{E}_{\boldsymbol A_n}\left[ \sup_{\tau \in \mathcal{T}, |u|\leq M}  \sum_{i\in \mathcal{C}_k } \left| R_i(u, \tau) \right|^2 \right]  = O \left(\frac{\Delta_n + 1}{n^{3/2}} \right).
\]
This completes the proof.
\end{proof}

\begin{lemma}
\label{lemma:L1n_uniform_bounded}
Suppose that the conditions of
\cref{lemma: Bahadur Representation-rate} hold. Recall that
\(L_{1,n}(\tau)\) is defined by
\[
L_{1,n}(\tau)
=
\frac{1}{\sqrt n}
\sum_{i=1}^n
\mathds{1}_i(w) \left(\tau-\mathds{1}\left\{Y_i\le q_{w}(\tau)\right\} \right).
\]
Then, conditional on  \(\{ \boldsymbol{A}_n \}_{n\ge1}\),
\[
\sup_{\tau\in\mathcal T}
\left|
L_{1,n}(\tau)
\right|
=
O_P(1).
\]
\end{lemma}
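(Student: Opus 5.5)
The plan is to treat $L_{1,n}$ as a bounded empirical process over a VC class and control it with the same coloring-plus-symmetrization argument used in Step~2 of the proof of \cref{lemma: Bahadur Representation-rate}. Under \cref{assumption:Interference,assumption:normal}, on $\{W_i=w\}$ we have $\mathds{1}\{Y_i\le q_w(\tau)\}=\mathds{1}\{U_i\le\tau\}$. Hence
\[
L_{1,n}(\tau)=\frac{1}{\sqrt n}\sum_{i=1}^n g_\tau(W_i,U_i),\qquad g_\tau(\bar w,\bar u)=\mathds{1}\{\bar w=w\}\bigl(\tau-\mathds{1}\{\bar u\le\tau\}\bigr).
\]
Each summand is centered conditionally on $\boldsymbol A_n$. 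To see this, use $U_i\mid W_i=w\sim\mathrm{Unif}(0,1)$ from the proof of \cref{prop:id}, together with \cref{assumption:RCT,assumption:CE}, which make this conditional law unaffected by $\boldsymbol A_n$. The class $\mathcal G=\{g_\tau:\tau\in\mathcal T\}$ has envelope $1$ and is VC-subgraph with bounded index, since it is a product of a fixed indicator with a half-line indicator shifted by a monotone function of $\tau$.

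The first step is to apply Markov's inequality, so that it suffices to show $\mathbb E_{\boldsymbol A_n}[\sup_{\tau}|L_{1,n}(\tau)|]=O(1)$. Next, I use the Hajnal--Szemer\'edi proper cover $\{\mathcal C_k\}_{k=1}^{\Delta_n+1}$ of $\boldsymbol A_n^{(2)}$ from \eqref{eq:HS_thm}. The triangle inequality gives
\[
\mathbb E_{\boldsymbol A_n}\Bigl[\sup_{\tau}|L_{1,n}(\tau)|\Bigr]\le \sum_{k=1}^{\Delta_n+1}\mathbb E_{\boldsymbol A_n}\Bigl[\sup_{\tau}\Bigl|\frac{1}{\sqrt n}\sum_{i\in\mathcal C_k}g_\tau(W_i,U_i)\Bigr|\Bigr].
\]
Within each color class the pairs $(W_i,U_i)$ are conditionally independent and centered. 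This lets me symmetrize with Lemma~2.3.1 of \citet{vaart2023empirical}, exactly as in \eqref{eq:Rad_process}.

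The second step bounds each Rademacher average by the uniform-entropy maximal inequality for VC classes with constant envelope (Corollary~2.2.8 of \citet{vaart2023empirical}). This gives
\[
\mathbb E_\varepsilon\Bigl[\sup_\tau\Bigl|\sum_{i\in\mathcal C_k}\varepsilon_i g_\tau\Bigr|\Bigr]\le K\sqrt{|\mathcal C_k|}\int_0^1\sqrt{\log(A/\epsilon)}\,\mathrm d\epsilon\lesssim\sqrt{|\mathcal C_k|}.
\]
Summing over $k$ yields a bound of order $\sum_k\sqrt{|\mathcal C_k|/n}\le\sqrt{\Delta_n+1}$. This is $O(1)$ because $\Delta_n\le K_{\max}^2$ by \cref{ass:degree}, and the conclusion follows.

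The only delicate point is the conditional centering and the dependency-graph claim, namely that $\{(W_i,U_i)\}_{i\in\mathcal C_k}$ are independent given $\boldsymbol A_n$. The paper already asserts both in \cref{sec:preliminary}, using \cref{assumption:local dependency} and the fact that $W_i$ depends only on treatments of neighbors. Given that, the remaining steps are routine, and no rate sharper than $O(1)$ is needed.
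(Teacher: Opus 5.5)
Your proposal is correct and follows essentially the same route as the paper. Both arguments rewrite $L_{1,n}$ through the latent ranks as $n^{-1/2}\sum_i \mathds{1}_i(w)(\tau-\mathds{1}\{U_i\le\tau\})$, check conditional centering, and split the sum over the Hajnal--Szemer\'edi color classes. Each color's expected supremum is then bounded by $\sqrt{|\mathcal C_k|}$ through a VC-type maximal inequality: you symmetrize explicitly and apply Corollary~2.2.8, while the paper cites Theorem~2.14.1 of \citet{vaart2023empirical}. Summing gives $\sqrt{\Delta_n+1}=O(1)$, and Markov's inequality finishes the proof.
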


\begin{proof}
 By \cref{assumption:normal}, on the event
\(\{W_i=w\}\), we have $\mathds{1}\{Y_i\le q_{w}(\tau)\}
= \mathds{1}\{U_i\le \tau\}$.
Hence,
\[
L_{1,n}(\tau)
=
\frac{1}{\sqrt n}
\sum_{i=1}^n\gamma_i(\tau),
\quad\text{where}\quad
\gamma_i(\tau)
=
\mathds{1}_i(w)
\left(
\tau-\mathds{1}\{U_i\leq\tau\}
\right).
\]
Conditioning on \(\boldsymbol A_n\), the treatment assignment is independent of
the latent ranks, and \(U_i\) is uniformly distributed on \([0,1]\) conditional on
the degree by \cref{assumption:RCT,assumption:CE}. Therefore, $\mathbb E_{\boldsymbol A_n} \left[\gamma_i(\tau)  \right]=0$ for all $i \in [n]$ and $\tau \in \mathcal{T}$.
Let
\[
\mathcal H
=
\left\{
 (\bar{w}, \bar{u}) \mapsto
\mathds{1}\{\bar{w} = w\} \left(\tau-\mathds{1}\{\bar{u} \le\tau\} \right):
\tau\in\mathcal T
\right\}.
\]
It is easy to see that \(\mathcal H\) is a uniformly
bounded VC-type class. In particular, its envelope is bounded
by one, and there exist constants \(A,V<\infty\) such that, for every probability
measure \(Q\),
\[
\log N\left(\epsilon,\mathcal H,L_2(Q)\right)
\le
V\log(A/\epsilon),
\qquad 0<\epsilon<1.
\]

Let
\(\{\mathcal C_1,\ldots,\mathcal C_{\Delta_n+1}\}\)
denote the proper cover defined in \cref{sec:preliminary}. Because \(\boldsymbol A_n^{(2)}\) is a dependency graph, conditional
on \(\boldsymbol A_n\), the random vectors
\(\{(W_i,U_i):i\in\mathcal C_k\}\) are mutually independent within
each color class. Since \(\gamma_i(\cdot)\) is measurable with
respect to \((W_i,U_i)\), the processes
\(\{\gamma_i(\cdot):i\in\mathcal C_k\}\) are likewise mutually
independent conditional on \(\boldsymbol A_n\).

The triangle inequality gives
\[
\sup_{\tau\in\mathcal T}
\left|
\frac{1}{\sqrt n}\sum_{i=1}^n \gamma_i(\tau)
\right|
\le
\sum_{k=1}^{\Delta_{n}+1}
\sup_{\tau\in\mathcal T}
\left|
\frac{1}{\sqrt n}\sum_{i\in \mathcal C_k}\gamma_i(\tau)
\right|.
\]
Since the function class $\mathcal{H}$
is a uniformly bounded VC-type class,  using the same argument in  Theorem  2.14.1 in \cite{vaart2023empirical} gives that
\[
\mathbb E_{\boldsymbol A_n}
\left[
\sup_{\tau\in\mathcal T}
\left|
\sum_{i\in \mathcal C_k}\gamma_i(\tau)
\right| 
\right]
\lesssim
\sqrt{|\mathcal C_k|}.
\]
Therefore,
\[
\begin{aligned}
\mathbb E_{\boldsymbol A_n}
\left[
\sup_{\tau\in\mathcal T}
\left|
\frac{1}{\sqrt n}\sum_{i=1}^n \gamma_i(\tau)
\right| 
\right]
&\le
\sum_{k=1}^{\Delta_{n}+1}
\frac{1}{\sqrt n}
\mathbb E_{\boldsymbol A_n}
\left[
\sup_{\tau\in\mathcal T}
\left|
\sum_{i\in \mathcal C_k}\gamma_i(\tau)
\right| 
\right]
\\
&\lesssim
\sum_{k=1}^{\Delta_{n}+1}
\sqrt{\frac{|\mathcal C_k|}{n}}\leq 
\sqrt{\Delta_{n}+1}
\left(\sum_{k=1}^{\Delta_{n}+1} \frac{|\mathcal C_k|}{n}\right)^{1/2}
=
\sqrt {\Delta_{n}+1}.
\end{aligned}
\]
Since \(\Delta_n\) is uniformly bounded under \cref{ass:degree},
the preceding display and Markov's inequality imply that
\[
\sup_{\tau\in\mathcal T}
\left|
L_{1,n}(\tau)
\right| =  \sup_{\tau\in\mathcal T}
\left|
\frac{1}{\sqrt{n}}\sum_{i=1}^n \gamma_i(\tau)
\right| =
O_P(1).
\]
This completes the proof.
\end{proof}

\begin{lemma}
\label{lemma:uniform_argmin_localization}
Recall that $\widehat u_n(\tau)$ is defined in \cref{eq: u_n_definition}. Suppose that the conditions of \cref{lemma: Bahadur Representation-rate} hold. Then, conditional on 
\(\{ \boldsymbol{A}_n \}_{n\geq 1}\),
\[
\sup_{\tau\in\mathcal T} \left|\widehat u_n(\tau)\right|=O_P(1).
\]
\end{lemma}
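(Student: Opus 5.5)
The plan is a convexity (``basic inequality'') argument. It reuses the uniform quadratic approximation from Steps 1--2 of the proof of \cref{lemma: Bahadur Representation-rate}; those steps do not rely on the present lemma, so there is no circularity. The starting observations are that for each $\tau$ the map $u\mapsto L_n(u,\tau)$ is convex, since it is a sum of convex check-function terms. It also satisfies $L_n(0,\tau)=0$, and $\widehat u_n(\tau)$ is one of its minimizers.

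The geometric fact I will use is the following. Fix $M>0$ and suppose that $L_n(M,\tau)>0$ and $L_n(-M,\tau)>0$. Take any $u$ with $|u|>M$ and write $\pm M$ as a convex combination of $0$ and $u$. Convexity then gives
\[
L_n(u,\tau)\ \ge\ \tfrac{|u|}{M}\,L_n(\pm M\,\mathrm{sgn}(u),\tau)\ >\ 0\ =\ L_n(0,\tau).
\]
Hence no minimizer can lie outside $[-M,M]$, so every minimizer, and in particular $\widehat u_n(\tau)$, satisfies $|\widehat u_n(\tau)|\le M$. It therefore suffices to find, for each $\epsilon>0$, an $M$ with
\[
\mathbb P_{\boldsymbol A_n}\Big[\inf_{\tau\in\mathcal T}\min\{L_n(M,\tau),L_n(-M,\tau)\}>0\Big]\ \ge\ 1-\epsilon
\]
for all large $n$.

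To get this lower bound, use the decomposition $L_n(u,\tau)=-L_{1,n}(\tau)u+L_{2,n}(u,\tau)$. Combining \cref{eq:leading_rate} and \cref{eq:st_2_convergence} gives, for each fixed $M$,
\[
\sup_{\tau,|u|\le M}\bigl|L_{2,n}(u,\tau)-\tfrac12\zeta_n(w,\tau)u^2\bigr|=o_P(1).
\]
Together with the lower bound $\zeta_n\ge c_\zeta$ from \cref{eq: bounds_for_zeta_n}, this yields
\[
\inf_{\tau}L_n(\pm M,\tau)\ \ge\ \tfrac12 c_\zeta M^2 - M\sup_\tau|L_{1,n}(\tau)| - o_P(1).
\]
By \cref{lemma:L1n_uniform_bounded}, there is a constant $C_\epsilon$ with $\sup_\tau|L_{1,n}(\tau)|\le C_\epsilon$ on an event of probability at least $1-\epsilon/2$. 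Choose $M>4C_\epsilon/c_\zeta$. On that event the right-hand side is at least $\tfrac14 c_\zeta M^2-o_P(1)$, which is positive with probability tending to one.

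The main point needing care is that the quadratic approximation is only uniform on compact sets $|u|\le M$, while the minimizer could a priori escape to infinity. Convexity is what bridges this gap, through the radial inequality above applied uniformly in $\tau$. A second subtlety is that the bound must hold simultaneously over all $\tau\in\mathcal T$. This holds because both the $o_P(1)$ remainder and the bound on $L_{1,n}$ are already uniform in $\tau$, so a single $M$ works on a single high-probability event. Combining the two gives $\sup_\tau|\widehat u_n(\tau)|\le M$ with probability at least $1-\epsilon$ for large $n$, i.e.\ $O_P(1)$.
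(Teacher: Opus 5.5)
Your proposal is correct and takes essentially the paper's approach: it uses convexity of $u\mapsto L_n(u,\tau)$, the compact-uniform quadratic approximation from Steps 1--2 of the Bahadur argument, and $\sup_\tau|L_{1,n}(\tau)|=O_P(1)$ to show that $L_n$ at $\pm M$ strictly exceeds its value at an interior anchor point, uniformly in $\tau$. The only difference is the anchor: you use $u=0$, where $L_n(0,\tau)=0$ exactly, while the paper uses the quadratic minimizer $u_n(\tau)=\zeta_n(w,\tau)^{-1}L_{1,n}(\tau)$ and so must also approximate $L_n$ there (requiring $|u_n(\tau)|\le M/2$). Your choice is a harmless simplification.
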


\begin{proof}
\cref{eq: bounds_for_zeta_n,lemma:L1n_uniform_bounded} imply that
$\sup_{\tau\in\mathcal T}|u_n(\tau)|=O_P(1)$. Thus, for every \(\epsilon>0\), there exists a fixed \(M<\infty\) such that, for all sufficiently large \(n\), $\sup_{\tau\in\mathcal T}|u_n(\tau)|\le M/2$ holds with probability at least $1-\epsilon$. On this event, for any \(\tau\in\mathcal T\) and \(|u|\geq M\), $|u-u_n(\tau)|\ge M/2$. Thus, we can derive that
\[
Q_n(u,\tau)-Q_n(u_n(\tau),\tau)
=
\frac{1}{2}\zeta_n(w,\tau)\left|u-u_n(\tau)\right|^2
\geq \frac{1}{8} c_\zeta  M^2,
\]
for all $|u| \geq M$. \cref{eq:compact-uniform approximation} implies that, with probability tending to one,
\[
\sup_{\tau\in\mathcal T,\, |u|\le M}
\left|L_n(u,\tau)-Q_n(u,\tau) \right|
\le
\frac{M^2 c_\zeta}{32} .
\]
Combining the preceding two events, with probability at least \(1-\epsilon-o(1)\), for every \(\tau\in\mathcal T\) and \(u\in\{-M,M\}\),
\begin{equation}\label{eq: strict_boundary_inequality}
\begin{aligned}
L_n(u,\tau)-L_n(u_n(\tau),\tau)
&\ge
Q_n(u,\tau)-Q_n(u_n(\tau),\tau)
\\
&
-
\sup_{\tau\in\mathcal T, |u|\le M}
2\left|L_n(u,\tau)-Q_n(u,\tau)\right|
\\
&\ge \frac{M^2 c_\zeta}{16} >0.
\end{aligned}
\end{equation}
Because \(L_n(\cdot,\tau)\) is convex and $L_n\left( \pm M ,\tau\right)>L_n(u_n(\tau),\tau)$, every minimizer of \(L_n(\cdot,\tau)\) must lie in \((-M,M)\). Indeed, if a
minimizer \(\widehat u > M\) existed, there would be some $\lambda \in (0,1)$  such that \(  M =\lambda u_n(\tau) + (1-\lambda) \widehat u\). By convexity,
\[
L_n(M,\tau)
\le
\lambda L_n(u_n(\tau),\tau)
+
(1-\lambda)L_n(\widehat u,\tau)
\le
L_n(u_n(\tau),\tau),
\]
which contradicts \cref{eq: strict_boundary_inequality}.
The case \(\widehat u < -M\) follows analogously. Therefore,
\[
\widehat u_n(\tau)\in(-M,M).
\]

As a result, $\sup_{\tau\in\mathcal T}|\widehat u_n(\tau)|\le M$,
with probability at least \(1-\epsilon-o(1)\). Since, for every
\(\epsilon>0\), \(M\) can be chosen as a fixed constant independent
of \(n\), letting $\epsilon \rightarrow 0$ gives the desired result.
\end{proof}

\begin{lemma}\label{lemma: ULLN_U}
Suppose that \cref{ass:degree,assumption:local dependency} hold. Then, conditional on  $\left\{\boldsymbol{A}_n \right\}_{n\ge 1}$,
\[
\sup_{\tau \in \mathcal{T}}\left|\frac{1}{n}\sum_{i=1}^n  \mathds{1}\{ U_i \leq \tau \}  - \tau\right| = O_P \big( n^{-1/2} \big).
\]
\end{lemma}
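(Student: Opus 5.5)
We reduce the claim to a uniform empirical-process bound for a bounded VC class under the dependency-graph structure, using the same coloring device as in \cref{lemma:L1n_uniform_bounded}. Set $\gamma_i^U(\tau)=\mathds{1}\{U_i\le\tau\}-\tau$. By \cref{assumption:nor}, $U_i\mid |N_i|\sim\mathrm{Unif}(0,1)$. By \cref{assumption:CE}, conditioning additionally on $\boldsymbol A_n$ does not change this law. Since $|N_i|$ is $\boldsymbol A_n$-measurable, this gives $\mathbb E_{\boldsymbol A_n}[\gamma_i^U(\tau)]=0$ for every $i$ and $\tau$. Hence the target quantity is $n^{-1/2}\sup_{\tau\in\mathcal T}|n^{-1/2}\sum_{i=1}^n\gamma_i^U(\tau)|$, and it suffices to show that the inner supremum is $O_P(1)$ conditional on $\boldsymbol A_n$.

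To handle dependence, we use that, conditional on $\boldsymbol A_n$, \cref{assumption:local dependency} makes $\boldsymbol A_n^{(2)}$ a dependency graph for $(U_i)_{i=1}^n$. Its maximum degree satisfies $\Delta_n\le K_{\max}^2$ by \cref{ass:degree}. We take the Hajnal--Szemer\'edi cover $\{\mathcal C_1,\dots,\mathcal C_{\Delta_n+1}\}$ from \eqref{eq:HS_thm}. Within each class $\mathcal C_k$, the $U_i$ are conditionally independent: by construction no two members of $\mathcal C_k$ are within distance two, so \cref{assumption:local dependency} applied to singletons, and inductively to subsets, gives mutual independence. By the triangle inequality,
\[
\sup_{\tau\in\mathcal T}\Big|\frac{1}{\sqrt n}\sum_{i=1}^n\gamma_i^U(\tau)\Big|\le\sum_{k=1}^{\Delta_n+1}\sqrt{\frac{|\mathcal C_k|}{n}}\,\sup_{\tau\in\mathcal T}\Big|\frac{1}{\sqrt{|\mathcal C_k|}}\sum_{i\in\mathcal C_k}\gamma_i^U(\tau)\Big|.
\]

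Within each class we have an independent, though not identically distributed, sum of centered functions indexed by the class $\{u\mapsto\mathds{1}\{u\le\tau\}-\tau:\tau\in\mathcal T\}$. This class has envelope $1$ and is VC-type with uniform entropy $\log N(\epsilon)\le V\log(A/\epsilon)$. Applying symmetrization (Lemma 2.3.1 of \citet{vaart2023empirical}) and then the uniform-entropy maximal inequality (Theorem 2.14.1 there), as in \cref{lemma:L1n_uniform_bounded}, bounds the conditional expectation of each class supremum by a universal constant. Summing with Cauchy--Schwarz gives
\[
\mathbb E_{\boldsymbol A_n}\Big[\sup_\tau\Big|n^{-1/2}\sum_i\gamma_i^U(\tau)\Big|\Big]\lesssim\sqrt{\Delta_n+1}=O(1).
\]
Markov's inequality then yields the $O_P(1)$ bound, and multiplying by $n^{-1/2}$ gives the claimed rate.

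The only delicate point is the mean-zero property together with the treatment of non-identical marginals. We handle it by noting that each $U_i$ has the same conditional uniform law, so the VC maximal inequality applies directly with the constant envelope. Everything else mirrors the proof of \cref{lemma:L1n_uniform_bounded}, with $\mathds{1}_i(w)$ replaced by $1$.
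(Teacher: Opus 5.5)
Your proposal is correct and has the same structure as the paper's proof: the Hajnal--Szemer\'edi cover of $\boldsymbol A_n^{(2)}$, conditional independence within each color class, a per-class bound of order $|\mathcal C_k|^{-1/2}$, Cauchy--Schwarz across classes to obtain $\sqrt{(\Delta_n+1)/n}$, and Markov's inequality. The only difference is the per-class tool. The paper uses that the within-class $U_i$ are i.i.d.\ $\mathrm{Unif}(0,1)$ and integrates the Dvoretzky--Kiefer--Wolfowitz tail bound for the class-wise empirical CDF $F_{n,k}$, giving $\mathbb E_{\boldsymbol A_n}\sup_\tau|F_{n,k}(\tau)-\tau|\le\sqrt{\pi/(2|\mathcal C_k|)}$; this is more elementary and has an explicit constant. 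Your symmetrization plus uniform-entropy maximal inequality is heavier, but it would also handle non-identical marginals or weighted summands such as $\mathds{1}_i(w)(\tau-\mathds{1}\{U_i\le\tau\})$.
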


\begin{proof}[Proof of \cref{lemma: ULLN_U}]
Define $F_n(\tau) = \frac{1}{n}\sum_{i=1}^n  \mathds{1}\{ U_i \leq \tau \}$. Recall the partition  $[n] = \mathcal{C}_1 \cup \mathcal{C}_2 \cup \dots \cup \mathcal{C}_{\Delta_{n}+1}$ introduced in \cref{sec:preliminary}. Conditional on \(\boldsymbol A_n\), the variables \(\{U_i:i\in\mathcal C_k\}\) are mutually independent and identically distributed within each color class. Moreover, $ \Delta_n$ is uniformly bounded by Assumptions \ref{ass:degree} and \ref{assumption:local dependency}. For each \(k=1,\ldots,\Delta_n+1\), define $F_{n,k}(\tau) = |\mathcal{C}_k|^{-1} \sum_{i \in  \mathcal{C}_k} \mathds{1}\{ U_i \leq \tau \}$. Then
\[
\sup_{\tau \in \mathcal{T} } | F_n(\tau) - \tau | \leq \sum_{k=1}^{\Delta_n + 1}  \left(  \frac{ |\mathcal{C}_k| }{n} \right) \sup_{\tau \in \mathcal{T} }  | F_{n,k} (\tau) - \tau  |.
\]
The Dvoretzky–Kiefer–Wolfowitz inequality gives, for any $\epsilon \geq 0$,
\[
\mathbb{P}_{\boldsymbol A_n}\left[ \sup_{\tau \in \mathcal{T} }  | F_{n,k} (\tau) - \tau  | > \epsilon  \right] \leq 2 e^{-2 |\mathcal{C}_k| \epsilon^2}.
\]
Applying Lemma 2.2.13 in \cite{durrett2019probability}  gives 
\[
\mathbb{E}_{\boldsymbol A_n} \left[  \sup_{\tau \in \mathcal{T} }  | F_{n,k} (\tau) - \tau  |  \right ] \leq \int_0^\infty 2 e^{-2 |\mathcal{C}_k| \epsilon^2} \mathrm{d}\epsilon =  \sqrt{ \frac{\pi}{2 |\mathcal{C}_k|} }.
\]
Therefore, we have 
\[
\mathbb{E}_{\boldsymbol A_n}\left[ \sup_{\tau \in \mathcal{T} } | F_n(\tau) - \tau |   \right] \leq \sum_{k=1}^{\Delta_n + 1}    \frac{ |\mathcal{C}_k| }{n} \sqrt{ \frac{\pi}{2 |\mathcal{C}_k|} } \leq \sqrt{ \frac{\pi (\Delta_n + 1)}{2n} }.
\]
By Markov’s inequality and \cref{ass:degree}, the desired result follows.
\end{proof}

\begin{lemma}\label{lemma:Oracle_variance}
Suppose \cref{ass:degree,assumption: density,assumption:local dependency} hold.  Then, conditional on $\{ \boldsymbol{A}_n \}_{n\geq 1}$,
\[
\sup_{\tau, \tau^\prime \in \mathcal{T}} \left| \widetilde{\mathbb{V}}_n(\tau,\tau')- \widetilde{\mathbb V}_{\boldsymbol{A}_n}(\tau,\tau')  \right| = O_P\big(n^{-1/2}\big).
\]
\end{lemma}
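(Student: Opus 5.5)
Since $\widetilde{\mathbb V}_n(\tau,\tau')=n^{-1}\sum_{i,j}\boldsymbol\Omega_n(i,j)\widetilde\psi_i(\tau)\widetilde\psi_j(\tau')$ and, by the display preceding \cref{assumption:density_est_convergence}, $\widetilde{\mathbb V}_{\boldsymbol A_n}(\tau,\tau')=n^{-1}\sum_{i,j}\boldsymbol\Omega_n(i,j)\mathbb E_{\boldsymbol A_n}[\widetilde\psi_i(\tau)\widetilde\psi_j(\tau')]$, the difference is a centered empirical process indexed by $(\tau,\tau')\in\mathcal T^2$:
\[
\widetilde{\mathbb V}_n(\tau,\tau')-\widetilde{\mathbb V}_{\boldsymbol A_n}(\tau,\tau')=\frac1n\sum_{i=1}^n\Big(\xi_i(\tau,\tau')-\mathbb E_{\boldsymbol A_n}[\xi_i(\tau,\tau')]\Big),\qquad \xi_i(\tau,\tau')=\widetilde\psi_i(\tau)\sum_{j:\,\ell_{\boldsymbol A_n}(i,j)\le2}\widetilde\psi_j(\tau').
\]
I would first reduce to the rank representation used in the proof of \cref{lemma: grid approximation}: on $\{W_i=w^\dagger\}$ we have $\mathds 1\{Y_i\le q_{w^\dagger}(\tau)\}=\mathds 1\{U_i\le\tau\}$. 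Hence $\widetilde\psi_i(\tau)=\sum_{w^\dagger}\pm\zeta_n(w^\dagger,\tau)^{-1}\gamma_i^{w^\dagger}(\tau)$ with $\gamma_i^{w^\dagger}(\tau)=\mathds 1_i(w^\dagger)(\tau-\mathds 1\{U_i\le\tau\})$. Because $\zeta_n^{-1}$ is deterministic given $\boldsymbol A_n$, bounded by \cref{eq: bounds_for_zeta_n}, and factors out of the sum, it suffices to treat finitely many processes of the form $n^{-1}\sum_i(\gamma_i^{a}(\tau)\gamma_j^{b}(\tau')-\mathbb E_{\boldsymbol A_n}[\cdot])$ summed over $j$ in the 2-ball of $i$. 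Here $(a,b)$ ranges over $\{w,w'\}^2$.

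Each $\xi_i$ is a bounded function of $\{(W_j,U_j):\ell_{\boldsymbol A_n}(i,j)\le 2\}$, a block of at most $1+K_{\max}^2$ units by \cref{ass:degree}. Consequently $\{\xi_i\}$ admits a dependency graph whose edges join $i,i'$ with $\ell_{\boldsymbol A_n}(i,i')\le 6$. Its maximum degree $\bar\Delta_n$ is bounded by a constant depending only on $K_{\max}$. By Hajnal--Szemer\'edi, exactly as in \cref{eq:HS_thm}, this graph has a proper cover $\{\bar{\mathcal C}_k\}_{k\le\bar\Delta_n+1}$ with classes of size $\asymp n$, and within each class the $\xi_i(\cdot,\cdot)$ are conditionally independent. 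I then follow the template of \cref{lemma:L1n_uniform_bounded}: apply the triangle inequality over colors, symmetrize within each class (Lemma 2.3.1 of \cite{vaart2023empirical}), and use a uniform entropy bound.

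The class of functions $\{(\bar w_j,\bar u_j)_j\mapsto \mathds 1\{\bar w_i=a\}(\tau-\mathds 1\{\bar u_i\le\tau\})\mathds 1\{\bar w_j=b\}(\tau'-\mathds 1\{\bar u_j\le\tau'\}):\tau,\tau'\in\mathcal T\}$ is a product of two uniformly bounded VC-type classes. It is therefore VC-type with bounded envelope, and this persists for sums over the at most $1+K_{\max}^2$ neighbors (with slots padded by zeros). The maximal inequality then gives $\mathbb E_{\boldsymbol A_n}\sup_{\tau,\tau'}|\sum_{i\in\bar{\mathcal C}_k}(\xi_i-\mathbb E\xi_i)|\lesssim\sqrt{|\bar{\mathcal C}_k|}$. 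Summing over the bounded number of colors and dividing by $n$ yields an $O(n^{-1/2})$ bound in expectation, and Markov's inequality finishes the argument.

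The main obstacle is the bookkeeping for the variable neighborhood structure. The index $j$ in $\xi_i$ depends on $i$, so the functions are not identical across $i$. I would handle this by ordering neighbors and padding to a fixed length $1+K_{\max}^2$, which makes the function class common to all $i$ up to a deterministic indicator selection. This matters because the entropy bound must be uniform over $Q$, and the symmetrization lemma requires only independence, not identical distribution. I will check that the product-class uniform covering number bound $\log N(\epsilon)\lesssim\log(A/\epsilon)$ survives these finite sums, which holds by standard permanence properties.
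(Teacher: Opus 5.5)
Your proposal is correct and follows essentially the same argument as the paper: center the quadratic form, build a bounded-degree dependency graph for the summands, take a Hajnal--Szemer\'edi equitable coloring, symmetrize within each color class, apply a uniform-entropy VC maximal inequality, sum over the $O(1)$ colors, and finish with Markov's inequality. The only difference is bookkeeping. The paper indexes the summands by pairs $(i,j)$ with $\ell_{\boldsymbol A_n}(i,j)\le 2$ and colors a dependency graph on pairs, so every summand is the same function of $(O_i,O_j)$ and no padding is needed. You instead aggregate at the node level over the 2-ball, which requires a distance-$6$ dependency graph and the padding step.
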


\begin{proof}[Proof of \cref{lemma:Oracle_variance}]

Recall from \cref{sec:preliminary} that $\boldsymbol A_n^{(2)}$ is a dependency graph for $\left( Y_i, W_i, U_i\right)_{i=1}^n$. Under \cref{ass:degree}, its maximum degree $\Delta_n$ is uniformly bounded. Define 
\[
\mathfrak{P}_n \equiv \left\{ (i,j)\in [n]^2: \ell_{\boldsymbol{A}_n}(i,j) \leq 2  \right \}.
\]
So, we have $n \leq |\mathfrak{P}_n| \le n(\Delta_n + 1) = O(n)$, adopting the convention that $\ell_{\boldsymbol{A}_n}(i, i) = 0$.
 
For each $\mathfrak{p}=(i,j)\in\mathfrak{P}_n$, define $Z_{\mathfrak{p}}(\tau,\tau')=\widetilde{\psi}_i(\tau)\widetilde{\psi}_j(\tau')$. Thus,
\[
\widetilde{\mathbb{V}}_n(\tau,\tau')- \widetilde{\mathbb V}_{\boldsymbol{A}_n}(\tau,\tau')=\frac{1}{n}\sum_{\mathfrak p\in\mathfrak P_n}\Big[ Z_{\mathfrak p}(\tau,\tau')-\mathbb E_{\boldsymbol{A}_n}\left[Z_{\mathfrak p}(\tau,\tau')\right] \Big].
\]
We construct a dependency graph $\mathfrak{G}_n$ on the vertex set $\mathfrak{P}_n$ as follows. For two distinct pairs $\mathfrak{p}_1=(i_1,j_1)$ and $\mathfrak{p}_2=(i_2,j_2)$ in $\mathfrak{P}_n$, we put an edge between $p_1$ and $p_2$ if either $\{i_1,j_1\}\cap\{i_2,j_2\}\neq\emptyset$, or there exist $k_1\in\{i_1,j_1\}$ and $k_2\in\{i_2,j_2\}$ such that $\boldsymbol{A}_n^{(2)}(k_1,k_2)=1$. For any pair $\mathfrak{p} \in \mathfrak{P}_n$, the degree of $\mathfrak{p}$ in $\mathfrak{G}_n$ is bounded by $2(\Delta_n + 1)^2$. Therefore, the maximum degree of $\mathfrak{G}_n$, denoted by $\Delta_{\mathfrak G_n}$, is uniformly bounded in $n$. The coloring argument in \cref{sec:preliminary} then yields a partition $\mathfrak{P}_n=\mathfrak C_{1}\cup\cdots\cup\mathfrak C_{\Delta_{\mathfrak{G}_n} + 1}$, such that within each $\mathfrak C_{k}$ the pair-level variables $Z_{\mathfrak{p}}$ are mutually independent conditional on $\boldsymbol{A}_n$.

Recall that $O_{i}=(Y_{i},W_{i})$, and  we define the following bounded function class
\[
\mathcal{G}_{n} = \left\{ (O_i, O_j) \mapsto \widetilde{\psi}_i(\tau)\widetilde{\psi}_j(\tau') : (\tau, \tau') \in \mathcal{T}^2 \right\}.
\]
For each \(w^\dagger\in\{w,w'\}\), the class \(\left\{
y\mapsto\mathds{1}\{y\le q_{w^\dagger}(\tau)\}:\tau\in\mathcal T\right\}\) is VC. In addition, $\zeta_n(w^\dagger,\tau)=f_{w^\dagger}(q_{w^\dagger}(\tau))\pi_{n}(w^\dagger)$ is uniformly bounded and Lipschitz by \cref{eq: bounds_for_zeta_n,assumption: density}. Lemma 2.6.18 in \cite{vaart2023empirical} therefore implies that \(\mathcal G_n\) is VC-type uniformly in \(n\) and admits a uniformly bounded envelope.


We now apply the symmetrization and VC maximal-inequality arguments used in Step 2 of the proof of \cref{lemma: Bahadur Representation-rate}.  For each $k=1,2,\dots,\Delta_{\mathfrak{G}_{n}}+1$, let
\[
\widehat{\sigma}_{n,k}^2 = \sup_{\tau, \tau' \in \mathcal{T}} \frac{1}{|\mathfrak C_{k}| } \sum_{\mathfrak p\in\mathfrak C_{k}} \left|Z_{\mathfrak p}(\tau, \tau^\prime) \right|^2.
\]
It is not difficult to see that  $\widehat{\sigma}_{n,k}$ is uniformly bounded by some universal constant $C >0$ not depending on $n$ and $k$. Theorem  2.14.1 in \cite{vaart2023empirical} then gives
\[
\begin{aligned}
\mathbb E_{\boldsymbol{A}_n}\left[ \sup_{\tau,\tau'\in\mathcal T}\left| \frac{1}{\sqrt{|\mathfrak C_{k}|}} \sum_{\mathfrak p\in\mathfrak C_{k}}\left \{Z_{\mathfrak p}(\tau,\tau')-\mathbb E_{\boldsymbol A_n} \left[Z_{\mathfrak p}(\tau,\tau')\right] \right\}\right| \right]
& \lesssim \int_0^{\widehat{\sigma}_{n,k} } \sqrt{ \log (A/\varepsilon) } \mathrm{d} \varepsilon, \\
& \lesssim \int_0^{C } \sqrt{ \log (A/\varepsilon) } \mathrm{d} \varepsilon.
\end{aligned}
\]
Summing over the color classes gives
\[
\begin{aligned}
&\mathbb E_{\boldsymbol{A}_n}\left[\sup_{\tau,\tau'\in\mathcal T}\left|\frac1n\sum_{\mathfrak p\in\mathfrak P_n}\{Z_{\mathfrak p}(\tau,\tau')-\mathbb E_{\boldsymbol A_n}[Z_{\mathfrak p}(\tau,\tau')]\}\right| \right]  \\
\le \ &\frac1n\sum_{k=1}^{\Delta_{\mathfrak{G}_n} + 1}\mathbb E_{\boldsymbol{A}_n}\left[\sup_{\tau,\tau'\in\mathcal T}\left|\sum_{\mathfrak p\in\mathfrak C_{k}}\{Z_{\mathfrak p}(\tau,\tau')-\mathbb E_{\boldsymbol A_n}[Z_{\mathfrak p}(\tau,\tau')]\}\right| \right] \\
\lesssim \ &\frac1n\sum_{k=1}^{\Delta_{\mathfrak{G}_n} + 1} \sqrt{|\mathfrak C_{k}|}\le\frac1n  \left[(\Delta_{\mathfrak{G}_n} + 1)\sum_{k=1}^{\Delta_{\mathfrak{G}_n} + 1}|\mathfrak C_{k}|\right]^{1/2}=\frac1n\sqrt{(\Delta_{\mathfrak{G}_n} + 1)|\mathfrak P_n|}=O(n^{-1/2}),
\end{aligned}
\]
where the last equality follows from $\Delta_{\mathfrak{G}_n} =O(1)$ and $|\mathfrak P_n|=O(n)$. Markov's inequality yields the stated uniform rate.
\end{proof}

\begin{lemma}\label{lemma: score_LLN}
Suppose \cref{ass:degree,assumption: density,assumption:local dependency,assumption:density_est_convergence} hold. Then, conditional on  $\{ \boldsymbol{A}_n \}_{n\geq 1}$,
\begin{equation}\label{eq:squared_convergence}
\sup_{\tau \in \mathcal{T}} \sqrt{\frac{1}{n} \sum_{i=1}^n \left| \widehat{\psi}_i(\tau) -  \widetilde{\psi}_i(\tau) \right|^2} = O_P \big(  r_{f,n} \vee n^{-1/4} \big).    
\end{equation}
\end{lemma}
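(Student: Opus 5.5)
It suffices to treat one exposure state at a time. Write $\widehat\psi_i-\widetilde\psi_i=\widehat\Delta_i(w,\tau)-\widehat\Delta_i(w',\tau)$, where
\[
\widehat\Delta_i(w^\dagger,\tau)=\frac{\mathds{1}_i(w^\dagger)\bigl(\tau-\mathds{1}\{Y_i\le \widehat q_{w^\dagger}(\tau)\}\bigr)}{\widehat\pi(w^\dagger)\widehat f_{w^\dagger}(\widehat q_{w^\dagger}(\tau))}-\frac{\mathds{1}_i(w^\dagger)\bigl(\tau-\mathds{1}\{Y_i\le q_{w^\dagger}(\tau)\}\bigr)}{\pi_n(w^\dagger)f_{w^\dagger}(q_{w^\dagger}(\tau))}.
\]
Adding and subtracting the mixed term, I will split $\widehat\Delta_i(w^\dagger,\tau)$ into two pieces. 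The first is a \emph{denominator piece}, $\mathds{1}_i(w^\dagger)(\tau-\mathds{1}\{Y_i\le\widehat q_{w^\dagger}(\tau)\})[\widehat\zeta^{-1}-\zeta_n^{-1}]$, with $\widehat\zeta(\tau)=\widehat\pi\widehat f(\widehat q(\tau))$ and $\zeta_n(w^\dagger,\tau)$ as in \cref{eq: bounds_for_zeta_n}. The second is an \emph{indicator piece}, $\zeta_n^{-1}\mathds{1}_i(w^\dagger)\bigl(\mathds{1}\{Y_i\le q(\tau)\}-\mathds{1}\{Y_i\le\widehat q(\tau)\}\bigr)$. Because the numerators are bounded by one, the empirical $L_2$ norm of the denominator piece is at most $\sup_\tau|\widehat\zeta^{-1}-\zeta_n^{-1}|$. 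That of the indicator piece is at most $c_\zeta^{-1}\bigl(n^{-1}\sum_i\mathds{1}_i(w^\dagger)\mathds{1}\{Y_i\text{ between }q(\tau),\widehat q(\tau)\}\bigr)^{1/2}$.

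\textbf{Denominator piece.} First, $\widehat\pi(w^\dagger)-\pi_n(w^\dagger)=O_P(n^{-1/2})$. This follows from Chebyshev's inequality, since $\mathds{1}_i(w^\dagger)$ has dependency graph $\boldsymbol A_n^{(2)}$ with bounded degree, so its variance is $O(1/n)$. Next, by \cref{lemma:Bahadur representation} together with \cref{lemma:L1n_uniform_bounded}, $\sup_\tau|\widehat q_{w^\dagger}(\tau)-q_{w^\dagger}(\tau)|=O_P(n^{-1/2})$. Hence with probability approaching one $\widehat q(\tau)\in[a,b]$, at least after enlarging $[a,b]$ slightly within the neighborhood of \cref{assumption: density}. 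Then
\[
|\widehat f(\widehat q)-f(q)|\le\sup_{[a,b]}|\widehat f-f|+L|\widehat q-q|=O_P(r_{f,n}+n^{-1/2}),
\]
using \cref{assumption:density_est_convergence} and the Lipschitz property. Since $\zeta_n$ is bounded away from zero, $\widehat\zeta$ is too with probability approaching one, and the piece is $O_P(r_{f,n}\vee n^{-1/2})$ uniformly in $\tau$.

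\textbf{Indicator piece.} This is the main step and is where $n^{-1/4}$ arises. On the event $\sup_\tau|\widehat q-q|\le M n^{-1/2}$, the quantity $n^{-1}\sum_i\mathds{1}_i(w^\dagger)\mathds{1}\{Y_i\text{ between }q(\tau),\widehat q(\tau)\}$ is bounded by
\[
\sup_{\tau\in\mathcal T,\,|u|\le M}\frac1n\sum_{i=1}^n\mathds{1}_i(w^\dagger)\bigl|\mathds{1}\{Y_i\le q(\tau)+u/\sqrt n\}-\mathds{1}\{Y_i\le q(\tau)\}\bigr|.
\]
Its conditional mean is at most $\sup_\tau\bigl(F(q+M/\sqrt n)-F(q-M/\sqrt n)\bigr)=O(n^{-1/2})$ by the bounded density. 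The centered fluctuation is a VC-class empirical process. It can be handled exactly as in Step 2 of \cref{lemma: Bahadur Representation-rate}, or as in \cref{lemma:Oracle_variance}, by splitting into the Hajnal--Szemerédi color classes, symmetrizing within each class, and applying the VC maximal inequality. This gives $O_P(n^{-1/2})$, or $O_P(n^{-1/2}\sqrt{\log n})$ if the localized variance is used, which still suffices after the square root up to a log factor. A crude $O_P(n^{-1/2})$ bound (envelope one) is enough. Taking the square root yields $O_P(n^{-1/4})$.

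Combining the two pieces via $(a+b)^2\le2a^2+2b^2$, and summing over $w^\dagger\in\{w,w'\}$, gives the stated rate $O_P(r_{f,n}\vee n^{-1/4})$. The main obstacle is the indicator piece. There, the randomness of $\widehat q$ must be decoupled by passing to the supremum over the deterministic local class $|u|\le M$, and uniformity over $\tau$ under network dependence must be ensured through the coloring argument.
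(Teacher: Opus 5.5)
Your proof is correct and follows the paper's argument. You use the same split of $\widehat\phi_i-\widetilde\phi_i$ into a denominator piece and an indicator piece. The denominator piece is controlled by $\widehat\pi-\pi_n=O_P(n^{-1/2})$, $\sup_\tau|\widehat q-q|=O_P(n^{-1/2})$, the density-estimation rate and Lipschitz continuity; the indicator piece is $O_P(n^{-1/2})$, and its square root gives the $n^{-1/4}$.

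The only real difference is how the indicator piece is bounded. The paper rewrites $\{|Y_i-q_{w^\dagger}(\tau)|\le\widehat\epsilon_n\}$ as an interval event for $U_i$ and uses the DKW-based \cref{lemma: ULLN_U} via \cref{lemma: ULLN_quantile}. You instead localize $\widehat q$ to a deterministic $|u|\le M$ neighborhood and apply the coloring-plus-VC maximal inequality directly to the $Y_i$-interval class.
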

\begin{proof}[Proof of \cref{lemma: score_LLN}]

For $w^\dagger \in \{w, w^\prime \}$, define $\widehat{\zeta}\left(w^\dagger, \tau\right) = \widehat{f}_{w^\dagger}\big(  \widehat{q}_{w^\dagger} (\tau)  \big) \widehat{\pi}(w^\dagger)$, and 
\[
 \widehat{\phi}_i\big(w^\dagger, \tau\big) = \frac{ \mathds{1}_i(w^\dagger)  }{\widehat{\zeta}\left(w^\dagger, \tau\right)  } \left(\tau - \mathds{1}\{ Y_i \leq \widehat{q}_{w^\dagger}(\tau) \} \right).
\]
To establish \cref{eq:squared_convergence}, it suffices to show that, for each $w^\dagger \in \{w, w^\prime\}$,
\[
\sup_{\tau \in \mathcal{T}} \frac{1}{n} \sum_{i=1}^n \left|  \widehat{\phi}_i(w^\dagger, \tau) -   \widetilde{\phi}_i(w^\dagger, \tau)\right|^2 = O_P \left(  r_{f,n}^2 \vee  n^{-1/2} \right).   
\]
We prove the result for $w^\dagger=w$; the argument for $w^\dagger=w'$ is identical. Consider the following decomposition:
\begin{equation}\label{eq:sum_phi}
\begin{aligned}
\frac{1}{n}\sum_{i=1}^n \left|\widehat{\phi}_i\left(w, \tau\right) -  \widetilde{\phi}_i\left(w, \tau\right) \right|^2 & \leq  \frac{1}{n}\sum_{i=1}^n    \mathds{1}_i(w)   \left| \widehat{\zeta}\left(w, \tau\right)^{-1}  - \zeta_n\left(w, \tau\right)^{-1} \right|^2     \\
 & +   \frac{1}{n}\sum_{i=1}^n  \frac{  \mathds{1}_i(w)    }{\zeta_{n}\left(w, \tau\right)^2 }   \left| \mathds{1}\{ Y_i \leq \widehat{q}_{w}(\tau) \} - \mathds{1}\{ Y_i \leq q_{w}(\tau) \}   \right|^2.
\end{aligned}
\end{equation}

By the triangle inequality, \cref{lemma:q_n_Nonasymptotics,assumption:density_est_convergence}, it follows that
\[
\begin{aligned}
\sup_{\tau \in \mathcal{T}} \left| \widehat{f}_w( \widehat{q}_w(\tau) ) -  f_w(q_w(\tau) ) \right| &\leq  \sup_{\tau \in \mathcal{T}} \left| \widehat{f}_w( \widehat{q}_w(\tau) ) -  f_w(\widehat{q}_w(\tau) ) \right|     \\
 & +   \sup_{\tau \in \mathcal{T}} \left|   f_w(\widehat{q}_w(\tau) ) - f_w(q_w(\tau) )  \right|\\
 & = O_P \left( n^{-1/2} +  r_{f,n} \right) = O_P \big(  r_{f,n}\big).
\end{aligned}
\]
The definitions of $\widehat{\zeta}\left(w,\tau\right)$ and $\zeta_{n}(w,\tau)$, together with \cref{lemma: pi_n_root_convergence}, yield
\[
\begin{aligned}
\sup_{\tau \in \mathcal{T}} \left| \widehat{\zeta}\left(w, \tau\right)   - \zeta_n(w, \tau) \right| 
& = O_P \big(  r_{f,n}\big).
\end{aligned}
\]
Since $\widehat{\zeta}\left(w,\tau\right)$ and $\zeta_{n}(w,\tau)$ are uniformly bounded away from zero with probability approaching one,
\[
\sup_{\tau \in \mathcal{T}} \left| \widehat{\zeta}\left(w, \tau\right)^{-1}   - \zeta_n(w, \tau)^{-1} \right|^2  =   O_P \big(  r_{f,n}^2 \big).
\]

By \cref{lemma:q_n_Nonasymptotics},  we have $\widehat{\epsilon}_n \equiv \sup_{\tau \in \mathcal{T}} \left| \widehat{q}_w(\tau) - q_w(\tau) \right| = O_P\big(n^{-1/2}\big)$.  Thus, we obtain the inequality
\[
 \left| \mathds{1}\{ Y_i \leq \widehat{q}_{w}(\tau) \} - \mathds{1} \left\{ Y_i \leq q_{w}(\tau) \}  \right| \leq \mathds{1}\{ |Y_i - q_w(\tau)| \leq \widehat{\epsilon}_n \right \}.
\]
Consequently, \cref{lemma: ULLN_quantile} gives
\[
\begin{aligned}
 &\sup_{\tau \in \mathcal{T}} \frac{1}{n} \sum_{i=1}^n \mathds{1}_i(w)    \left| \mathds{1}\{ Y_i \leq \widehat{q}_{w}(\tau) \} - \mathds{1}\{ Y_i \leq q_{w}(\tau) \}  \right|^2\\
 \leq \ & \sup_{\tau \in \mathcal{T}} \frac{1}{n} \sum_{i=1}^n   \mathds{1}_i(w)   \mathds{1} \left\{ |Y_i - q_w(\tau)| \leq \widehat{\epsilon}_n  \right\} = O_P \big(n^{-1/2} \big).
\end{aligned}
\]
Combining these bounds with \cref{eq:sum_phi} gives
\[
\sup_{\tau \in \mathcal{T}} \sqrt{\frac{1}{n}\sum_{i=1}^n \left|\widehat{\phi}_i\left(w, \tau\right) -  \widetilde{\phi}_i\left(w, \tau\right) \right|^2} = O_P \big( r_{f,n} \vee n^{-1/4} \big).
\]
This completes the proof.
\end{proof}

\begin{lemma}\label{lemma:local_modulus_gamma}
 Suppose  \cref{ass:degree,assumption:local dependency} hold. Then, conditional on  \(\{ \boldsymbol{A}_n \}_{n\ge1}\), for any $\delta \in \left(0, 1/2\right)$,
\[
\mathbb E_{\boldsymbol A_n}
\left[
\sup_{|\tau-\tau'|\le \delta}
\left|
\frac1{\sqrt n}
\sum_{i=1}^n
\left[\gamma_i(\tau)-\gamma_i(\tau')\right]
\right|
\right]
\lesssim
\sqrt{ \delta \log\left(e/\delta\right)} + \frac{ \log \left(e/\delta\right)}{\sqrt{n}}.
\]
\end{lemma}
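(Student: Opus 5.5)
We reduce to independent sums via the proper cover $\{\mathcal C_1,\ldots,\mathcal C_{\Delta_n+1}\}$ of $\boldsymbol A_n^{(2)}$ from \cref{sec:preliminary}, and then apply a local maximal inequality for VC classes within each color class. Recall $\gamma_i(\tau)=\mathds{1}_i(w)(\tau-\mathds{1}\{U_i\le\tau\})$ and $\mathbb E_{\boldsymbol A_n}[\gamma_i(\tau)]=0$ (shown in the proof of \cref{lemma:L1n_uniform_bounded}). By the triangle inequality,
\[
\sup_{|\tau-\tau'|\le\delta}\left|\frac1{\sqrt n}\sum_{i=1}^n[\gamma_i(\tau)-\gamma_i(\tau')]\right|\le\sum_{k=1}^{\Delta_n+1}\sqrt{\frac{|\mathcal C_k|}{n}}\,\sup_{|\tau-\tau'|\le\delta}\left|\frac1{\sqrt{|\mathcal C_k|}}\sum_{i\in\mathcal C_k}[\gamma_i(\tau)-\gamma_i(\tau')]\right|.
\]
Since $\Delta_n\le K_{\max}^2$ under \cref{ass:degree}, it suffices to bound each summand by a constant multiple of the target rate with $n$ replaced by $|\mathcal C_k|\asymp n/(\Delta_n+1)$. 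The factors $(\Delta_n+1)$ are then absorbed into the constant.

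Fix $k$. Conditional on $\boldsymbol A_n$, the pairs $\{(W_i,U_i):i\in\mathcal C_k\}$ are independent. They are not identically distributed, because $\mathbb P_{\boldsymbol A_n}(W_i=w)$ depends on $|N_i|$, but independence suffices. Consider the class
\[
\mathcal F_\delta=\left\{(\bar w,\bar u)\mapsto\mathds{1}\{\bar w=w\}\big[(\tau-\tau')-(\mathds{1}\{\bar u\le\tau\}-\mathds{1}\{\bar u\le\tau'\})\big]:|\tau-\tau'|\le\delta\right\}.
\]
This class is VC-type with constant envelope $F\equiv 2$, and its entropy is uniformly bounded by $V\log(A/\epsilon)$. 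This follows from Lemma 2.6.18 in \cite{vaart2023empirical}, exactly as for the class $\mathcal F_\delta$ in the proof of \cref{proposition:NP_quantile_process_convergence}. Its variance is controlled as follows. Since $U_i\mid\boldsymbol A_n$ is uniform by \cref{assumption:RCT,assumption:CE}, each $f\in\mathcal F_\delta$ satisfies
\[
\mathbb E_{\boldsymbol A_n}[f(W_i,U_i)^2]\le 2\delta^2+2|\tau-\tau'|\le 4\delta .
\]
Hence $\sigma^2\equiv\sup_{f}\max_i\mathbb E_{\boldsymbol A_n}f(W_i,U_i)^2\lesssim\delta$. The key tool is a local maximal inequality for independent, non-identically distributed summands with bounded envelope. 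We use either Theorem 2.1 of van der Vaart and Wellner (2011, ``A local maximal inequality under uniform entropy'') or Corollary 5.1 of \cite{chernozhukov2014gaussian}, both of which allow independent non-i.i.d.\ data. With $\sigma^2\asymp\delta$ and $\|F\|_\infty\lesssim1$, either result gives
\[
\mathbb E_{\boldsymbol A_n}\sup_{f\in\mathcal F_\delta}\left|\frac{1}{\sqrt{|\mathcal C_k|}}\sum_{i\in\mathcal C_k}f(W_i,U_i)\right|\lesssim\sigma\sqrt{V\log(A\|F\|/\sigma)}+\frac{V\|F\|_\infty\log(A\|F\|/\sigma)}{\sqrt{|\mathcal C_k|}}.
\]
Substituting $\sigma\asymp\sqrt\delta$ gives $\sqrt{\delta\log(e/\delta)}+\log(e/\delta)/\sqrt{|\mathcal C_k|}$. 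Here we use $\log(A/\sqrt\delta)\lesssim\log(e/\delta)$, which holds for $\delta<1/2$. Since the sum in the first display is centered in each term, no centering correction is needed.

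The main obstacle is justifying the local maximal inequality in the non-identically distributed setting. One must check that the bound uses the averaged variance $\sup_f |\mathcal C_k|^{-1}\sum_{i\in\mathcal C_k}\mathbb E f^2\le\sigma^2$; Chernozhukov et al.\ state their result for independent, not necessarily identically distributed, data with exactly this quantity. We also check measurability: $\mathcal F_\delta$ is pointwise measurable, since we can restrict to rational $\tau,\tau'$ using right-continuity. Summing over the $\Delta_n+1=O(1)$ color classes, and using $\sqrt{|\mathcal C_k|/n}\le1$ together with $1/\sqrt{|\mathcal C_k|}\lesssim\sqrt{(\Delta_n+1)/n}$ from \cref{eq:HS_thm}, yields the claimed bound.
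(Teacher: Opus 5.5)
Your proof is correct, but it is organized differently from the paper's. You split into the Hajnal--Szemer\'edi color classes and apply a packaged local maximal inequality (population variance $\sigma^2\lesssim\delta$, bounded envelope) directly to $\mathcal F_\delta$ on each class.

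The paper instead reuses the symmetrization and uniform-entropy bound from the proof of \cref{proposition:NP_quantile_process_convergence}. This gives $\mathbb E_{\boldsymbol A_n}\sup_{f\in\mathcal F_\delta}|n^{-1/2}\sum_i f(W_i,U_i)|\lesssim\mathbb E_{\boldsymbol A_n}J(\sqrt{\Delta_n+1}\,\sigma_n)$, where $\sigma_n$ is the \emph{empirical} $L_2$ radius of $\mathcal F_\delta$. It then drops the factor $\mathds 1_i(w)$, so that $f(W_i,U_i)^2$ is dominated by a function of $U_i$ alone. Next it bounds $\mathbb E_{\boldsymbol A_n}\sigma_n^2\lesssim\delta+\log(e/\delta)/n$ via \cref{lemma:local_interval_increment}, applied to the i.i.d.\ uniform $U_i$ in each class. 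It concludes with the concavity of $J$ and \cref{lem:entropy-integral-bound}.

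In effect, the paper re-derives the local maximal inequality by hand, so that the only packaged result it invokes concerns i.i.d.\ data. Your version is shorter and more modular. It also splits into independent blocks before any symmetrization. The paper's \eqref{eq:Markov} symmetrizes the full dependent sum first, which is harmless only because the next display splits by class anyway.

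The price of your route is that it needs the local maximal inequality for independent but non-identically distributed summands, and here your citation is inaccurate. Corollary 5.1 of \cite{chernozhukov2014gaussian} and Theorem 2.1 of van der Vaart and Wellner (2011) are both stated for i.i.d.\ samples, not for independent non-identically distributed data. The extension is routine: rerun the symmetrization proof with $P$ replaced by $|\mathcal C_k|^{-1}\sum_{i\in\mathcal C_k}P_i$.

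You can also avoid the extension entirely by conditioning additionally on $\boldsymbol D_n$. By \cref{assumption:RCT,assumption:CE,assumption:local dependency}, given $(\boldsymbol A_n,\boldsymbol D_n)$ the $U_i$ with $i\in\mathcal C_k$ are i.i.d.\ $\mathrm{Unif}(0,1)$, and the weights $\mathds 1_i(w)$ are fixed. Each block is then an i.i.d.\ sum over the deterministic set $\{i\in\mathcal C_k:W_i=w\}$, of size $m_k\le|\mathcal C_k|$. The i.i.d.\ inequality bounds its expected supremum by a constant times $\sqrt{m_k}\sqrt{\delta\log(e/\delta)}+\log(e/\delta)$. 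Dividing by $\sqrt n$, summing over the $\Delta_n+1=O(1)$ classes, and integrating out $\boldsymbol D_n$ gives the lemma.
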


\begin{proof}
Without loss of generality, we consider $\tau_1 < \tau_2$ such that $\tau_2 - \tau_1 < \delta$, since the case $\tau_2 < \tau_1$ follows immediately by symmetry.  Recall that
\begin{equation}\label{eq: f_delta_def}
f_{\tau_1,\tau_2}(\bar w,\bar u)=\mathds{1}\{\bar w=w\}\left[(\tau_2-\tau_1)-\mathds{1}\{\tau_1<\bar u\le \tau_2\}
\right],
\end{equation}
and
\[
\mathcal F_\delta
=
\left\{
f_{\tau_1,\tau_2}:
\tau_1<\tau_2<\tau_1+\delta,\ 
\tau_1,\tau_2\in\mathcal T
\right\},
\]
for any fixed $\delta >0$ defined in \cref{appendix:proof of proposition:NP_quantile_process_convergence}. By construction, $\gamma_i(\tau_2)-\gamma_i(\tau_1)=f_{\tau_1,\tau_2}(W_i,U_i)$. Then,
\[
\sup_{|\tau-\tau'|\le\delta}
\left|
\frac1{\sqrt n}
\sum_{i=1}^n
\left[\gamma_i(\tau)-\gamma_i(\tau')\right]
\right|
=
\sup_{f\in\mathcal F_\delta}
\left|
\frac1{\sqrt n}
\sum_{i=1}^n f(W_i,U_i)
\right|.
\]
\cref{eq:Markov,eq:integral_1} imply that there exist universal constants $A,K>0$ such that
\begin{equation}\label{eq: maximal_inequality_228}
\begin{aligned}
\mathbb{E}_{\boldsymbol{A}_n} \left[   \sup_{f\in \mathcal{F}_{\delta} }  \left|  \frac{1}{\sqrt{n}} \sum_{i=1}^n  f(W_i, U_i) \right|  \right] & \leq K  \mathbb{E}_{\boldsymbol{A}_n} \left[    \int_0^{ \sqrt{\Delta_n +1}  \sigma_n  } \sqrt{ 2 \log (A/\epsilon)  } \mathrm{d} \epsilon \right] \\
& = K\mathbb{E}_{\boldsymbol{A}_n} J\left(  \sqrt{\Delta_n +1}  \sigma_n \right),
\end{aligned}
\end{equation}
where $\sigma_n = \sup_{f \in \mathcal{F}_{\delta}} \sqrt{ n^{-1} \sum_{i=1}^n |f(W_i, U_i)|^2  }$, and $J\left(x\right)=\int_0^x \sqrt{2\log\left(A/\epsilon\right)}\mathrm{d}\epsilon$. Notice that
\begin{align*}
\sigma_{n}^2 & =  \sup_{f \in \mathcal{F}_{\delta} }  \frac{1}{n} \sum_{i=1}^n \left| f(W_i,U_i)  \right|^2  =\sup_{\tau_1<\tau_2<\tau_1+\delta} \frac{1}{n} \sum_{i=1}^n   \left| f_{\tau_1, \tau_2}(W_i,U_i)  \right|^2 \\ 
& \leq  \sup_{\tau_1<\tau_2<\tau_1+\delta}  \frac{1}{n} \sum_{i=1}^n    \left| (\tau_2 - \tau_1) -  \mathds{1}\{ \tau_1 < U_i \leq \tau_2 \} \right|^2 \\
& \leq \delta^2 +  \delta +  \sum_{k=1}^{\Delta_{n}+1} \frac{|\mathcal{C}_{k}|}{n} \sup_{ \tau_1 < \tau_2 < \tau_1 + \delta } \left|   \frac{1}{|\mathcal{C}_{k}|} \sum_{i\in\mathcal{C}_{k}} \mathds{1}\{ \tau_1 < U_i \leq \tau_2 \} - (\tau_2- \tau_1) \right| .
\end{align*}
Since $\{ U_i : i \in \mathcal{C}_k \}$ are i.i.d. conditional on $\boldsymbol{A}_n$, then applying \cref{lemma:local_interval_increment} implies that
\begin{equation}\label{eq: sigma_n_k}
\mathbb{E}_{\boldsymbol{A}_n}  \sigma_{n}^2   \leq  2 \delta + K\sum_{k=1}^{\Delta_{n}+1} \frac{|\mathcal{C}_{k}|}{n}\left(
\sqrt{\frac{\delta\log(e/\delta)}{|\mathcal{C}_k|}}
+
\frac{\log(e/\delta)}{|\mathcal{C}_k|}
\right)
 \leq  K\left(
\delta+ \frac{\log(e/\delta)}{n}
\right).
\end{equation}
Hence, by \cref{lem:entropy-integral-bound}, Jensen's
inequality and Cauchy-Schwarz inequality,
\[
\mathbb E_{\boldsymbol{A}_n} J\left(  \sqrt{\Delta_n +1}  \sigma_n \right)
\le
J\left( \sqrt{\Delta_n +1} \mathbb E_{\boldsymbol{A}_n}\sigma_{n}\right)
\le
J\left( 1 \wedge \sqrt{(\Delta_n +1)\mathbb E_{\boldsymbol{A}_n}\sigma_{n}^2}\right).
\]
Combining \eqref{eq: maximal_inequality_228},
\eqref{eq: sigma_n_k}, and \cref{lem:entropy-integral-bound}, we obtain
\begin{equation}\label{eq:one_color_gamma_bound}
\begin{aligned}
\mathbb{E}_{\boldsymbol{A}_n} \left[   \sup_{f\in \mathcal{F}_{\delta} }  \left|  \frac{1}{\sqrt{n}} \sum_{i=1}^n  f(W_i, U_i) \right|  \right] &\le
K J\left( \sqrt{(\Delta_n +1)\mathbb E_{\boldsymbol{A}_n}\sigma_{n}^2}\right) \\
& \lesssim\sqrt{\left(\delta+\log(e/\delta)/n\right)\log\left(\frac{e}{\delta+\log(e/\delta)/n}\right)}\\
& \leq_{(1)}\sqrt{\left(\delta+\log(e/\delta)/n\right)\log(e/\delta)
}  \\
&\leq \sqrt{\delta\log(e/\delta)}+ n^{-1/2}\log(e/\delta),
\end{aligned}
\end{equation}
where the inequality (1) holds since $\delta+ \log(e/\delta)/n
\ge \delta$. This proves the desired bound.
\end{proof}

\begin{lemma}\label{lem:entropy-integral-bound}
Let \(A\ge e\) be a constant and define $J\left(x\right)=\int_0^x \sqrt{2\log\left(A/\epsilon\right)}\mathrm{d}\epsilon$ for $x \in (0,1]$. Then \(J\) is increasing and concave on \((0,1]\). Moreover, there exists a
universal constant \(K>0\) such that, for all \(0<x\le 1\),
\[
J\left(x\right)\le Kx\sqrt{\log(A/x)}.
\]
\end{lemma}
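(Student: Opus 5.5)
Since $\sqrt{2\log(A/\epsilon)}$ is explicit, I will argue directly from the integrand. Set $g(\epsilon)=\sqrt{2\log(A/\epsilon)}$ for $\epsilon\in(0,1]$. Because $A\ge e$, we have $\log(A/\epsilon)\ge 1>0$ on $(0,1]$, so $g$ is positive, finite and continuous there. Near zero, $g(\epsilon)$ grows only like $\sqrt{\log(1/\epsilon)}$, which is integrable. Hence $J$ is well defined and $J'(x)=g(x)$ for $x\in(0,1]$.

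\textbf{Monotonicity and concavity.} From $J'(x)=g(x)>0$, $J$ is strictly increasing. The integrand $g$ is strictly decreasing in $\epsilon$, since $\log(A/\epsilon)$ decreases and $\sqrt{\cdot}$ is increasing. So $J'$ is decreasing on $(0,1]$, and $J$ is concave there. I would note that concavity follows from the derivative being monotone, which avoids second derivatives at the endpoint.

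\textbf{The bound.} I plan to integrate by parts, or equivalently compare derivatives. Let $H(x)=x\,g(x)=x\sqrt{2\log(A/x)}$. Then
\[
H'(x)=g(x)-\frac{1}{g(x)}\quad\text{since}\quad \frac{d}{dx}\sqrt{2\log(A/x)}=-\frac{1}{x\sqrt{2\log(A/x)}} .
\]
On $(0,1]$ we have $g(x)^2=2\log(A/x)\ge 2$. This gives $1/g\le g/2$, so $H'(x)\ge g(x)/2$. Also $H(x)\to0$ as $x\downarrow0$, because $x\sqrt{\log(1/x)}\to0$. Integrating from $0$ to $x$ then yields
\[
J(x)=\int_0^x g\le 2\int_0^x H'=2H(x)=2\sqrt2\,x\sqrt{\log(A/x)} .
\]
This is the claimed bound with $K=2\sqrt2$, which is universal and does not even depend on $A$.

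\textbf{Main point of care.} The only delicate step is justifying the fundamental theorem of calculus near $0$, where $g$ is unbounded. I would handle it by integrating over $[\eta,x]$ and letting $\eta\downarrow0$. This uses $H(\eta)\to0$ together with the monotone convergence of $\int_\eta^x g$.
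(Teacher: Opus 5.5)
Your proof is correct, but the main bound goes by a different route from the paper's. The paper substitutes $\epsilon=xu$ to write $J(x)=x\int_0^1\sqrt{2\log(A/x)+2\log(1/u)}\,\mathrm{d}u$. It then splits the square root using $\sqrt{a+b}\le\sqrt a+\sqrt b$ and evaluates $\int_0^1\sqrt{2\log(1/u)}\,\mathrm{d}u=\sqrt{\pi/2}$. This gives $J(x)\le x\sqrt{2\log(A/x)}+x\sqrt{\pi/2}$, and the additive term is absorbed using $\log(A/x)\ge 1$, so one may take $K=\sqrt2+\sqrt{\pi/2}$.

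You instead compare derivatives with $H(x)=x\sqrt{2\log(A/x)}$, via $H'=g-1/g\ge g/2$; this is exactly where $A\ge e$ enters. You obtain $K=2\sqrt2$ without needing any special integral, and your truncation to $[\eta,x]$ handles the singularity at $0$ cleanly.

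The two arguments are closely related. Your computation is the derivative form of the integration-by-parts identity $J(x)=x\,g(x)+\int_0^x g(\epsilon)^{-1}\,\mathrm{d}\epsilon$. Bounding $1/g\le 1/\sqrt2$ directly there reproduces the paper's additive form, namely $J(x)\le x\,g(x)+x/\sqrt2$, with a slightly smaller constant ($3/\sqrt2$).

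For monotonicity and concavity, the paper computes $J''(x)=-1/\bigl(x\sqrt{2\log(A/x)}\bigr)<0$, whereas you use that $J'=g$ is decreasing. These are equivalent; your version avoids second derivatives.
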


\begin{proof}[Proof of \cref{lem:entropy-integral-bound}]
By the change of variables and $\int_0^1\sqrt{2\log(1/u)}\mathrm{d}u
=
\sqrt{\pi/2}$, we have
\[
\begin{aligned}
J\left(x\right)
&=
x\int_0^1
\sqrt{2\log(A/x)+2\log(1/u)}\mathrm{d} u \\
& \le
x\sqrt{2\log(A/x)}
+
x\int_0^1\sqrt{2\log(1/u)}\mathrm{d}u \\
& =
x\sqrt{2\log(A/x)}
+
x\sqrt{\pi/2}.
\end{aligned}
\]
Because \(A\ge e\) and \(x\le 1\), we have \(\log(A/x)\ge 1\). Hence, $J\left(x\right)\le K x\sqrt{\log(A/x)} $
for a universal constant \(K>0\). Next, by simple algebra, we have for \(0<x\le 1\)
\[
J'(x)=\sqrt{2\log(A/x)} \quad \text{and} \quad
J''(x)
=
-\frac{1}{x\sqrt{2\log(A/x)}}<0.
\]
 Therefore \(J\) is increasing and concave on \((0,1]\).
 \end{proof}

\begin{lemma}\label{lemma: ULLN_quantile}
Suppose \cref{ass:degree,assumption: density,assumption:local dependency}  hold. Let $\epsilon_n\geq0$ be any possibly random sequence satisfying $\epsilon_n = O_{P}(n^{-1/2})$.  Then, conditional on $\{ \boldsymbol{A}_n \}_{n\geq 1}$, we have that for $w^\dagger \in \{w,w^\prime \}$, 
\[
\sup_{\tau \in \mathcal{T}} \frac{1}{n} \sum_{i=1}^n \mathds{1}_i(w^\dagger) \mathds{1}\{ |Y_i - q_{w^\dagger}(\tau)| \leq \epsilon_n \} = O_P\big(n^{-1/2}\big).
\]
\end{lemma}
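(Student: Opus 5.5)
The plan is to reduce the indicator $\mathds{1}\{|Y_i - q_{w^\dagger}(\tau)|\le \epsilon_n\}$ to an interval event for the latent rank $U_i$. The resulting interval-count process is then controlled uniformly through the color-class decomposition of \cref{sec:preliminary}. We treat $w^\dagger=w$; the case $w'$ is identical.

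First, fix $\varepsilon>0$ and choose $M$ such that $\mathbb P_{\boldsymbol A_n}(\epsilon_n\le M n^{-1/2})\ge 1-\varepsilon$ for large $n$. On this event, by monotonicity, the sum is bounded by
\[
S_n \equiv \sup_{\tau\in\mathcal T}\frac1n\sum_{i=1}^n \mathds{1}_i(w)\,\mathds{1}\bigl\{q_w(\tau)-Mn^{-1/2} < Y_i \le q_w(\tau)+Mn^{-1/2}\bigr\} + \text{(boundary term)},
\]
and we may take the closed version directly. On $\{W_i=w\}$ we have $Y_i=q(w,U_i)$ with $q(w,\cdot)=F_w^{-1}$ continuous and strictly increasing, by \cref{assumption:mono}. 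Hence the event equals
\[
\bigl\{F_w(q_w(\tau)-Mn^{-1/2})\le U_i\le F_w(q_w(\tau)+Mn^{-1/2})\bigr\}.
\]
By \cref{assumption: density}, for large $n$ this lies in an interval $\{a_\tau\le U_i\le b_\tau\}$ with $b_\tau-a_\tau\le 2\bar f M n^{-1/2}$, where $\bar f$ bounds $f_w$ on the neighbourhood, and $a_\tau,b_\tau\in(0,1)$.

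Second, it therefore suffices to bound
\[
\sup_{0\le a<b\le a+h_n}\frac1n\sum_{i=1}^n \mathds{1}\{a\le U_i\le b\}, \qquad h_n=2\bar f M n^{-1/2},
\]
after dropping $\mathds{1}_i(w)\le1$. Since $U_i$ is uniform by \cref{assumption:nor}, each mean is at most $h_n$. For the centered part, write it as the increment of the empirical process $F_n(\tau)=n^{-1}\sum_i\mathds{1}\{U_i\le\tau\}$ (up to atoms, which have probability zero). Then
\[
\frac1n\sum_i\mathds{1}\{a\le U_i\le b\}\le (b-a)+2\sup_{\tau}|F_n(\tau)-\tau|,
\]
where we extend \cref{lemma: ULLN_U} to $\tau\in[0,1]$; its DKW-per-color-class proof applies verbatim. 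This gives $O(h_n)+O_P(n^{-1/2})=O_P(n^{-1/2})$. Alternatively, \cref{lemma:local_modulus_gamma} with $\delta=h_n$ yields the sharper $O_P(n^{-1/2}\sqrt{h_n\log(e/h_n)}+n^{-1}\log n)$, but the crude DKW route already suffices.

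Third, combine the two steps. With probability at least $1-\varepsilon-o(1)$ the target is at most $C(M)n^{-1/2}+O_P(n^{-1/2})$. Letting $\varepsilon\to0$ gives the $O_P(n^{-1/2})$ claim, conditional on $\{\boldsymbol A_n\}$.

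The main obstacle I expect is the first step. The map from $Y_i$-intervals to $U_i$-intervals must be uniform in $\tau$, and the shifted points $q_w(\tau)\pm Mn^{-1/2}$ must stay inside the neighbourhood where $f_w$ is bounded. This holds for large $n$ because $\{q_w(\tau):\tau\in\mathcal T\}$ is compact and $\mathcal T$ is a compact subset of $(0,1)$. The random $\epsilon_n$ itself is harmless once localized to the deterministic bound $Mn^{-1/2}$.
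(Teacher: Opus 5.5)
Your proposal is correct and follows essentially the same route as the paper: localize the random $\epsilon_n$ to a deterministic $Mn^{-1/2}$ using monotonicity of $r\mapsto \sup_{\tau}n^{-1}\sum_{i}\mathds{1}_i(w)\mathds{1}\{|Y_i-q_w(\tau)|\le r\}$, translate the $Y$-window into a $U$-interval of length $O(n^{-1/2})$ via $F_w$ and the bounded density, and bound the interval count by its length plus $2\sup_{\tau}|F_n(\tau)-\tau|$ from \cref{lemma: ULLN_U}. You also handle correctly two details the paper glosses over. The endpoints $F_w(q_w(\tau)\pm Mn^{-1/2})$ may leave $\mathcal T$, so \cref{lemma: ULLN_U} should be applied on $[0,1]$, which its color-class DKW proof permits. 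And the points $q_w(\tau)\pm Mn^{-1/2}$ must stay in the neighbourhood where $f_w$ is bounded.
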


\begin{proof}

We define a random function $\mathbb{S}_{n}: \mathbb{R}_+ \rightarrow \mathbb{R}_+$ as
\[
\mathbb{S}_{n}(r)\equiv\sup_{\tau\in\mathcal T}\frac{1}{n}\sum_{i=1}^n\mathds{1}_i(w^\dagger)\mathds{1}\left\{\left|Y_i-q_{w^\dagger}(\tau)\right|\leq r\right\}.
\]
\noindent \underline{\textbf{Step 1.}}  
 We first assume that $\left(\epsilon_n \right)_{n=1}^\infty$ is a deterministic sequence. Under Assumptions \ref{assumption:RCT}, \ref{assumption:CE} and \ref{assumption: density}, it follows that there is a constant $c_{\mathrm{Lip}} > 0$ not depending on $\tau$ and $\epsilon$ such that
\[
\begin{aligned}
\mathbb{P}\left[ |Y_i - q_{w^\dagger}(\tau)| \leq \epsilon | W_i =w^\dagger, \boldsymbol{A}_n \right] & = F_{w^\dagger}(  q_{w^\dagger}(\tau) + \varepsilon ) - F_{w^\dagger} \left(  q_{w^\dagger}(\tau)- \epsilon \right ) \\
& \leq  c_{\mathrm{Lip}} \epsilon.
\end{aligned}
\]
For each $\tau \in \mathcal{T}$, define $a^{\pm}_{n}(\tau) = F_{w^\dagger}\left( q_{w^\dagger}(\tau) \pm \epsilon_n \right)$, where $F_{w^\dagger}$ is the conditional cdf of $Y_i | W_i =w^\dagger$. Since $q_{w^\dagger}(\tau) = F_{w^\dagger}^{-1}(\tau)$, then conditional on $W_i = w^\dagger$,
\[
\left\{  \left|Y_i - q_{w^\dagger}(\tau) \right| \leq \epsilon_n  \right\} = \{  a_n^-(\tau) \leq U_i \leq a_n^+(\tau)  \}
\]
By the boundedness of $f_w$ given by \cref{assumption: density}, for sufficiently large $n$, it follows that
\[
\sup_{\tau \in \mathcal{T}}\left| a_n^+(\tau) - a_n^-(\tau) \right| \leq c_{\mathrm{Lip}} \epsilon_n.
\]
Therefore, applying \cref{lemma: ULLN_U} gives that
\[
\begin{aligned}
\mathbb{S}_{n}(\epsilon_n) = &\sup_{\tau \in \mathcal{T}} \frac{1}{n} \sum_{i=1}^n \mathds{1}\left\{ W_i =w^\dagger, |Y_i - q_{w^\dagger}(\tau)|    \leq \epsilon_n \right \}
 \leq \sup_{|b - a| \leq \epsilon_n } \frac{1}{n} \sum_{i=1}^n \mathds{1}\{ a \leq  U_i \leq b \}   \\
 \leq    &  \sup_{|b - a| \leq \epsilon_n }  \left[ \left|\frac{1}{n} \sum_{i=1}^n  \mathds{1}\{  U_i \leq a \} - a \right|   +   \left|\frac{1}{n} \sum_{i=1}^n  \mathds{1}\{  U_i \leq b \} - b \right|   \right] + \epsilon_n 
 =    O_P \big(n^{-1/2} \big) + \epsilon_n .
\end{aligned}
\]
Since $\epsilon_n = O(n^{-1/2})$, we have $\mathbb{S}_{n}(\epsilon_n)= O_P(n^{-1/2})$.

\noindent \underline{\textbf{Step 2.}}   We next allow the sequence $\left(\epsilon_n\right)_{n=1}^\infty$ to be random with $\epsilon_n=O_P(n^{-1/2})$. Then, for any $\eta > 0$ sufficiently small, there exists a constant \(c_\eta<\infty\) such that, for all sufficiently large \(n\),
\[
\mathbb P\left[\epsilon_n>c_\eta /\sqrt{n}\right]\leq\frac{\eta}{2}.
\]
Applying the  result shown in Step 1, there exists a constant \(M_\eta<\infty\) such that, for all sufficiently large \(n\),
\[
\mathbb P\left[\sqrt n\,\mathbb{S}_{n}\left(c_\eta/\sqrt{n}  \right)>M_\eta\right]\leq\frac{\eta}{2}.
\]
Because \(r\mapsto \mathbb{S}_{n}(r)\) is non-decreasing, we have
$\mathbb{S}_{n}(\epsilon_n)\leq \mathbb{S}_{n}\left(c_\eta /\sqrt{n} \right)$ on the event $\left\{\epsilon_n\leq c_\eta /\sqrt{n}\right\}$.
It follows that for all sufficiently large \(n\):
\[
\begin{aligned}
\mathbb P\left[\sqrt n\,\mathbb{S}_{n}(\epsilon_n)>M_\eta\right]
&\leq\mathbb P\left[\epsilon_n>c_\eta /\sqrt{n}\right] +\mathbb P\left[\sqrt n\,\mathbb{S}_{n}(c_\eta/\sqrt{n})>M_\eta\right]\leq\eta
\end{aligned}
\]
Therefore, we have $\mathbb{S}_{n}(\epsilon_n)=O_P(n^{-1/2})$,
which proves the result.
\end{proof}

\begin{lemma}\label{lemma:q_n_Nonasymptotics}
Suppose \cref{ass:degree,assumption: density,assumption:local dependency} hold. Then, conditional on  $\{ \boldsymbol{A}_n \}_{n\geq 1}$, for each $w^\dagger \in \{ w, w^\prime \}$,
\[
\sup_{\tau \in \mathcal{T}} \left| \widehat{q}_{w^\dagger}(\tau) - q_{w^\dagger}(\tau) \right| = O_P\big( n^{-1/2} \big).
\]
\end{lemma}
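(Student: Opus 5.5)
The target is Lemma~\ref{lemma:q_n_Nonasymptotics}, the uniform $\sqrt n$-rate for $\widehat q_{w^\dagger}$. I will prove it for $w^\dagger=w$ by combining two earlier results: the uniform localization result, Lemma~\ref{lemma:uniform_argmin_localization}, and the uniform stochastic boundedness of the score, Lemma~\ref{lemma:L1n_uniform_bounded}. The argument for $w'$ is identical.

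By \eqref{eq: u_n_definition}, $\widehat u_n(\tau)=\sqrt n\{\widehat q_w(\tau)-q_w(\tau)\}$ is the minimizer of the convex function $L_n(\cdot,\tau)$. Lemma~\ref{lemma:uniform_argmin_localization} already gives $\sup_{\tau\in\mathcal T}|\widehat u_n(\tau)|=O_P(1)$. Dividing by $\sqrt n$ then yields $\sup_{\tau}|\widehat q_w(\tau)-q_w(\tau)|=O_P(n^{-1/2})$, so the lemma is essentially a restatement.

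To make sure the argument is not circular, I will check what Lemma~\ref{lemma:uniform_argmin_localization} relies on. It uses two inputs.
\begin{enumerate}
\item The bounds \eqref{eq: bounds_for_zeta_n} together with Lemma~\ref{lemma:L1n_uniform_bounded}. These give $\sup_\tau|u_n(\tau)|=\sup_\tau|\zeta_n(w,\tau)^{-1}L_{1,n}(\tau)|=O_P(1)$.
\item The compact-uniform quadratic approximation \eqref{eq:compact-uniform approximation}. This comes from Steps 1--3 of the proof of Lemma~\ref{lemma: Bahadur Representation-rate}, namely the Taylor expansion \eqref{eq:leading_rate} and the coloring/VC maximal inequality \eqref{eq:st_2_convergence} with Lemma~\ref{lemma: Variance_Color_K}.
\end{enumerate}
None of these steps uses Lemma~\ref{lemma:q_n_Nonasymptotics}, which is invoked only later, in Lemma~\ref{lemma: score_LLN}. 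The chain of results is therefore not circular.

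The remaining step is the convexity argument, which I will spell out. Fix $\epsilon>0$. Choose $M$ such that $\sup_\tau|u_n(\tau)|\le M/2$ with probability at least $1-\epsilon$. With probability tending to one, the sup-error of the quadratic approximation on $[-M,M]$ is below $c_\zeta M^2/32$. On the intersection of these two events, $L_n(\pm M,\tau)>L_n(u_n(\tau),\tau)$ holds for every $\tau$. Convexity of $L_n(\cdot,\tau)$ then forces every minimizer into $(-M,M)$. This bound holds simultaneously in $\tau$ because both events are uniform in $\tau$.

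The main obstacle is precisely this uniformity in $\tau$. A pointwise rate is classical, but the uniform statement needs the quadratic approximation error to be controlled over $\mathcal T\times[-M,M]$ under network dependence. That control is exactly what the Hajnal--Szemer\'edi coloring and VC maximal inequality in Step 2 supply, at rate $(\Delta_n+1)n^{-1/4}\sqrt{\log n}=o(1)$ using Assumption~\ref{ass:degree}. Once that is in place, the rate $O_P(n^{-1/2})$ follows.
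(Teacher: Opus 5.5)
Your proof is correct, and it reaches the result by a slightly shorter path than the paper's.

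The paper argues through the full Bahadur representation. By \cref{lemma: Bahadur Representation-rate}, $\sup_{\tau}|\widehat u_n(\tau)-\zeta_n(w,\tau)^{-1}L_{1,n}(\tau)|=O_P(r_{\mathrm B,n})$. Then \eqref{eq: bounds_for_zeta_n} and \cref{lemma:L1n_uniform_bounded} give $\sup_{\tau}|\zeta_n(w,\tau)^{-1}L_{1,n}(\tau)|=O_P(1)$, and the triangle inequality yields $\sup_\tau|\widehat u_n(\tau)|=O_P(1)$.

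The proof of \cref{lemma: Bahadur Representation-rate} itself calls \cref{lemma:uniform_argmin_localization} to place $\widehat u_n$ and $u_n$ in a fixed compact set. So you are stopping one step earlier in the same chain. You only need the $o_P(1)$ quadratic approximation on $\mathcal T\times[-M,M]$ and the convexity argument. You do not need the quantitative comparison of minimizers that produces the rate $r_{\mathrm B,n}$.

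The paper's version has the small advantage of reading the rate off a stated result about the estimator, rather than off an intermediate lemma about the argmin. Logically, though, both rest on the same ingredients: \cref{lemma:L1n_uniform_bounded}, the Taylor expansion \eqref{eq:leading_rate}, and the coloring bound \eqref{eq:st_2_convergence}.

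Your non-circularity check is accurate: \cref{lemma:q_n_Nonasymptotics} is used only later, in \cref{lemma: score_LLN}. Re-deriving the convexity step is redundant, since it is exactly the content of \cref{lemma:uniform_argmin_localization}, but it does no harm.
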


\begin{proof}[Proof of \cref{lemma:q_n_Nonasymptotics}]

The Bahadur representation in \cref{lemma: Bahadur Representation-rate} implies that it suffices to show
\[
\sup_{\tau \in \mathcal{T}}\left| \frac{1}{\sqrt{n}} \sum_{i=1}^n \widetilde{\phi}_{i}\big( w^\dagger, \tau \big) \right|  = O_P(1).
\]   
Because the denominator $\zeta_n(w,\tau)$ is uniformly bounded away from zero by \cref{eq: bounds_for_zeta_n}, it remains to show
\begin{equation}\label{eq: w_dagger_O_P_1}
\sup_{\tau \in \mathcal{T}}\left| \frac{1}{\sqrt{n}} \sum_{i=1}^n \mathds{1}_i(w^\dagger) \left(\tau - \mathds{1}\{ Y_i \leq q_{w^\dagger}(\tau) \} \right) \right|  = O_P(1).    
\end{equation}
The conclusion in \cref{eq: w_dagger_O_P_1} is precisely the bound established in \cref{lemma:L1n_uniform_bounded}. This completes the proof.
\end{proof}

\begin{lemma}\label{lemma: pi_n_root_convergence}
Suppose \cref{ass:degree} holds. Then, conditional on  $\{ \boldsymbol{A}_n \}_{n\geq 1}$, for each $w^\dagger \in \{ w, w^\prime \}$,
\[
\left| \widehat{\pi} (w^\dagger) - \pi_{n}(w^\dagger) \right| = O_P\big( n^{-1/2} \big).
\]
\end{lemma}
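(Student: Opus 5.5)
We need to show $\widehat\pi(w^\dagger)-\pi_n(w^\dagger)=O_P(n^{-1/2})$ conditional on $\boldsymbol A_n$. The natural route is a second-moment bound followed by Chebyshev's inequality, using the dependency structure of the effective treatments.

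\textbf{Step 1 (centered sum).} By definition,
\[
\widehat\pi(w^\dagger)-\pi_n(w^\dagger)=\frac1n\sum_{i=1}^n\Bigl(\mathds 1_i(w^\dagger)-\mathbb P_{\boldsymbol A_n}[W_i=w^\dagger]\Bigr),
\]
which is a conditionally centered average of indicators bounded by one.

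\textbf{Step 2 (dependence structure).} Conditional on $\boldsymbol A_n$, the degree $|N_i|$ is fixed. Hence $\mathds 1_i(w^\dagger)$ is a function of $(D_i,(D_j)_{j:A_{ij}=1})$ only. The $D$'s are i.i.d.\ and independent of $\boldsymbol A_n$ by \cref{assumption:RCT}. Therefore $\mathds 1_i(w^\dagger)$ and $\mathds 1_j(w^\dagger)$ are conditionally independent whenever the closed neighborhoods of $i$ and $j$ are disjoint, i.e.\ whenever $\ell_{\boldsymbol A_n}(i,j)>2$. In particular, $\boldsymbol A_n^{(2)}$ is a dependency graph for $\{\mathds 1_i(w^\dagger)\}_{i=1}^n$, and its maximal degree satisfies $\Delta_n\le K_{\max}^2$ by \cref{ass:degree}.

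\textbf{Step 3 (variance bound).} Expanding the variance gives
\[
\operatorname{Var}_{\boldsymbol A_n}\bigl(\widehat\pi(w^\dagger)\bigr)=\frac1{n^2}\sum_{i}\sum_{j:\ell_{\boldsymbol A_n}(i,j)\le2}\operatorname{Cov}_{\boldsymbol A_n}\bigl(\mathds 1_i(w^\dagger),\mathds 1_j(w^\dagger)\bigr)\le\frac{\Delta_n+1}{n}.
\]
Covariances for pairs with $\ell_{\boldsymbol A_n}(i,j)>2$ vanish by Step 2, and each remaining covariance has absolute value at most one. Chebyshev's inequality then gives the $O_P(n^{-1/2})$ rate, because $\Delta_n$ is uniformly bounded.

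\textbf{Alternative via coloring.} One could instead split the sum over the Hajnal--Szemer\'edi color classes in \cref{eq:HS_thm}. Within each class the summands are independent, so each class sum has standard deviation at most $\sqrt{|\mathcal C_k|}$. Summing the at most $K_{\max}^2+1$ classes gives the same bound. This mirrors the argument in \cref{lemma: ULLN_U}.

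\textbf{Main obstacle.} There is no real obstacle. The only point requiring care is Step 2: the dependency graph must be justified through $\boldsymbol D_n$ alone, with no appeal to the $U_i$. This matters because the lemma assumes only \cref{ass:degree}.
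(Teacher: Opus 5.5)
Your proposal is correct and follows the paper's proof essentially step for step. The paper also bounds $\operatorname{Var}_{\boldsymbol A_n}\bigl(\widehat\pi(w^\dagger)\bigr)$ by $(\Delta_n+1)/n$, using that $\boldsymbol A_n^{(2)}$ is a dependency graph for the $W_i$, and concludes by Chebyshev's inequality since $\Delta_n$ is uniformly bounded under \cref{ass:degree}. Your justification of the dependency structure through $\boldsymbol D_n$ alone (closed neighborhoods are disjoint when $\ell_{\boldsymbol A_n}(i,j)>2$) is slightly more self-contained than the paper's, which obtains the dependency graph from assumptions that also involve the latent ranks.
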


\begin{proof}[Proof of \cref{lemma: pi_n_root_convergence}]
Recall that $\boldsymbol{A}_n^{(2)}$ is a dependency graph of $\{W_i\}_{i=1}^n$ with maximum degree $\Delta_n$ that is defined in \cref{sec:preliminary}. Since $\widehat{\pi} (w^\dagger) = n^{-1}\sum_{i=1}^n \mathds{1}_i(w^\dagger)$, then
\[
\begin{aligned}
\mathrm{Var}_{\boldsymbol{A}_n}\left( \widehat{\pi} (w^\dagger)  \right) &  =  \frac{1}{n^2}\sum_{i=1}^n \sum_{\ell_{\boldsymbol{A}_n}(i,j) \leq 2 } \mathrm{Cov}_{\boldsymbol{A}_n}\left( \mathds{1}_i(w^\dagger) ,  \mathds{1}_j(w^\dagger)  \right)  \leq \frac{\Delta_n + 1}{n}.
\end{aligned}
\]
By Chebyshev's Inequality, for any $\epsilon >0$, 
\[
\mathbb{P}_{\boldsymbol{A}_n}\left[  \left| \widehat{\pi}(w^\dagger) - \pi_{n}(w^\dagger) \right| > \epsilon \sqrt{ (\Delta_n +1) /n } \right] \lesssim \epsilon^{-2}.
\]
Because $\Delta_n$ is uniformly bounded under \cref{ass:degree}, it follows that
\[
\left| \widehat{\pi}(w^\dagger) - \pi_{n}(w^\dagger) \right| = O_P\left( \sqrt{ \frac{\Delta_n +1}{n} } \right) = O_P\big(n^{-1/2}\big).
\]
\end{proof}

\begin{lemma}
\label{lemma:local_interval_increment}
Let $U_i \overset{\mathrm{i.i.d.}}{\sim} \mathrm{Unif}(0,1)$.
Then, there is a universal constant $K >0$ such that for any \(0<\delta<1/2\) and $k=1,2,\dots,\Delta_{n}+1$,
\[
\mathbb{E}\left[\sup_{ \tau_2 - \tau_1 < \delta }\left|\frac{1}{|\mathcal C_k|}\sum_{i\in\mathcal C_k}\mathds 1 \left\{ \tau_1 <U_i\le\tau_2 \right\}- (\tau_1- \tau_2)\right|\right]\le K\left[\sqrt{\frac{\delta\log(e/\delta)}{|\mathcal C_k|}}+\frac{\log(e/\delta)}{|\mathcal C_k|}\right].
\]

\end{lemma}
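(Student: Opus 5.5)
We read the centering term as $\tau_2-\tau_1$, the probability of $(\tau_1,\tau_2]$ (the displayed $\tau_1-\tau_2$ is a sign typo). Write $m=|\mathcal C_k|$. The plan is to rescale by $\sqrt m$ and apply a local maximal inequality for a uniformly bounded VC class with a small variance proxy. With
\[
\mathbb G_m f=m^{-1/2}\sum_{i\in\mathcal C_k}\bigl(f(U_i)-\mathbb E f(U_i)\bigr),
\]
the target quantity equals $m^{-1/2}\,\mathbb E\|\mathbb G_m\|_{\mathcal I_\delta}$. Here
\[
\mathcal I_\delta=\bigl\{u\mapsto \mathds 1\{\tau_1<u\le\tau_2\}:\ 0\le\tau_2-\tau_1<\delta\bigr\},
\]
and the restriction $\tau_1,\tau_2\in\mathcal T$ only shrinks the class, so it is harmless. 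The claim is therefore equivalent to
\[
\mathbb E\|\mathbb G_m\|_{\mathcal I_\delta}\lesssim \sqrt{\delta\log(e/\delta)}+m^{-1/2}\log(e/\delta).
\]

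\textbf{Step 1 (class properties).} $\mathcal I_\delta$ is a subclass of the indicators of half-open intervals. This class is VC with index at most $3$, as already used for $\mathcal F_\delta$ via Example 4.17 in \cite{wainwright2019high}. Hence $\mathcal I_\delta$ is VC-type with envelope $F\equiv 1$ and uniform covering numbers $\sup_Q N(\epsilon,\mathcal I_\delta,L_2(Q))\le (A/\epsilon)^{V}$ for universal $A,V$. Because $U_i\sim\mathrm{Unif}(0,1)$, every $f\in\mathcal I_\delta$ satisfies $\operatorname{Var}f(U_i)\le \mathbb E f(U_i)^2=\tau_2-\tau_1\le\delta$. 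We take the variance proxy $\sigma^2=\delta$, which satisfies $\sigma\le \|F\|_{L_2}=1$ since $\delta<1/2$.

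\textbf{Step 2 (local maximal inequality).} We apply the maximal inequality for VC-type classes of \cite{chernozhukov2014gaussian} (Corollary 5.1 there), which requires only i.i.d.\ data, a bounded envelope and a variance bound $\sigma^2$:
\[
\mathbb E\|\mathbb G_m\|_{\mathcal I_\delta}\lesssim \sqrt{V\sigma^2\log(A\|F\|/\sigma)}+\frac{V\|F\|_\infty\log(A\|F\|/\sigma)}{\sqrt m}.
\]
Substituting $\sigma^2=\delta$ and $\|F\|=1$ gives $\log(A/\sqrt\delta)\lesssim\log(e/\delta)$ uniformly in $\delta\in(0,1/2)$. This yields the displayed bound, and dividing by $\sqrt m$ gives the lemma with a universal $K$. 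The constant is independent of $k$ and $n$ because the distribution of the $U_i$ is the same $\mathrm{Unif}(0,1)$ in every color class.

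\textbf{Main obstacle.} The key point is to obtain the factor $\log(e/\delta)$ rather than $\log m$ together with the sharp $\sqrt\delta$ scaling. A naive symmetrization bound with the empirical variance, as in the proof of Proposition~3.1, only produces $\mathbb E J(\sigma_n)$, and controlling $\sigma_n$ there reintroduces the supremum we are trying to bound. The cited inequality resolves this internally through a Talagrand/Bousquet-type fixed-point argument.

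\textbf{Fallback if we want to avoid citing it.} The proof can be made self-contained by a bracketing argument.
\begin{enumerate}[label=(\roman*)]
\item Place a grid $\{j\delta/2\}$ on $[0,1]$. Every interval in $\mathcal I_\delta$ is sandwiched between grid intervals of length at most $2\delta$, with endpoint errors lying in cells of length $\delta/2$.
\item There are $O(1/\delta)$ grid intervals. Bernstein's inequality with variance at most $2\delta$ and a union bound over them give the maximum deviation $\sqrt{\delta\log(e/\delta)/m}+\log(e/\delta)/m$.
\item The within-cell fluctuations are handled in the same way: the counts of $U_i$ in the $O(1/\delta)$ cells are bounded by $\delta+O\bigl(\sqrt{\delta\log(e/\delta)/m}+\log(e/\delta)/m\bigr)$, again by Bernstein and a union bound.
\end{enumerate}
Combining the two pieces gives the same rate.
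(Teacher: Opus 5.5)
Your main argument is correct and is essentially the paper's proof: the paper also treats $\{u\mapsto \mathds 1\{\tau_1<u\le\tau_2\}:\tau_2-\tau_1<\delta\}$ as a uniformly bounded VC-type class with $\sup_g\mathbb E\, g(U_i)^2\le\delta$ and applies Corollary 5.1 of \cite{chernozhukov2014gaussian}, and you are right that the centering must read $\tau_2-\tau_1$. Your optional bracketing fallback would not give the stated rate as sketched, for two reasons: with grid spacing $\delta/2$, the length mismatch between $(\tau_1,\tau_2]$ and its bracketing grid intervals is a deterministic error of order $\delta$, which exceeds $\sqrt{\delta\log(e/\delta)/|\mathcal C_k|}$ whenever $|\mathcal C_k|\delta\gg\log(e/\delta)$ (bounding the cell counts does not remove it), and refining to a single grid at the target accuracy makes the union bound produce $\log|\mathcal C_k|$ rather than $\log(e/\delta)$, so you genuinely need multi-scale chaining or the cited inequality.
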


\begin{proof}[Proof of \cref{lemma:local_interval_increment}]
Define a function class $\mathcal G_\delta$ as
\[
\mathcal G_\delta
=
\left\{u \mapsto \mathds \mathds{1}\{ \tau_1 <u \leq \tau_2 \}: 0\le \tau_1<\tau_2\le 1,\ \tau_2-\tau_1<\delta  \right\}.
\]
The class \(\mathcal G_\delta\) is a VC-type class  that is uniformly bounded. Moreover, $\sup_{g\in\mathcal G_\delta}\mathbb E \left|g(U_i)\right|^2 
\le \delta $. The desired result follows from Corollary 5.1 in \cite{chernozhukov2014gaussian}.
\end{proof}

\bibliographystyle{apalike}
\bibliography{reference.bib}

\end{document}